\documentclass[letterpaper,twocolumn,10pt]{article}
\pagenumbering{gobble}
\usepackage{usenix2019_v3,epsfig,endnotes}
\usepackage{bbm}
\usepackage{epsfig,endnotes}
\usepackage{multirow}
\usepackage{subfig}
\usepackage{graphicx}
\usepackage{grffile}

\newcounter{subcopyrightbox@save}
\usepackage[font=bf]{caption}
\usepackage{color, url}
\usepackage{xspace} 
\usepackage{mathrsfs}
\usepackage{amssymb}
\usepackage{amsmath}
\usepackage{amsthm}
\usepackage{epstopdf}
\usepackage{balance}
\usepackage{bm}

\usepackage{thm-restate,thmtools}

\usepackage[ruled,linesnumbered,noend]{algorithm2e}
\SetKwRepeat{Do}{do}{while}
\usepackage{mathtools}
\DeclarePairedDelimiter{\ceil}{\lceil}{\rceil}
\DeclarePairedDelimiter{\floor}{\lfloor}{\rfloor}

\usepackage{cleveref}
\allowdisplaybreaks

\newcommand{\myparatight}[1]{\smallskip\noindent{\bf {#1}:}~}

\newcommand{\RN}[1]{%
  \textup{\uppercase\expandafter{\romannumeral#1}}%
}

\DeclareMathOperator*{\argmax}{arg\,max}

\newenvironment{packeditemize}{\begin{list}{$\bullet$}{\setlength{\itemsep}{0.2pt}\addtolength{\labelwidth}{-4pt}\setlength{\leftmargin}{\labelwidth}\setlength{\listparindent}{\parindent}\setlength{\parsep}{1pt}\setlength{\topsep}{0pt}}}{\end{list}}

\AtBeginDocument{%
  \providecommand\BibTeX{{%
    \normalfont B\kern-0.5em{\scshape i\kern-0.25em b}\kern-0.8em\TeX}}}

\begin{document}
\thispagestyle{headings}
\markright{\hfill To appear in the 30th Usenix Security Symposium, August 2021, Vancouver, B.C., Canada\hfill}

\title{Data Poisoning Attacks to Local Differential Privacy Protocols}

\author{
\rm{Xiaoyu Cao, Jinyuan Jia, Neil Zhenqiang Gong} \\
Duke University\\
\{xiaoyu.cao, jinyuan.jia, neil.gong\}@duke.edu}

\maketitle


\begin{abstract}
Local Differential Privacy (LDP) protocols enable an untrusted data collector to perform privacy-preserving data analytics. In particular, each user locally perturbs its data to preserve privacy before sending it to the data collector, who aggregates the perturbed data to obtain statistics of interest. In the past several years,  researchers from multiple communities--such as security, database, and theoretical computer science-- have proposed many LDP protocols. These studies mainly focused on improving the utility of the  LDP protocols. However, the security of LDP protocols is largely unexplored. 

In this work, we aim to bridge this gap. We focus on LDP protocols for \emph{frequency estimation} and \emph{heavy hitter identification}, which are two basic data analytics tasks. Specifically, we show that an attacker can inject fake users into an LDP protocol and the fake users send carefully crafted data to the data collector such that the LDP protocol estimates high frequencies for arbitrary attacker-chosen items or identifies them as heavy hitters. We call our attacks \emph{data poisoning attacks}. We theoretically and/or empirically show the effectiveness of our attacks. We also explore three countermeasures against our attacks. Our experimental results show that they can effectively defend against our attacks in some scenarios but have limited effectiveness in others,  highlighting the needs for new defenses against our attacks.  
\end{abstract}


\section{Introduction}

Various data breaches~\cite{equifaxdataleak,capitalonedataleak,canvasdataleak} have highlighted the challenges of relying on a data collector (e.g., Equifax) to protect users' private data. \emph{Local Differential Privacy (LDP)}, a variant of differential privacy~\cite{dwork2006calibrating}, aims to address such challenges. In particular, an LDP protocol encodes and perturbs a user's data to protect privacy before sending it to the data collector, who aggregates the users' perturbed data to obtain statistics of interest. Therefore, even if the data collector is compromised, user privacy is still preserved as the attacker only has access to users' privacy-preserving perturbed data.  Because of the resilience against untrusted data collectors, LDP has attracted increasing attention in both academia and industry. Specifically, many LDP protocols~\cite{duchi2013local,erlingsson2014rappor,kairouz2014extremal,wang2017locally,bassily2015local,kairouz2016discrete,bassily2017practical,wang2019locally,qin2016heavy,wang2018locally,zhang2018calm,jia2019calibrate,wang2019consistent,ren2018textsf,avent2017blender,cormode2018marginal,wang2019answering} have been developed in the past several years. Moreover, some of these protocols have been widely deployed in industry including but not limited to Google, Microsoft, and Apple. {For instance, Google deployed LDP~\cite{erlingsson2014rappor} in the Chrome browser to collect  users' default homepages for Chrome; Microsoft~\cite{ding2017collecting} integrated LDP in Windows 10 to collect application usage statistics; and Apple~\cite{appledf2017} adopted LDP on iOS to identify  popular emojis, which are subsequently recommended to users.}

Since LDP perturbs each user's data, it sacrifices utility of the data analytics results obtained by the data collector. Therefore, existing studies on LDP mainly focused on improving the utility via designing new methods to encode/perturb users' data and aggregate the perturbed data to derive statistical results. However, the security of LDP is largely unexplored.   

In this work, we aim to bridge this gap. In particular, we propose a family of attacks called \emph{data poisoning attacks} to LDP protocols. In our attacks, an attacker injects fake users to an LDP protocol and carefully crafts the data sent from the fake users to the data collector, with the goal to manipulate  the data analytics results as the attacker desires.  Specifically, we focus on LDP protocols for \emph{Frequency Estimation} and \emph{Heavy Hitter Identification}, which are two basic data analytics tasks and are usually the first step towards more advanced tasks. The goal of frequency estimation is to estimate the \emph{fraction of users} (i.e., frequency)  that have a certain item for each of a set of items, while the goal of heavy hitter identification is to only identify the top-$k$ items that are the most frequent among the users without estimating the items' frequencies. Our attacks can increase the estimated frequencies for arbitrary attacker-chosen items (called \emph{target items}) in frequency estimation or promote them to be identified as top-$k$ heavy hitters in heavy hitter identification. Our attacks  result in severe security threats to LDP-based data analytics. 
{For example, an attacker can promote a phishing webpage as a popular default homepage of Chrome; an attacker can increase the estimated popularity of its (malicious) application when LDP is used to estimate application popularity; and an attacker can manipulate the identified and recommended popular emojis, resulting in bad user experience and frustration.} 

The major challenge of data poisoning attacks is that, given a limited number of fake users an attacker can inject, what data the fake users should send to the data collector such that the attack effectiveness is maximized. To address the challenge, we formulate our attacks as an optimization problem, whose  objective function  is to maximize the attack effectiveness and whose solution is the data that fake users should send to the data collector. We call our optimization-based attack \emph{Maximal Gain Attack (MGA)}. To better demonstrate the effectiveness of MGA, we also propose two baseline attacks in which the fake users send randomly crafted data to the data collector.  Then, we apply our MGA and the baseline attacks to three state-of-the-art LDP protocols for frequency estimation (i.e., kRR~\cite{kairouz2014extremal}, OUE~\cite{wang2017locally}, and OLH~\cite{wang2017locally}) and one state-of-the-art LDP protocol for heavy hitter identification (i.e., PEM~\cite{wang2019locally}).  

We theoretically evaluate the effectiveness of our attacks. Specifically,  we derive the \emph{frequency gain} of the target items, which is the difference of the target items' estimated frequencies after and before an attack. Our theoretical analysis shows that our MGA can achieve the largest frequency gain among possible attacks. Our theoretical results also show a fundamental security-privacy tradeoff for LDP protocols: when an LDP protocol provides higher privacy guarantees, the LDP protocol is less secure against our attacks (i.e., the frequency gains are larger). Moreover, we observe that different LDP protocols have different security levels against our attacks. For instance,  OUE and OLH have similar security levels against our attacks, and  kRR is less secure than OUE and OLH when the number of items is larger than a threshold. We also empirically evaluate our attacks for both frequency estimation and heavy hitter identification using a synthetic dataset and two real-world datasets. Our empirical results also show the effectiveness of our attacks. For example, on all the three datasets, our MGA can promote 10 randomly selected target items to be identified as  top-$15$ heavy hitters when the attacker only injects $5\%$ of fake users. 

{We also explore three countermeasures, i.e., \emph{normalization},  \emph{detecting fake users}, and \emph{detecting the target item}, to defend against our attacks.} Specifically, in {normalization}, the data collector normalizes the estimated item frequencies to be a probability distribution, i.e., each estimated item frequency is non-negative and the estimated frequencies of all items sum to 1. Since our attacks craft the data for the fake users via solving an optimization problem, the data from the fake users may follow certain patterns that deviate from genuine users. Therefore, in our second countermeasure, the data collector aims to detect fake users via analyzing the statistical patterns of the data from the users, and the data collector filters the detected fake users before estimating frequencies or identifying heavy hitters.  {The third countermeasure detects the target item without detecting the fake users when there is only one target item.} Our empirical results show that these countermeasures can effectively defend against our attacks in some scenarios. For example, when the attacker has  $10$ target items, normalization can reduce the frequency gain of our MGA to OUE from $1.58$ to $0.46$ and detecting fake users can reduce the frequency gain to be almost $0$ because the data collector can detect almost all fake users. However, our attacks are still effective in other scenarios. For instance,  when the attacker has  $10$ randomly selected target items, our MGA to  OLH  still achieves a frequency gain of 0.43 even if both detecting fake users and normalization are used. Our results highlight the needs for new defenses against our attacks. 

In summary, our contributions are as follows: 

\begin{packeditemize}

\item We perform the first systematic study on \emph{data poisoning attacks} to LDP protocols for frequency estimation and heavy hitter identification. 

\item We show that, both theoretically and/or empirically, our attacks can effectively increase the estimated frequencies of the target items or promote them to be identified as heavy hitters. 

\item We explore three countermeasures to defend against our attacks. Our empirical results highlight the needs for new defenses against our attacks. 

\end{packeditemize}


\section{Background and Related Work}

We consider LDP protocols for two basic tasks, i.e., \emph{frequency estimation}~\cite{warner1965randomized,duchi2013local,erlingsson2014rappor,kairouz2014extremal,wang2017locally,bassily2015local,kairouz2016discrete,zhang2018calm,jia2019calibrate,wang2019consistent} and \emph{heavy hitter identification}~\cite{bassily2017practical,wang2019locally,qin2016heavy}. Suppose there are $n$ users. Each user holds one item from a certain domain, e.g., the default homepage of a browser. We denote the domain of the items as $\{1,2,\cdots,d\}$. For conciseness, we simplify  $\{1,2,\cdots,d\}$ as $[d]$. In frequency estimation, the data collector (also called \emph{central server}) aims to estimate the frequency of each item among the $n$ users, while heavy hitter identification aims to identify the top-$k$ items that have the largest frequencies among the $n$ users. Frequency of an item is defined as the fraction of users  who have the item. 

\subsection{Frequency Estimation}
\label{background}

An LDP protocol for frequency estimation consists of three key steps: \emph{encode}, \emph{perturb}, and \emph{aggregate}. The encode step encodes each user's item into some numerical value. We denote the space of encoded values  as $\mathcal{D}$. The perturb step randomly perturbs the value in the space $\mathcal{D}$ and sends the perturbed value to the central server. The central server estimates  item frequencies using the perturbed values from all users in the aggregate step. For simplicity, we denote by $PE(v)$ the perturbed encoded value for an item $v$.  Roughly speaking, a protocol satisfies LDP if any two items are perturbed to the same value with close probabilities. Formally, we have  the following definition:

\begin{restatable}[Local Differential Privacy]{defi}{ldp}\label{defi:ldp}
           A protocol $\mathcal{A}$ satisfies $\epsilon$-local differential privacy ($\epsilon$-LDP) if for any pair of items $v_1, v_2\in [d]$ and any perturbed value ${y}\in \mathcal{D}$, we have $\text{Pr}(PE(v_1)={y}) \le e^\epsilon\text{Pr}(PE(v_2)={y})$,  
	where $\epsilon > 0$ is called privacy budget and $PE(v)$ is the random perturbed encoded value of an item $v$.
\end{restatable}

Moreover, an LDP protocol is called pure LDP if it satisfies the following definition:

\begin{restatable}[Pure LDP~\cite{wang2017locally}]{defi}{pure-ldp}\label{defi:pure-ldp}
           An LDP protocol is pure if there are two probability parameters $0<q<p<1$ such that the following equations hold for any pair of items $v_1, v_2\in[d], v_1\neq v_2$: 
	{\small{\begin{align}
		\text{Pr}(PE(v_1)\in\{{y}|v_1\in{S({y})}\}) &= p\\
		\text{Pr}(PE(v_2)\in\{{y}|v_1\in{S({y})}\}) &= q,
	\end{align}}}%
	where $S({y})$ is  the set of items that ${y}$ supports. 
\end{restatable}

We note that the definition of the support $S({y})$ depends on the LDP protocol. For instance, for some LDP protocols~\cite{duchi2013local,wang2017locally}, the support $S({y})$ of a perturbed value ${y}$ is the set of items whose encoded values could be ${y}$. For a pure LDP protocol, the aggregate step is as follows:
{\small{\begin{align}
\label{aggregate}
	\tilde{f}_v = \frac{\frac{1}{n}\sum\limits_{i=1}^{n} \mathbbm{1}_{{S}({y}_i)}(v) - q}{p-q},
\end{align}}}%
where $\tilde{f}_v$ is the estimated frequency for item $v\in[d]$, ${y}_i$ is the perturbed value from the $i$th user, and $\mathbbm{1}_{{S}({y}_i)}(v)$ is an \emph{characteristic function}, which outputs 1 if and only if ${y}_i$ supports item $v$. Formally, the characteristic function $\mathbbm{1}_{{S}({y}_i)}(v)$ is defined as follows: $\mathbbm{1}_{{S}({y})}(v)$ is 1 if $v\in {S}({y})$ and 0 otherwise. 

Roughly speaking, Equation (\ref{aggregate}) means that the frequency of an item is estimated as the fraction of users whose perturbed values support the item normalized by $p, q,$ and $n$. Pure LDP protocols are unbiased estimators of the item frequencies~\cite{wang2017locally}, i.e., $\mathbb{E}[\tilde{f}_v] = f_v$, where $f_v$ is the true frequency for item $v$. Therefore, we have:
{\small{\begin{align}
\label{unbiasedestimator}
	 \sum\limits_{i=1}^{n}\mathbb{E}[\mathbbm{1}_{S(y_i)}(v)] = n(f_v(p-q) + q).
\end{align}}}

Equation (\ref{unbiasedestimator}) will be useful for the analysis of our attacks. 
Next, we describe three state-of-the-art pure LDP protocols, i.e., kRR~\cite{duchi2013local}, OUE~\cite{wang2017locally}, and OLH~\cite{wang2017locally}. These three protocols are recommended for use in different scenarios. Specifically, kRR achieves the smallest estimation errors when the number of items is small, i.e., $d<3e^\epsilon+2$. When the number of items is large, both OUE and OLH achieve the smallest estimation errors. OUE has a larger communication cost, while OLH has a larger computation cost for the central server. Therefore, when the communication cost is a bottleneck, OLH is recommended, otherwise OUE is recommended. 

\subsubsection{kRR} 

\myparatight{Encode} kRR encodes an item $v$ to itself. Therefore, the encoded space $\mathcal{D}$ for kRR is identical to the domain of items, which is $\mathcal{D} = [d]$.  

\myparatight{Perturb} kRR keeps an encoded item unchanged with a probability $p$ and perturbs it to a different random item $a\in\mathcal{D}$ with probability $q$. Formally, we have:
{\small{\begin{align}
\label{krr-perturb}
\text{Pr}(y=a)=
    \begin{cases}
    \frac{e^{\epsilon}}{d-1+e^{\epsilon}}\triangleq p,& \text{ if }a=v, \\
    \frac{1}{d-1+e^{\epsilon}}\triangleq q,& \text{ otherwise},
 \end{cases}
\end{align}}}%
where $y$ is the random perturbed value sent to the central server when a user's item is $v$.  

\myparatight{Aggregate} The key for aggregation is to derive the support set. A perturbed value ${y}$ only supports itself for kRR. Specifically, we have $S({y}) = \{y\}$. 
Given the support set, we can estimate item frequencies using Equation (\ref{aggregate}).

\subsubsection{OUE} 

\myparatight{Encode} OUE encodes an item $v$ to a $d$-bit binary vector $\bm{e}_v$ whose bits are all zero except the $v$-th bit. The encoded space for OUE is  $\mathcal{D} = \{0, 1\}^d$, where $d$ is the number of items. 

\myparatight{Perturb} OUE perturbs the bits of the encoded binary vector independently. Specifically, for each bit of the encoded binary vector, if it is 1, then it remains 1 with a probability $p$. Otherwise if the bit is 0, it is flipped to 1 
with a probability $q$. Formally, we have:
{\small{\begin{align}\label{eq:perturb_oue}
	    \text{Pr}(y_i=1)=
    \begin{cases}
    \frac{1}{2}\triangleq p, &\text{ if }i=v, \\
    \frac{1}{e^{\epsilon}+1}\triangleq q,& \text{ otherwise},
    \end{cases}
\end{align}}}%
where the vector $\bm{y}=[y_1\ y_2\ \cdots\ y_d]$ is the perturbed value for a user with item $v$. 

\myparatight{Aggregate} A perturbed value $\bm{y}$ supports an item $v$ if and only if the $v$-th bit of $\bm{y}$, denoted as $y_v$, equals to 1. Formally, we have $S(\bm{y}) = \{v|v\in[d]\text{ and } y_v=1\}$. 

\subsubsection{OLH} 

\myparatight{Encode}  OLH  leverages a family of hash functions $\mathbf{H}$, each of which  maps an item $v\in[d]$ to a  value $h\in[d']$, where $d'<d$. In particular, OLH uses $d'=e^\epsilon+1$ as it achieves the best performance~\cite{wang2017locally}. 
An example of  the  hash function family $\mathbf{H}$ could be  xxhash \cite{collet2016xxhash}  with different seeds. Specifically,  a seed is a non-negative integer and each seed represents a different xxhash   hash function.
In the encode step, OLH randomly picks a hash function $H$ from  $\mathbf{H}$. When xxhash is used, randomly picking a hash function is equivalent to randomly selecting a non-negative integer as a seed. Then, OLH computes the hash value of the item $v$ as $h=H(v)$. The tuple $(H,h)$ is the encoded value for the item $v$. The  encoded space for OLH is $\mathcal{D} = \{(H,h)|H\in\mathbf{H} \text{ and } h\in[d']\}$. 

\myparatight{Perturb} OLH only perturbs the hash value $h$ and does not change the hash function $H$. In particular, the hash value stays unchanged with probability $p'$ and switches to a different value in $[d']$ with probability $q'$. Formally, we have:
{\small{\begin{align}\label{eq:perturb_olh}
    \text{Pr}(y=(H, a))=
    \begin{cases}
    \frac{e^\epsilon}{e^\epsilon+d'-1}\triangleq p', &\text{ if }a=H(v), \\
    \frac{1}{e^\epsilon+d'-1} \triangleq q', &\text{ otherwise},
    \end{cases}
\end{align}}}%
where $y$ is the perturbed value sent to the central server from a user with item $v$. Therefore, the overall probability parameters $p$ and $q$ are $p = p' = \frac{e^\epsilon}{e^\epsilon+d'-1}$ and $q = \frac{1}{d'}\cdot p' + (1-\frac{1}{d'})\cdot q' = \frac{1}{d'}	$. 

\myparatight{Aggregate} A perturbed value ${y}=(H, h)$ supports an item $v\in[d]$ if $v$ is hashed  to $h$ by $H$. Formally, we have $S({y}) = \{v|v\in[d]\text{ and }H(v)=h\}$.

\subsection{Heavy Hitter Identification}\label{sec:hh_bg}
The goal of heavy hitter identification~\cite{bassily2017practical,wang2019locally,bassily2015local} is to identify the top-$k$ items that are the most frequent among the $n$ users. A direct and simple solution is to first estimate the frequency of each item using a frequency estimation protocol and then select the $k$ items with the largest frequencies. However, such method is not scalable to a large number of items. In response, a line of works~\cite{bassily2017practical,wang2019locally,bassily2015local} developed protocols to identify heavy hitters without estimating item frequencies. For example, Bassily et al.~\cite{bassily2017practical} and Wang et al.~\cite{wang2019locally} independently developed a similar heavy hitter identification protocol, which divides users into groups and iteratively applies a frequency estimation protocol to identify frequent prefixes within each group. Next, we take the \emph{Prefix Extending Method ({PEM})}~\cite{wang2019locally}, a state-of-the-art heavy hitter identification protocol, as an example to illustrate the process. 

In PEM, each user encodes its item as a $\gamma$-bits binary vector.  Suppose users are evenly divided into $g$ groups. In the $j$th iteration, users in the $j$th group use the OLH protocol to perturb the first $\lambda_j=\ceil*{\log_2 k} + \ceil*{j \cdot \frac{\gamma-\ceil*{\log_2 k}}{g}}$ bits of their binary vectors and send the perturbed bits to the central server, which uses the aggregate step of the OLH protocol to estimate the frequencies of the prefixes that extend the previous top-$k$ prefixes. OLH instead of OUE is used because the number of items corresponding to $\lambda_j$ bits is $2^{\lambda_j}$, which is often large and incurs large communication costs for OUE. Specifically, the central server uses the aggregate step of  OLH  to estimate the frequencies of the $\lambda_j$-bits prefixes in the set $R_{j-1} \times \{0,1\}^{\lambda_j-\lambda_{j-1}}$, 
where $R_{j-1}$ is the set of top-$k$ $\lambda_{j-1}$-bits prefixes identified in the $(j-1)$th iteration and the $\times$ symbol denotes Cartesian product. After estimating the frequencies of these $\lambda_j$-bits prefixes, the central server identifies the top-$k$ most frequent ones, which are denoted as the set $R_{j}$. This process is repeated for the $g$ groups and the  set of top-$k$ prefixes  in the final iteration are identified as the  top-$k$ heavy hitters. 

\subsection{Data Poisoning Attacks}
\vspace{-2mm}

\myparatight{Data poisoning attacks to LDP protocols} A concurrent work~\cite{cheu2019manipulation} studied  \emph{untargeted} attacks to LDP protocols. In particular, they focused on degrading the overall performance of frequency estimation or heavy hitter identification.  For instance, we can represent the estimated frequencies of all items as a vector, where an entry corresponds to an item. They studied how an attack can manipulate the $L_p$-norm distance between such vectors before and after attack. 
In contrast, we study \emph{targeted} attacks that aim  to increase the estimated frequencies of the attacker-chosen target items or promote them to be identified as heavy hitters. We note that the  $L_p$-norm distance between the item frequency vectors is different from the increased estimated frequencies for the target items. For instance, $L_1$-norm distance between the item frequency vectors is a loose upper bound of the increased estimated frequencies for the target items.  

\myparatight{Data poisoning attacks to machine learning} A line of works~\cite{newsome2006paragraph,perdisci2006misleading,nelson2008exploiting,rubinstein2009antidote,huang2011adversarial,biggio2012poisoning,wang2014man,newell2014practicality,mozaffari2014systematic,mei2015using,li2016data,alfeld2016data,munoz2017towards,jagielski2018manipulating,shafahi2018poison,gu2019badnets,liu2017trojaning,fang2018poisoning,yang2017fake,jia2020intrinsic,fang2020influence,fang2020local} studied  data poisoning attacks to  machine learning systems. In particular, the attacker manipulates the training data such that a bad model is learnt, which makes predictions as the attacker desires. For instance, Biggio et al.~\cite{biggio2012poisoning} investigated data poisoning attacks against Support Vector Machines. Jagielski et al.~\cite{jagielski2018manipulating} studied data poisoning attacks to regression models. Shafahi et al.~\cite{shafahi2018poison} proposed  poisoning attacks to neural networks, where the learnt model makes incorrect predictions only for target testing examples. Gu et al.~\cite{gu2019badnets} and Liu et al.~\cite{liu2017trojaning} proposed data poisoning attacks (also called backdoor/trojan attacks) to neural networks, where the learnt model  predicts an attacker-chosen label for testing examples with a certain trigger. 
 Data poisoning attacks were also proposed to spam filters~\cite{nelson2008exploiting}, recommender systems~\cite{li2016data,yang2017fake,fang2018poisoning,fang2020influence},  graph-based methods~\cite{wang2019attacking}, etc.. Our data poisoning attacks are different from these attacks because how LDP protocols aggregate the users' data to estimate frequencies or identify heavy hitters is  substantially different from how a machine learning system aggregates training data to derive a model. 


\section{Attacking Frequency Estimation}
\subsection{Threat Model}\label{sec:threat_model}
We characterize our threat model with respect to an attacker's capability, background knowledge, and goal. 

\myparatight{Attacker's capability and background knowledge} We assume an attacker can inject some fake users into an LDP protocol. These fake users can send arbitrary data in the encoded space to the central server. Specifically, we assume $n$ genuine users and the attacker injects $m$ fake users to the system. Therefore, the total number of users becomes $n+m$. We note that it is a practical threat model to assume that an attacker can inject fake users.{In particular, previous measurement study~\cite{thomas2013trafficking} showed that attackers can easily have access to a large number of fake/compromised accounts in various web services such as Twitter, Google, and Hotmail. Moreover, an attacker can buy fake/compromised accounts for these web services from merchants in the underground market with cheap prices. For instance, a Hotmail account costs \$0.004 -- 0.03; and a phone verified Google account costs \$0.03 -- 0.50 depending on the merchants.}

 Since an LDP protocol executes the encode and perturb steps locally on users' side, the attacker has access to the implementation of these steps. Therefore,  the attacker knows various parameters of the LDP protocol. In particular, the attacker knows the domain size $d$, the encoded space $\mathcal{D}$, and the support set $S(y)$ for each perturbed value $y\in\mathcal{D}$. 
  
\myparatight{Attacker's goal} We consider the attacker's goal is to {promote} some target items, i.e., increase the estimated frequencies of the target items. 
For example, a company may be interested in making its products  more popular. 
Formally, we denote by ${T}=\{t_1, t_2, \cdots, t_r\}$  the set of $r$ target items. 
To increase the estimated frequencies of the target items, 
the attacker carefully crafts the perturbed values sent from the fake users to the central server. We denote by $\mathbf{Y}$ the set of crafted perturbed values for the fake users, where an entry $y_i$ of $\mathbf{Y}$ is the crafted perturbed value for a fake user. The perturbed value $y_i$ could be a number (e.g., for kRR protocol), a binary vector (e.g., for OUE), and a tuple (e.g., for OLH).   

Suppose $\tilde{f}_{t,b}$ and $\tilde{f}_{t,a}$ are the  frequencies estimated by the LDP protocol for a target item $t$ before and after attack, respectively. We define the \emph{frequency gain} $\Delta\tilde{f}_t$ for a target item $t$ as $\Delta\tilde{f}_t =\tilde{f}_{t,a} - \tilde{f}_{t,b}, \forall t \in T$. A larger frequency gain $\Delta\tilde{f}_t$ implies a more successful attack. Note that an LDP protocol perturbs the value on each genuine user randomly. Therefore, the frequency gain $\Delta\tilde{f}_t$ is random for a given set of crafted perturbed values $\mathbf{Y}$ for the fake users. Thus, we define the attacker's \emph{overall gain} $G$  using the sum of the expected frequency gains for the target items, i.e., $G(\mathbf{Y}) = \sum_{t\in T}\mathbb{E}[\Delta\tilde{f}_t]$, {where $\Delta\tilde{f}_t$ implicitly depends on $\mathbf{Y}$}. Therefore, an attacker's goal is to craft the perturbed values $\mathbf{Y}$ to maximize the overall gain. Formally, the attacker aims to solve the following optimization problem:
{\small{\begin{align}\label{eq:objective}
	\max_{\mathbf{Y}} \quad G(\mathbf{Y}).
\end{align}}}%

We note that, to incorporate the different priorities of the target items, an attacker could also assign different weights to the expected frequency gains $\mathbb{E}[\Delta\tilde{f}_t]$ of different target items when calculating the overall gain. Our attacks are also applicable to such scenarios. However, for simplicity, we assume the target items have the same priority. 

\subsection{Three Attacks}

We propose three attacks:  \emph{Random perturbed-value attack (RPA)}, \emph{random item attack (RIA)}, and \emph{Maximal gain attack (MGA)}. RPA selects a perturbed value from the encoded space of the LDP protocol uniformly at random for each fake user and sends it to the server. RPA does not consider any information about the target items. RIA  selects a  target item from the set of target items uniformly at random for each fake user and uses the LDP protocol to encode and perturb the item. MGA  crafts the perturbed value for each fake user to maximize the overall gain $G$ via solving the optimization problem in  Equation (\ref{eq:objective}).  RPA and RIA are two baseline attacks, which are designed to better demonstrate the effectiveness of MGA. 

\myparatight{Random perturbed-value attack (RPA)} 
For each fake user, RPA selects a value from the encoded space of the LDP protocol uniformly at random and sends it to the server.

\myparatight{Random item attack (RIA)} Unlike RPA, RIA considers information about the target items. In particular, RIA  randomly selects a  target item from the set of target items for each fake user. Then, the LDP protocol is applied to encode and perturb the item. Finally, the perturbed value is sent to the server. 

\myparatight{Maximal gain attack (MGA)} The idea behind this attack is to craft the perturbed values for the fake users via solving the optimization problem in Equation (\ref{eq:objective}). Specifically, according to Equation (\ref{aggregate}), the  frequency gain $\Delta\tilde{f}_t$ for a target item $t$ is:
{\small{\begin{align}
    \Delta\tilde{f}_t &= \frac{\frac{1}{n+m}\sum\limits_{i=1}^{n+m} \mathbbm{1}_{S({y}_i)}(t) - q}{p-q} 
    \ - \frac{\frac{1}{n}\sum\limits_{i=1}^{n} \mathbbm{1}_{S({y}_i)}(t) - q}{p-q} \\
    \label{frequencydifference}
    &=\frac{\sum\limits_{i=n+1}^{n+m} \mathbbm{1}_{S({y}_i)}(t)}{(n+m)(p-q)}  -  \frac{m\sum\limits_{i=1}^{n}\mathbbm{1}_{S({y}_i)}(t)}{n(n+m)(p-q)},
\end{align}}}%
where $y_i$ is the perturbed value sent from user $i$ to the server. The first term in Equation (\ref{frequencydifference}) only depends on fake users, while the second term only depends on genuine users. Moreover, the expected frequency gain  for a target item $t$ is:
{\small{\begin{align}
\label{expecteddifference}
    \mathbb{E}[\Delta\tilde{f}_t] =\frac{\sum\limits_{i=n+1}^{n+m} \mathbb{E}[\mathbbm{1}_{S({y}_i)}(t)]}{(n+m)(p-q)} - \frac{m\sum\limits_{i=1}^{n}\mathbb{E}[\mathbbm{1}_{S({y}_i)}(t)]}{n(n+m)(p-q)}, 
\end{align}}}%
where we denote the second term as a constant $c_t$ for simplicity. Moreover, based on Equation (\ref{unbiasedestimator}), we have:
{\small{\begin{align}
 c_t=\frac{m(f_t(p-q)+q)}{(n+m)(p-q)},
 \end{align}}}%
where $f_t$ is the true frequency of $t$ among the $n$ genuine users. Furthermore, we have the overall gain as follows:
{\small{\begin{align}
\label{overallgaindeter}
    G =\frac{\sum\limits_{i=n+1}^{n+m} \sum\limits_{t\in T}\mathbb{E}[\mathbbm{1}_{S({y}_i)}(t)]}{(n+m)(p-q)} - c, 
\end{align}}}%
where $c=\sum_{t\in T}c_t=\frac{m(f_T(p-q)+rq)}{(n+m)(p-q)}$, where $f_T=\sum_{t\in T}f_t$. $c$ does not depend on the perturbed values sent from the fake users to the central server.  In RPA and RIA,  the crafted  perturbed values for the fake users are random. Therefore,  the expectation of the characteristic function $\mathbb{E}[\mathbbm{1}_{S({y}_i)}(t)]$ and the overall gain depend on such randomness. However, MGA uses the optimal perturbed values for fake users, and the characteristic function $\mathbbm{1}_{S({y}_i)}(t)$ becomes deterministic. Therefore, for MGA, we can drop the expectation $\mathbb{E}$ in Equation (\ref{overallgaindeter}), and then we can transform the optimization problem in Equation (\ref{eq:objective}) as follows:
{\small{\begin{align}
\label{transformed}
	\mathbf{Y}^*=\argmax_{\mathbf{Y}} G(\mathbf{Y})
	=\argmax_{\mathbf{Y}} \quad \sum\limits_{i=n+1}^{n+m} \sum\limits_{t\in T}\mathbbm{1}_{S({y}_i)}(t),
\end{align}}}%
where we remove the constants $c$ and $(n+m)(p-q)$ in the optimization problem. 
Note that the above optimization problem only depends on the perturbed values of the fake users, and the perturbed values ${y}_i$ for the fake users are independent from each other. Therefore, we can solve the optimization problem  independently for each fake user. Formally, for each fake user, we craft its perturbed value $y^*$ via solving the following optimization problem:
 {\small{\begin{align}\label{eq:explicit_obj}
	{y}^* = \argmax\limits_{{y}\in\mathcal{D}} \sum\limits_{t\in T}\mathbbm{1}_{S({y})}(t).
\end{align}}}%

We note that, for each fake user, we obtain its perturbed value via solving the same above optimization problem. However, as we will show in the next sections, the optimization problem has many optimal solutions. Therefore, we randomly pick an optimal solution for a fake user. 

Next, we  discuss how to apply these three attacks  to state-of-the-art LDP protocols including kRR, OUE, and OLH, as well as analyzing their overall gains. 

\subsection{Attacking kRR}

\noindent
{\bf Random perturbed-value attack (RPA):} For each fake user, RPA randomly selects a perturbed value $y_i$ from the encoded space, i.e., $[d]$, and sends it to the server. 
We can calculate the expectation of the characteristic function for $t\in T$ as follows:
{\small{\begin{align}
\mathbb{E}[\mathbbm{1}_{S({y}_i)}(t)]&=\text{Pr}(\mathbbm{1}_{S({y}_i)}(t)=1)\\
&=\text{Pr}(t\in S(y_i))=\text{Pr}(y_i=t) \\
&= \frac{1}{d}
\end{align}}}%
Therefore, according to Equation (\ref{overallgaindeter}),   the overall  gain is $G = \frac{rm}{d(n+m)(p-q)}-c$.

\myparatight{Random item attack (RIA)} For each fake user, RIA randomly selects an item $t_i$ from the set of target items $T$, perturbs the item  following the rule in Equation (\ref{krr-perturb}), and sends the perturbed item $y_i$ to the server. 
First, we can calculate the expectation of the characteristic function as follows:
{\small{\begin{align}
\mathbb{E}[\mathbbm{1}_{S({y}_i)}(t)]&=\text{Pr}(y_i=t) \\
&=\text{Pr}(t_i=t) \text{Pr}(y_i=t|t_i=t) \nonumber \\
&+ \text{Pr}(t_i\neq t) \text{Pr}(y_i=t|t_i\neq t) \\
&= \frac{1}{r}\cdot p + (1-\frac{1}{r}) q,
\end{align}}}%
where $r=|T|$ is the number of  target items. 
According to Equation (\ref{overallgaindeter}), we can obtain the overall gain as $G =\frac{(p+(r-1)q)m}{(n+m)(p-q)}-c$.

\begin{table*}[!ht]\renewcommand{\arraystretch}{1.3}
\centering
    \begin{tabular}{|c|c|c|c|}	
	\hline
         	& \small{kRR} & \small{OUE} &\small{OLH} \\
	\hline
	\small{Random perturbed-value attack (RPA)} & \small{${\beta(\frac{r}{d}-f_T)}$} & \small{$\beta(r-f_T)$} & \small{$-\beta f_T$}\\
	\hline
         	\small{Random item attack (RIA)} & \small{$\beta(1-f_T)$} & \small{$\beta(1-f_T)$} & \small{$\beta(1-f_T)$}\\
	\hline
	\small{Maximal gain attack (MGA)} & \small{$\beta(1-f_T)+\frac{\beta (d-r)}{e^\epsilon-1}$} & \small{$\beta(2r-f_T)+\frac{2\beta r}{e^\epsilon-1}$} & \small{$\beta(2r-f_T)+\frac{2\beta r}{e^\epsilon-1}$}\\
	\hline
	\small{{Standard deviation of estimation}} & \small{{$\frac{r\sqrt{d-2+e^\epsilon}}{(e^\epsilon-1)\sqrt{n}}$}} & \small{{$\frac{2re^{\epsilon/2}}{(e^\epsilon-1)\sqrt{n}}$}} & \small{{$\frac{2re^{\epsilon/2}}{(e^\epsilon-1)\sqrt{n}}$}}\\
	\hline
    \end{tabular}
    \caption{Overall gains of the three attacks for kRR, OUE, and OLH.  $n$ is the number of genuine users, $\beta=\frac{m}{n+m}$ is the fraction of fake users among all users, $d$ is the number of items, $r$ is the number of  target items, $f_T=\sum_{t\in T}f_t$ is the sum of true frequencies of the target items among the genuine users, $\epsilon$ is the privacy budget, and $e$ is the  base of the natural logarithm. {To understand the significance of the overall gains, we also include the standard deviations of the estimated total  frequencies of the target items among the $n$ genuine users~\cite{wang2017locally} in the table.} }
    \label{tab:exp_gain}
\end{table*}

\myparatight{Maximal gain attack (MGA)}
For each fake user, MGA crafts its perturbed value by solving the optimization problem in Equation (\ref{eq:explicit_obj}). 
For the kRR protocol, we have $\sum_{t\in {T}}\mathbbm{1}_{S({y})}(t) \le 1$ and $\sum_{t\in {T}}\mathbbm{1}_{S({y})}(t) = 1$ when $y$ is a  target item in $T$. Therefore, MGA picks any  target item for each fake user. Moreover, according to Equation (\ref{overallgaindeter}), the overall gain is $G = \frac{m}{(n+m)(p-q)}-c$.

\subsection{Attacking OUE}
\label{attackOUE}
\noindent
{\bf Random perturbed-value attack (RPA):} 
For each fake user, RPA  selects a $d$-bits binary vector $\bm{y}_i$ from the encoded space $\{0,1\}^d$ uniformly at random as its perturbed vector and sends it to the server. We denote by $y_{i,j}$  the $j$-th bit of the perturbed vector $\bm{y}_i$. Therefore, for each target item $t\in T$, we have $\mathbb{E}[\mathbbm{1}_{S(\bm{y}_i)}(t)]=\text{Pr}(y_{i,t}=1) = \frac{1}{2}$. 
According to Equation (\ref{overallgaindeter}),  we can obtain the overall gain as $G = \frac{rm}{2(n+m)(p-q)} - c$. 

\myparatight{Random item attack (RIA)} For each fake user, RIA randomly selects a  target item $t_i\in T$,  encodes it to a $d$-bits binary vector $\mathbf{e}_i$ whose bits are all zeros except the $t_i$-th bit,  randomly perturbs $\mathbf{e}_i$ following Equation (\ref{eq:perturb_oue}), and sends the perturbed vector $\bm{y}_i$ to the server.  For a target item $t\in T$,  we can calculate the  expected value of the characteristic function as follows:
{\small{\begin{align}
\mathbb{E}[\mathbbm{1}_{S(\bm{y}_i)}(t)]&=\text{Pr}(y_{i,t}=1) \\
&=\text{Pr}(t_i=t) \text{Pr}(y_{i, t}=1|t_i=t) \nonumber \\
&+ \text{Pr}(t_i\neq t) \text{Pr}(y_{i, t}=1|t_i\neq t) \\
&= \frac{1}{r}\cdot p + (1-\frac{1}{r}) \cdot q,
\end{align}}}%
where $p$ and $q$ are defined in Equation (\ref{eq:perturb_oue}).  
Therefore, the overall gain  is $G =\frac{(p+(r-1)q)m}{(n+m)(p-q)}-c$. 

\myparatight{Maximal gain attack (MGA)} For each fake user, MGA chooses a perturbed vector $\bm{y}_i$ that is a solution of the optimization problem defined in Equation (\ref{eq:explicit_obj}). For OUE, we have 
$\sum_{t\in T}\mathbbm{1}_{S(\bm{y}_i)}(t) \le r$ and $\sum_{t\in T}\mathbbm{1}_{S(\bm{y}_i)}(t) = r$ is achieved when $\mathbbm{1}_{S(\bm{y}_i)}(t)=1, \forall t\in T$.  
Thus, for each fake user, MGA  initializes a perturbed vector $\bm{y}_i$ as a binary vector of all 0's and  sets $y_{i,t}=1$ for all $t\in T$. However, if all fake users send the same perturbed binary vector to the server, the server can easily detect the fake users. 
{For instance,  there is only one entry in the perturbed binary vector that has value 1 when we only have 1 target item; and  the server could detect a vector with only a single 1 to be from a fake user, because it is statistically unlikely for a genuine user to send such a vector.} Therefore, MGA  also randomly samples $l$ non-target bits of the perturbed vector $\bm{y}_i$ and sets them to 1. Specifically, we set $l$ such that the number of 1's in the binary vector is the expected number of 1's in  the perturbed binary vector of a genuine user. Since the perturbed binary vector of a genuine user has $p+(d-1)q$ 1's on average, we set $l=\floor{p+(d-1)q-r}$. Note that $r$ is usually much smaller than $d$, so $l$ is a non-negative value. The final binary vector is sent to the server. According to Equation (\ref{overallgaindeter}), the overall gain is $G = \frac{rm}{(n+m)(p-q)}-c$. 

\subsection{Attacking OLH}

\noindent
{\bf Random perturbed-value attack (RPA):} 
For each fake user, RPA randomly selects a hash function  $H_i\in\mathbf{H}$ and a hash value $a_i\in [d']$, and sends the tuple $y_i=(H_i, a_i)$ to the server. For each $t\in T$, we have $\mathbb{E}[\mathbbm{1}_{S(\bm{y}_i)}(t)]=\text{Pr}(H_i(t)=a_i)= \frac{1}{d'}$.  
Therefore, we can obtain the overall gain as $G = \frac{rm}{d'(n+m)(p-q)} - c$. 

\myparatight{Random item attack (RIA)} For each fake user, RIA randomly selects a  target item $t_i$,  randomly selects a hash function $H_i\in\mathbf{H}$, and calculates the hash value $h_i=H_i(t_i)$. The tuple $(H_i, h_i)$ is then perturbed as $(H_i, a_i)$ according to Equation (\ref{eq:perturb_olh}). $(H_i, a_i)$ is the perturbed value, i.e., $y_i=(H_i, a_i)$. We assume the hash function $H_i$ maps any item in $[d]$ to a value in $[d']$ uniformly at random. For a target item $t\in T$, we can calculate the expectation of the characteristic function as follows:
{\small{\begin{align}
\mathbb{E}[\mathbbm{1}_{S({y}_i)}(t)]&=\text{Pr}(H_i(t)=a_i) \\
&= \text{Pr}(t_i=t)\text{Pr}(H_i(t)=a_i|t_i=t)\nonumber\\
&+ \text{Pr}(t_i \neq t)\text{Pr}(H_i(t)=a_i|t_i\neq t)\\
&= \frac{1}{r}\cdot p + (1-\frac{1}{r})\cdot q.
\end{align}}}%
Thus, the overall gain for RIA is $G=\frac{[p+(r-1)q]m}{(n+m)(p-q)} - c$. 

\myparatight{Maximal gain attack (MGA)} 
 For each fake user, MGA chooses a perturbed value ${y}_i=(H_i, a_i)$ that is a solution of the optimization problem defined in Equation (\ref{eq:explicit_obj}). For OLH, we have 
$\sum_{t\in T}\mathbbm{1}_{S({y}_i)}(t) \le r$ and $\sum_{t\in T}\mathbbm{1}_{S({y}_i)}(t) = r$ is achieved when  the hash function $H_i$ maps all items in $T$ to $a_i$, i.e., $H_i(t)=a_i, \forall t\in T$. Thus, for each fake user, MGA  searches for a hash function $H_i$ in $\mathbf{H}$ such that $H_i(t)=a_i, \forall t\in T$ holds. 
Therefore, according to Equation (\ref{overallgaindeter}), the overall gain is $G = \frac{rm}{(n+m)(p-q)}-c$. Note that we may not be able to find such a hash function in practice. In our experiments, for each fake user, we randomly sample 1,000 hash functions and use the one that hashes the most target items to the same value. 

\subsection{Theoretical Analysis}

\label{sec:theretic_analysis}

Table~\ref{tab:exp_gain} summarizes the overall gains of the three attacks for  kRR, OUE, and OLH, where we have replaced the parameters $p$ and $q$ for each LDP protocol according to Section~\ref{background}. Next, we compare the three attacks, discuss a fundamental security-privacy tradeoff, and compare the three LDP protocols with respect to their security against our data poisoning attacks. 

\myparatight{Comparing the three attacks} All three attacks achieve larger overall gains when  the target items' true frequencies are smaller (i.e., $f_T$ is smaller). MGA achieves the largest overall gain among the three attacks. In fact, given an LDP protocol, a set of target items and fake users,  MGA achieves the largest overall gain among all possible attacks. This is because MGA crafts the perturbed values for the fake users such that the overall gain is maximized. RIA achieves larger overall gains than RPA for kRR and OLH, while RPA achieves a larger overall gain than RIA for OUE.

{Table~\ref{tab:exp_gain} also includes the standard deviations of the estimated total frequencies of the target items among the $n$ genuine users. Due to the $\sqrt{n}$ term in the denominators, the standard deviations are much smaller than the overall gains of our MGA attacks. For instance, on the Zipf dataset in our experiments with the default parameter settings, the overall gains of  MGA  are 1600, 82, and 82 times larger than the standard deviations for kRR, OUE, and OLH, respectively.} 

\myparatight{Fundamental security-privacy tradeoffs} The security of an LDP protocol is determined by the strongest attack (i.e., MGA) to it. Intuitively, when the privacy budget $\epsilon$ is smaller (i.e., stronger privacy), genuine users add larger noise to their data.  However, the perturbed values that MGA crafts for the fake users do not depend on the privacy budget. As a result, the fake users contribute more towards the estimated item frequencies, making the overall gain larger. In other words, we have a fundamental security-privacy tradeoff. Formally, the following theorem shows such tradeoffs. 
        \begin{restatable}[Security-Privacy Tradeoff]{thm}{tradeoff}\label{thm:tradeoff}
            For any of the three LDP protocols kRR, OUE, and OLH, when the  privacy budget $\epsilon$ is smaller (i.e., stronger privacy), MGA achieves a larger overall gain $G$ (i.e., weaker security).   
        \end{restatable}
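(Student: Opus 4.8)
The plan is to read off the three MGA overall gains from Table~\ref{tab:exp_gain} and show directly that each, viewed as a function of the privacy budget $\epsilon$, is strictly decreasing. Since the fake-user fraction $\beta=\frac{m}{n+m}$, the target-frequency sum $f_T$, and the counts $d$ and $r$ are all independent of $\epsilon$, the entire $\epsilon$-dependence of each gain is confined to a single term, which makes the monotonicity analysis elementary.

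First I would isolate that term for each protocol. For kRR the MGA gain is $\beta(1-f_T)+\frac{\beta(d-r)}{e^\epsilon-1}$, so the $\epsilon$-dependent part is $\frac{\beta(d-r)}{e^\epsilon-1}$; for both OUE and OLH the gain is $\beta(2r-f_T)+\frac{2\beta r}{e^\epsilon-1}$, with $\epsilon$-dependent part $\frac{2\beta r}{e^\epsilon-1}$. In each case the numerator is a positive constant: I would note $d>r$ for kRR (there are at least as many items as targets, and strictly more in any nontrivial instance) and $r>0$ for OUE and OLH. Thus every gain has the form $A+\frac{C}{e^\epsilon-1}$ with $A$ independent of $\epsilon$ and $C>0$.

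Next I would establish monotonicity of $g(\epsilon)=\frac{C}{e^\epsilon-1}$ on the domain $\epsilon>0$. Since $\epsilon>0$ forces $e^\epsilon-1>0$, the map $\epsilon\mapsto e^\epsilon-1$ is positive and strictly increasing, hence its reciprocal is strictly decreasing, and multiplying by the positive constant $C$ preserves this. Equivalently, $g'(\epsilon)=-\frac{Ce^\epsilon}{(e^\epsilon-1)^2}<0$. Therefore $G$ is strictly decreasing in $\epsilon$ for each of the three protocols, which is exactly the claim: a smaller $\epsilon$ (stronger privacy) yields a larger overall gain (weaker security).

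There is no genuine obstacle here once the Table~\ref{tab:exp_gain} entries are in hand; the only point requiring care is verifying that the coefficient $C$ is strictly positive in every case (in particular $d>r$ for kRR), since otherwise the dependence could be constant or even reversed. For completeness I would also confirm, by substituting the protocol-specific $p$ and $q$ from Section~\ref{background} into the closed-form MGA gains derived earlier, that each table entry genuinely takes the asserted $A+\frac{C}{e^\epsilon-1}$ shape, so that the single monotonicity argument applies verbatim to all three protocols.
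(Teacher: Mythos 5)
Your proposal is correct and follows essentially the same route as the paper, which simply observes from Table~\ref{tab:exp_gain} that $\epsilon$ appears only in the term $\frac{C}{e^\epsilon-1}$ of each MGA gain and concludes monotonicity. You are somewhat more careful than the paper in explicitly verifying that the coefficient is strictly positive (in particular $d>r$ for kRR), which is a worthwhile but minor refinement of the same argument.
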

        \begin{proof}
\vspace{-1mm}
        Table~\ref{tab:exp_gain} shows that  $\epsilon$ is in the denominator of the overall gains for MGA. Therefore,  the overall gains of MGA increase as $\epsilon$ decreases. 
\vspace{-1mm}
        \end{proof}

\myparatight{Comparing the security of the three LDP protocols} Table~\ref{tab:exp_gain} shows that, when MGA is used, OUE and OLH achieve  the same overall gain. 
Therefore, OUE and OLH have  the same level of security against data poisoning attacks. The following theorem shows that OUE and OLH are more secure than kRR when the number of items is larger than a threshold. 

    \begin{restatable}{thm}{comp_protocols}\label{thm:comp_protocols}
        Suppose MGA is used. OUE and OLH are more secure than kRR when the number of items is larger than some threshold, i.e., $d > (2r-1)(e^\epsilon-1) + 3r$.  
    \end{restatable}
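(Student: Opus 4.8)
The plan is to read off the MGA overall gains for the three protocols directly from Table~\ref{tab:exp_gain} and reduce the security comparison to a single algebraic inequality. Since ``more secure'' is defined via the strongest attack MGA achieving a smaller overall gain, and since the table already shows that OUE and OLH have identical MGA gains, it suffices to compare the common OUE/OLH gain against the kRR gain. First I would write
\begin{align*}
    G_{\text{OUE}} = G_{\text{OLH}} = \beta(2r-f_T)+\frac{2\beta r}{e^\epsilon-1}, \qquad G_{\text{kRR}} = \beta(1-f_T)+\frac{\beta(d-r)}{e^\epsilon-1},
\end{align*}
and then require $G_{\text{OUE}} < G_{\text{kRR}}$, which is the precise meaning of OUE/OLH being more secure than kRR.

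The main simplification step is to clear the common factor and the common term. Dividing both sides by $\beta=\frac{m}{n+m}>0$ (legitimate because at least one fake user is injected, so $m\ge 1$) preserves the inequality, and the $-f_T$ term appears on both sides and cancels. This leaves
\begin{align*}
    2r + \frac{2r}{e^\epsilon-1} < 1 + \frac{d-r}{e^\epsilon-1},
\end{align*}
so the dependence on the target items' true frequencies $f_T$ drops out entirely, and the comparison depends only on $d$, $r$, and $\epsilon$. I would then collect the two fractional terms over the common denominator $e^\epsilon-1$ and move the integer terms to isolate the inequality $2r-1 < \frac{d-3r}{e^\epsilon-1}$.

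The last step is to multiply through by $e^\epsilon-1$. The key point to check---and essentially the only thing that could go wrong---is the sign of this factor: since $\epsilon>0$ we have $e^\epsilon-1>0$, so multiplying preserves the direction of the inequality. Rearranging then yields exactly $d > (2r-1)(e^\epsilon-1) + 3r$, matching the stated condition. I expect no genuine obstacle here: the argument is routine algebra, and the only subtlety worth stating explicitly is that both $\beta>0$ and $e^\epsilon-1>0$, so that neither the division nor the multiplication flips the inequality.
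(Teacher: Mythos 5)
Your proposal is correct and follows essentially the same route as the paper's proof: read the MGA gains for kRR and OUE/OLH from Table~\ref{tab:exp_gain}, require the kRR gain to exceed the OUE/OLH gain, cancel $\beta$ and $f_T$, and rearrange using $e^\epsilon-1>0$. Your explicit remarks that $\beta>0$ and $e^\epsilon-1>0$ justify the divisions/multiplications are exactly the (implicit and explicit) sign checks in the paper's argument.
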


\begin{proof}
See Appendix \ref{proof:thm2}.
\end{proof}
    

\section{Attacking Heavy Hitter Identification}
\subsection{Threat model}

\myparatight{Attacker's capability and background knowledge} We make the same assumption on the attacker's capability and background knowledge as in attacking frequency estimation, i.e., the attacker can inject fake users into the protocol and send arbitrary data to the central server.  
 
\myparatight{Attacker's goal} We consider the attacker's goal is to {promote} some target items, i.e., manipulate the heavy hitter identification protocol to recognize the target items as top-$k$ heavy hitters. Formally, we denote by ${T}=\{t_1, t_2, \cdots, t_r\}$  the set of $r$ target items, which are not among the true top-$k$ heavy hitters. We define \emph{success rate} of an attack as the fraction of target items that are promoted to be top-$k$ heavy hitters by the attack. An attacker's goal is to achieve a high success rate. 

\subsection{Attacks}
State-of-the-art heavy hitter identification protocols iteratively apply frequency estimation protocols.  Therefore, we apply the three attacks for frequency estimation to heavy hitter identification. Next, we use PEM as an example to illustrate how to attack heavy hitter identification protocols. 

In PEM, each item is encoded by a $\gamma$-bits binary vector and users are randomly divided into $g$ groups. On average, each group contains a fraction of $\frac{m}{n+m}$ fake users. In the $j$th iteration, PEM uses OLH to perturb the first $\lambda_j$ bits of the binary vectors for users in the $j$th group and sends them to the central server. An attacker uses the RPA, RIA, or MGA to craft the data sent from the fake users to the central server by treating the first $\lambda_j$ bits of the binary vectors corresponding to the target items as the ``target items'' in the $j$th iteration. Such attacks can increase the likelihood that the first $\lambda_j$ bits of  the target items are identified as the top-$k$ prefixes in the $j$th iteration, which in turn makes it more likely to promote the target items as top-$k$ heavy hitters. 


\section{Evaluation}

\subsection{Experimental Setup}
\label{exp:setup}

\begin{figure*}[!t]
\vspace{-4mm}
	 \centering
\subfloat{\includegraphics[width=0.2\textwidth]{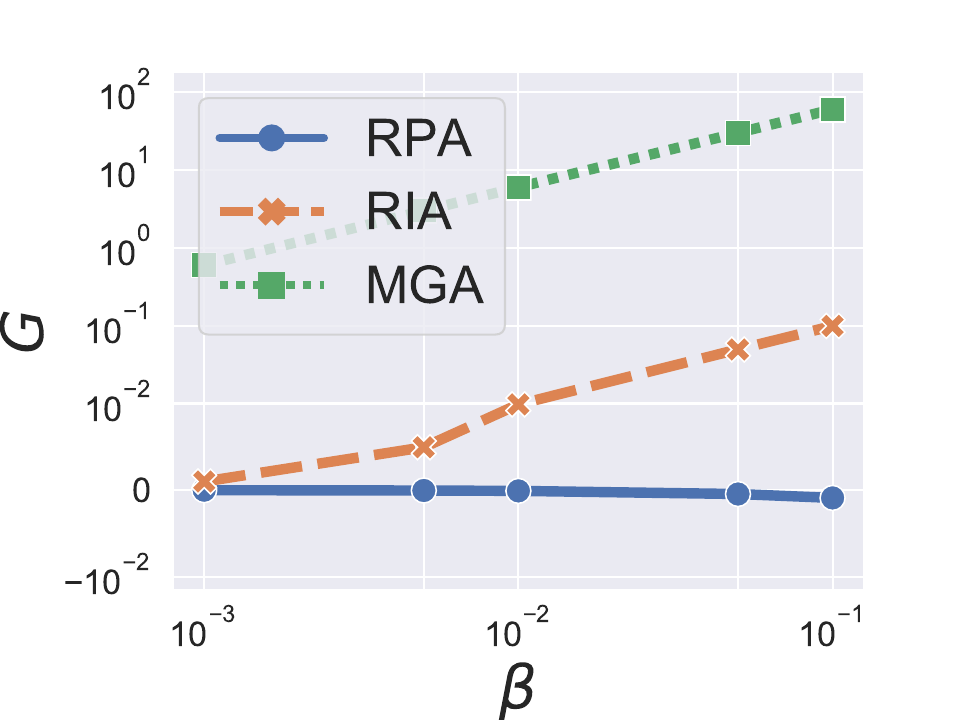}}
\subfloat{\includegraphics[width=0.2\textwidth]{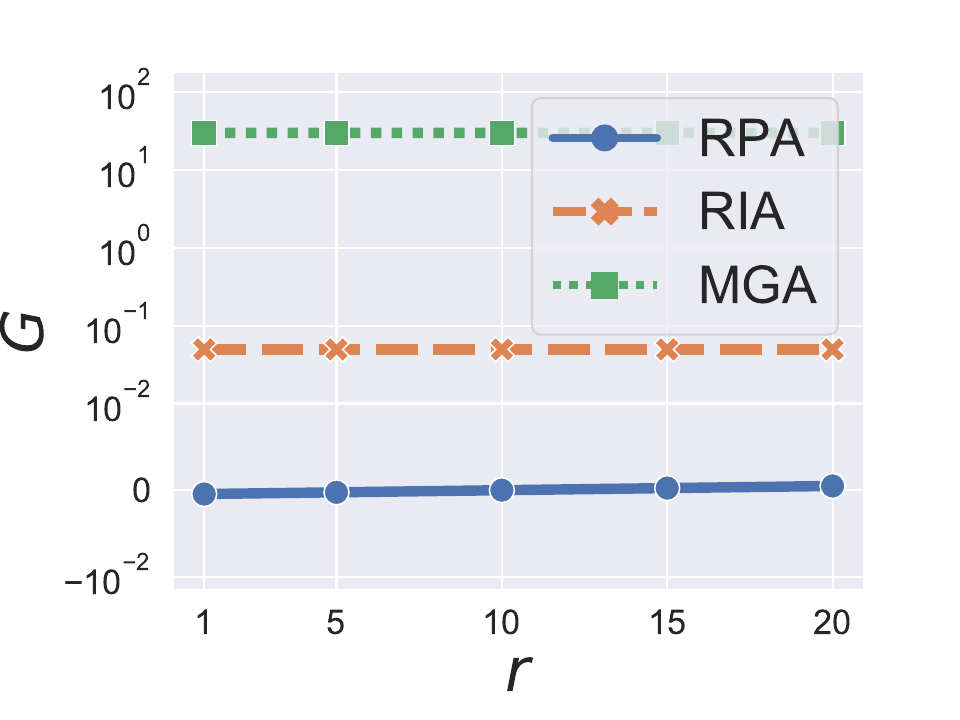}}
\subfloat{\includegraphics[width=0.2\textwidth]{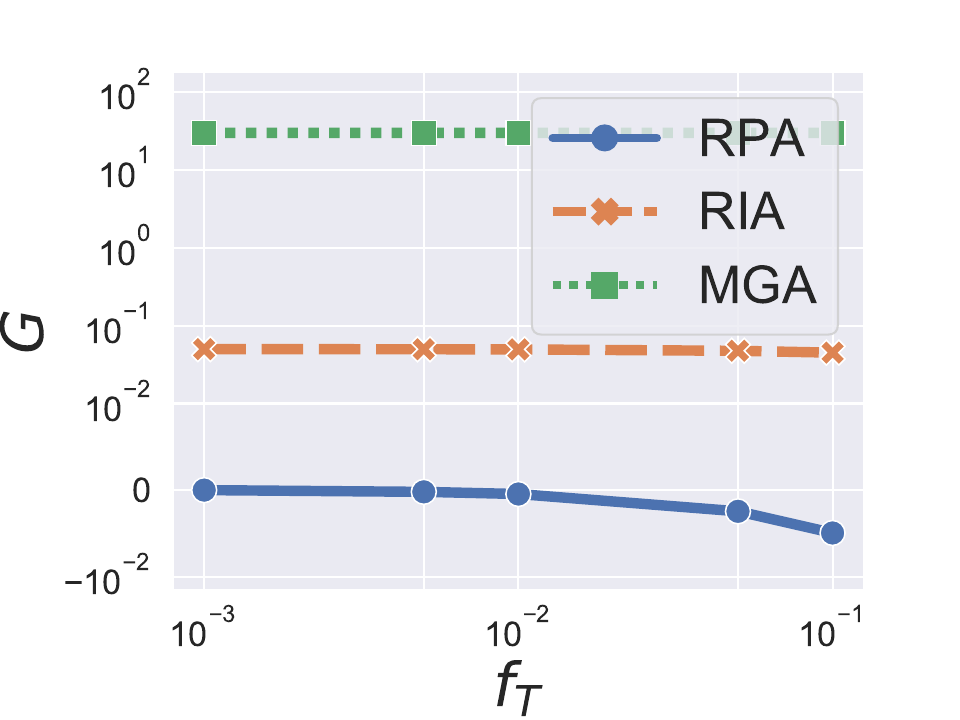}}
\subfloat{\includegraphics[width=0.2\textwidth]{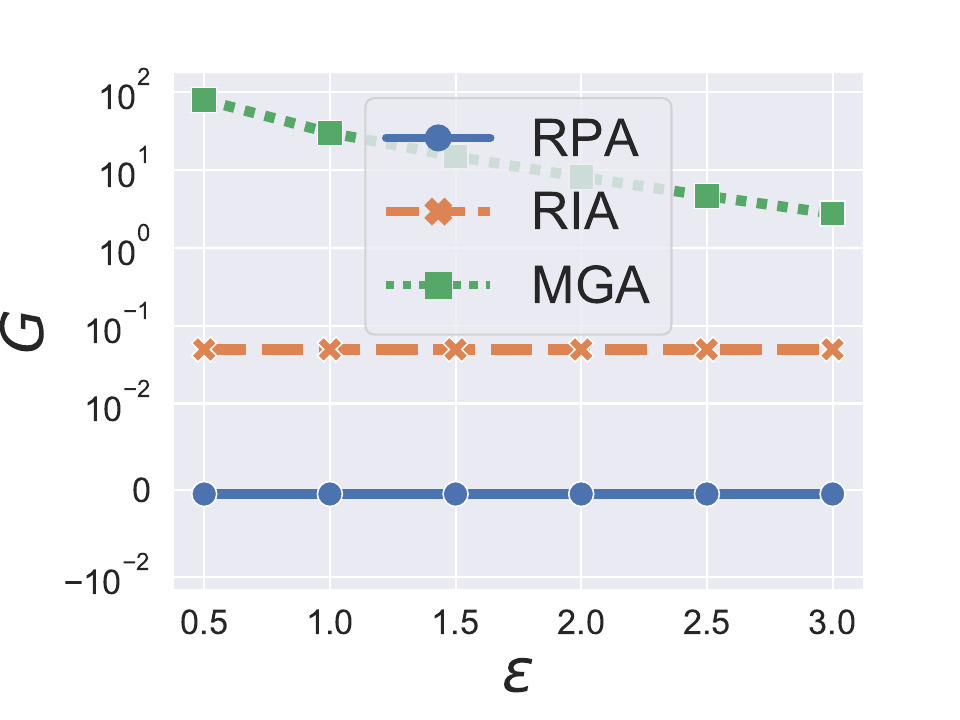}}
\subfloat{\includegraphics[width=0.2\textwidth]{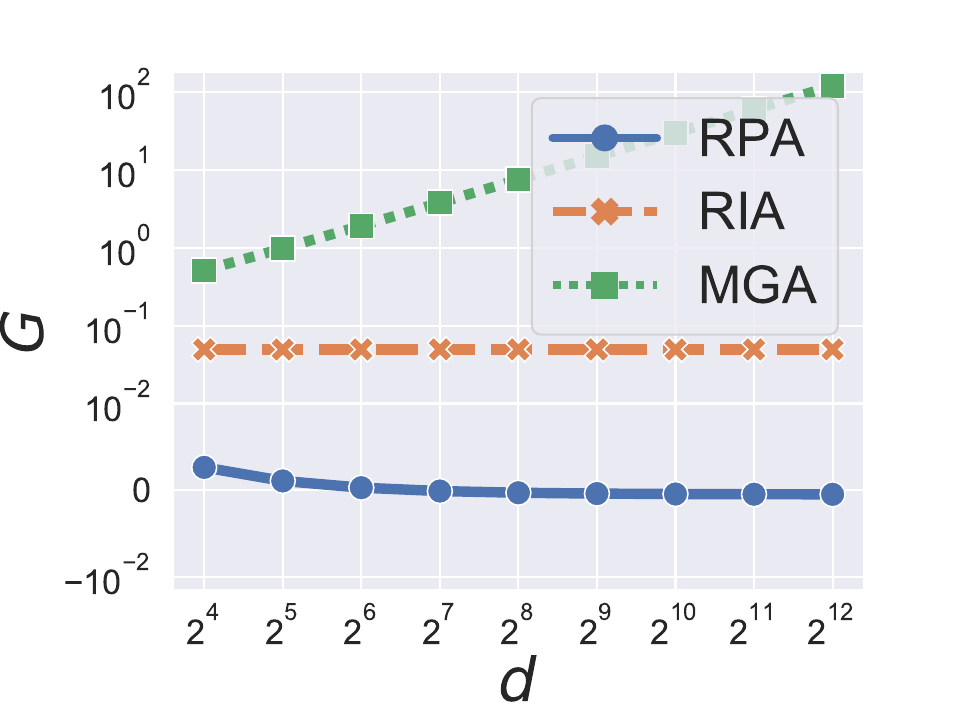}}
\vspace{-4mm}
\setcounter{subfigure}{0}
\subfloat{\includegraphics[width=0.2\textwidth]{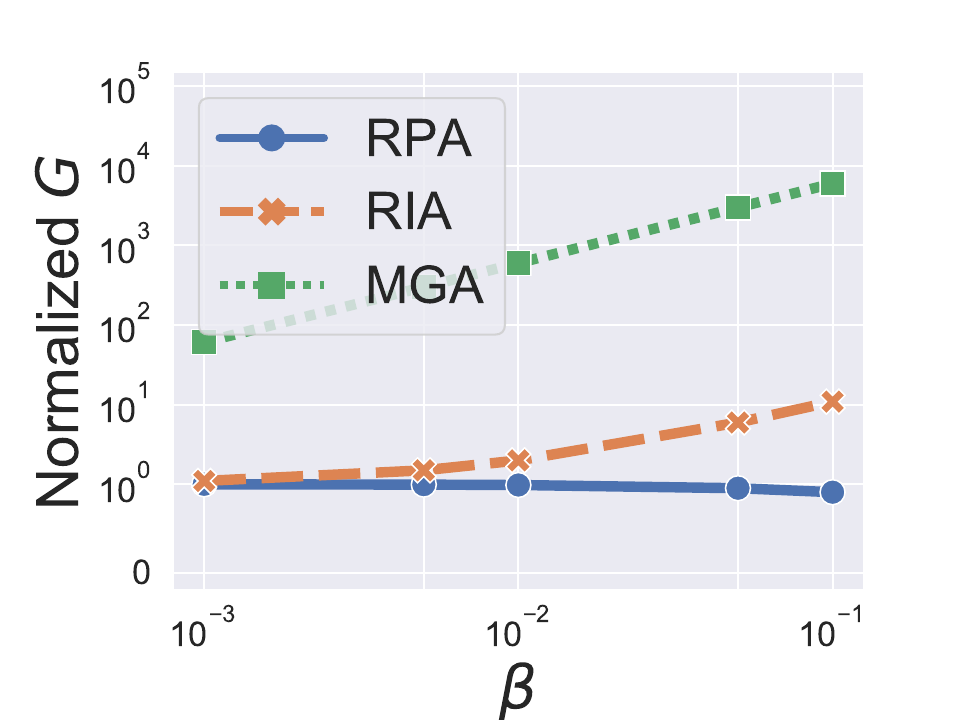}}
\subfloat{\includegraphics[width=0.2\textwidth]{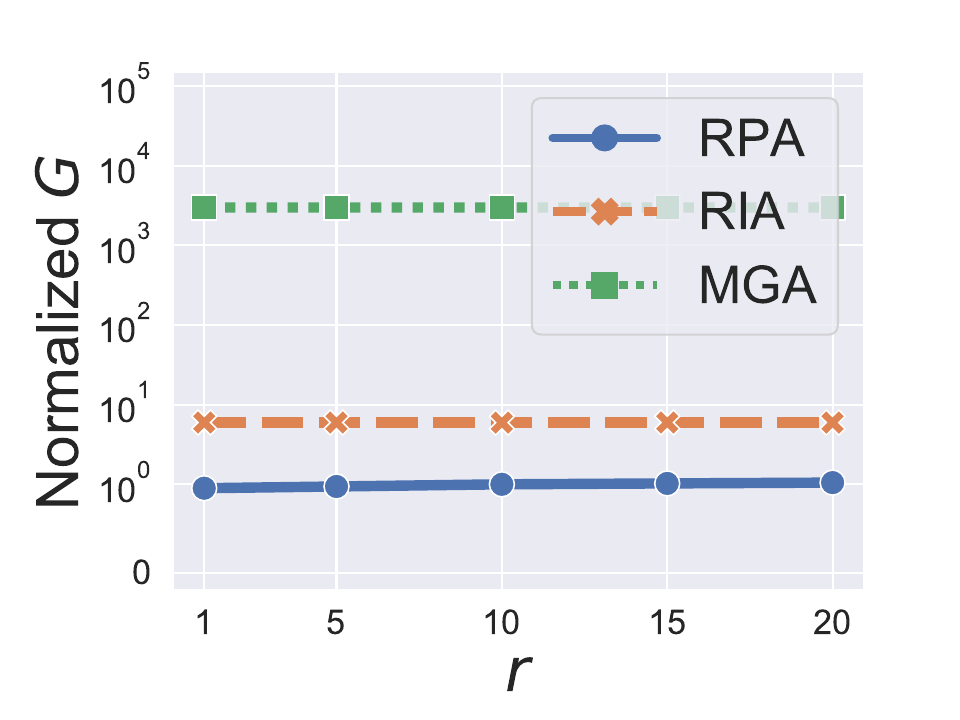}}
\subfloat{\includegraphics[width=0.2\textwidth]{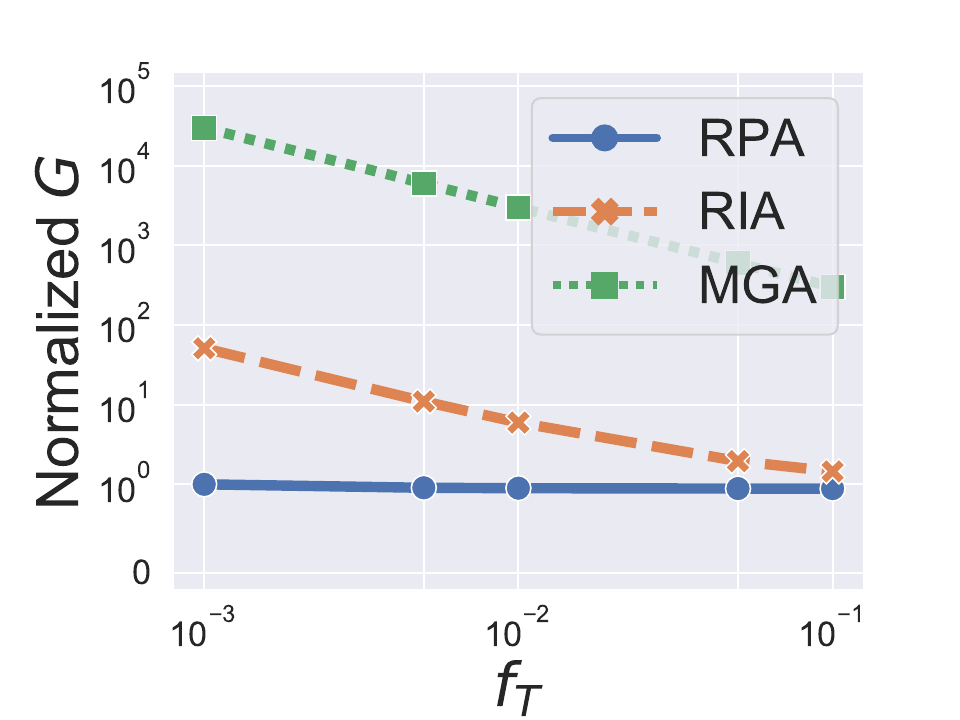}}
\subfloat{\includegraphics[width=0.2\textwidth]{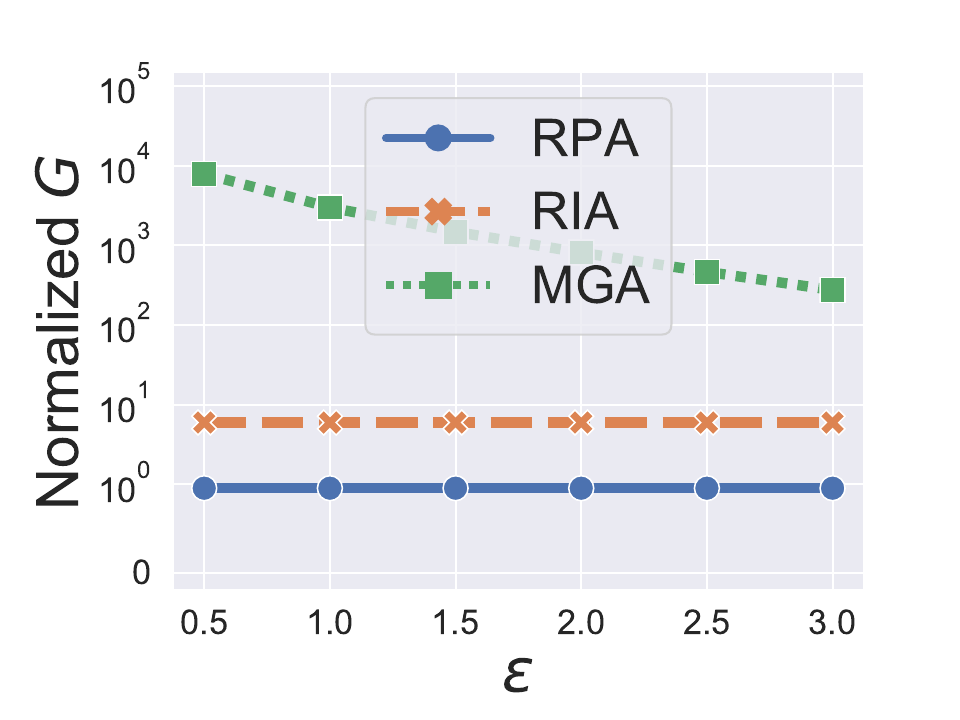}}
\subfloat{\includegraphics[width=0.2\textwidth]{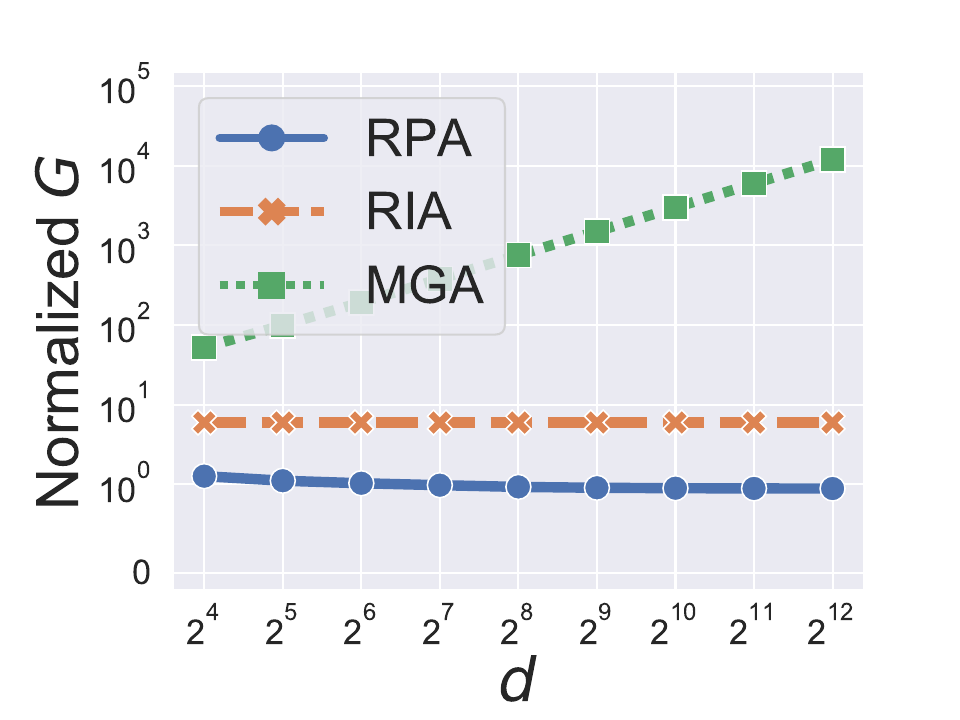}\label{fig:krr_d}}
\vspace{-3mm}
	 \caption{{Impact of different parameters on the overall gains (first row) and normalized overall gains (second row) of the three attacks for kRR.}}
	\label{prameterimpact_krr}
\vspace{-5mm}
\end{figure*}

\begin{figure*}[!t]
	 \centering
\subfloat{\includegraphics[width=0.2\textwidth]{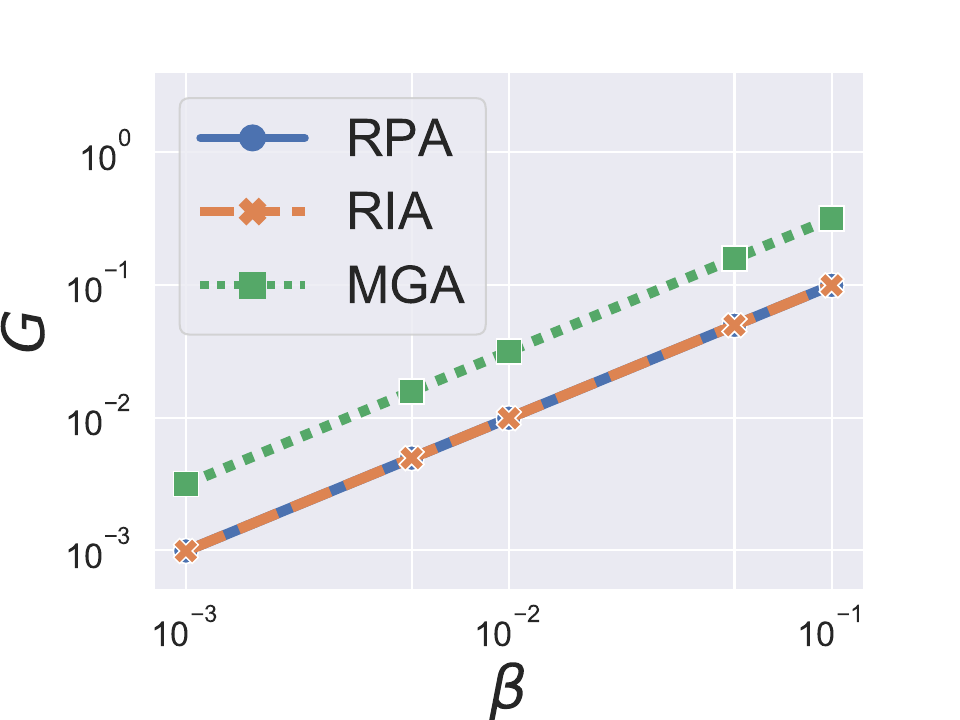}}
\subfloat{\includegraphics[width=0.2\textwidth]{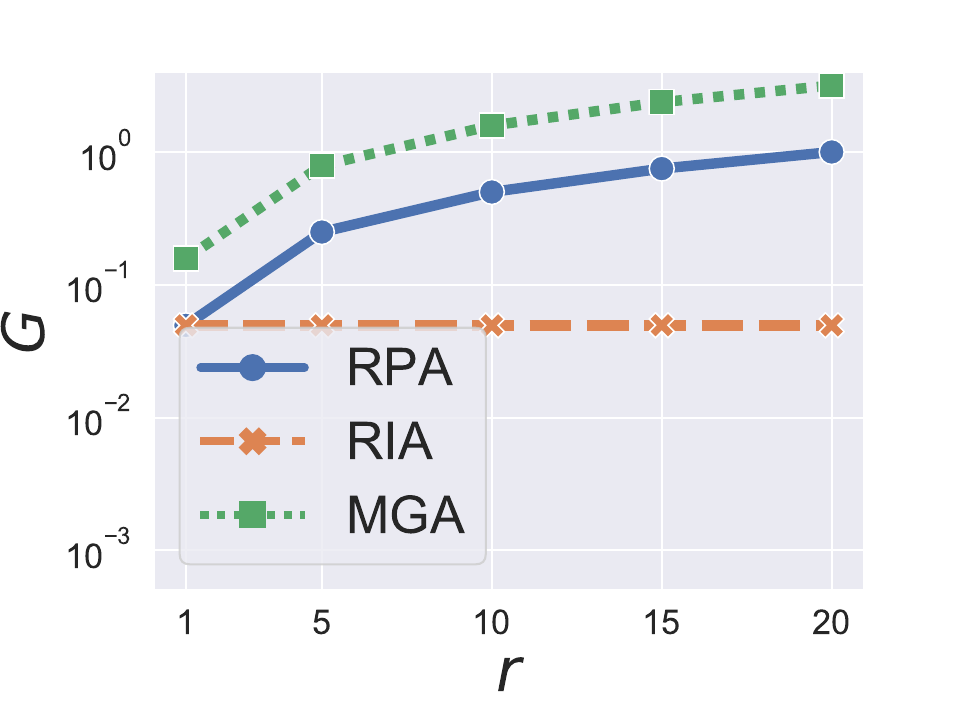}}
\subfloat{\includegraphics[width=0.2\textwidth]{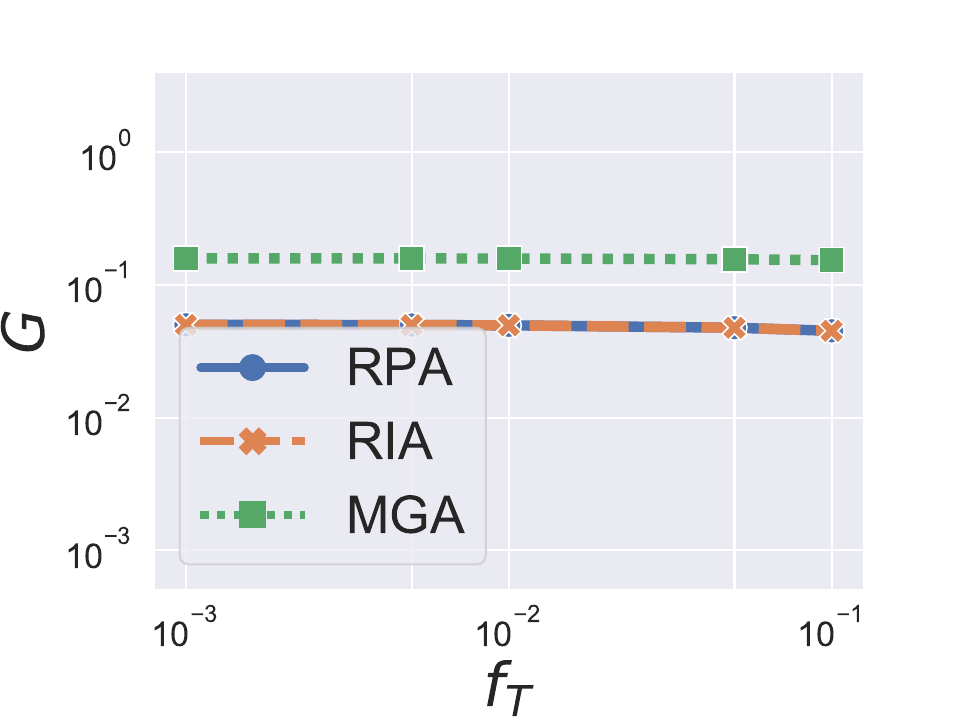}}
\subfloat{\includegraphics[width=0.2\textwidth]{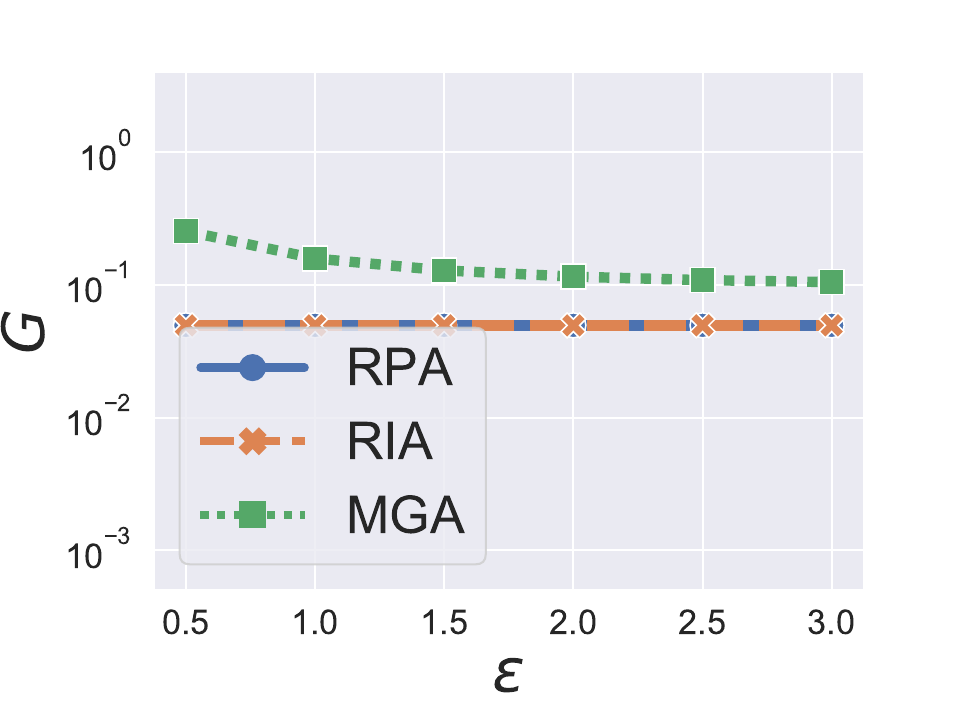}}
\subfloat{\includegraphics[width=0.2\textwidth]{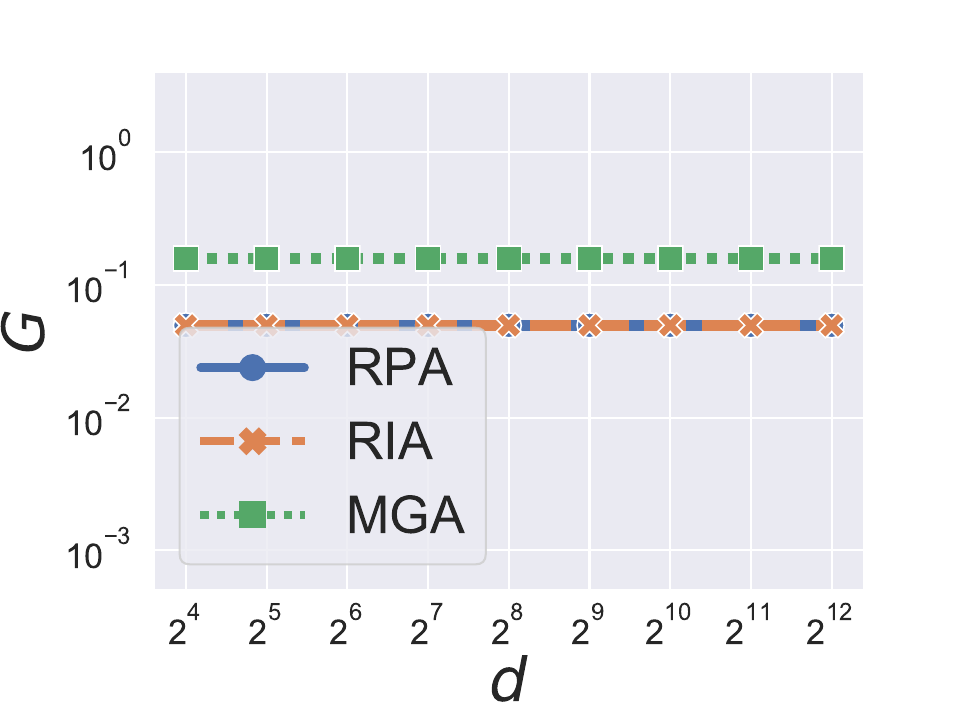}}
\vspace{-4mm}
\setcounter{subfigure}{0}
\subfloat{\includegraphics[width=0.2\textwidth]{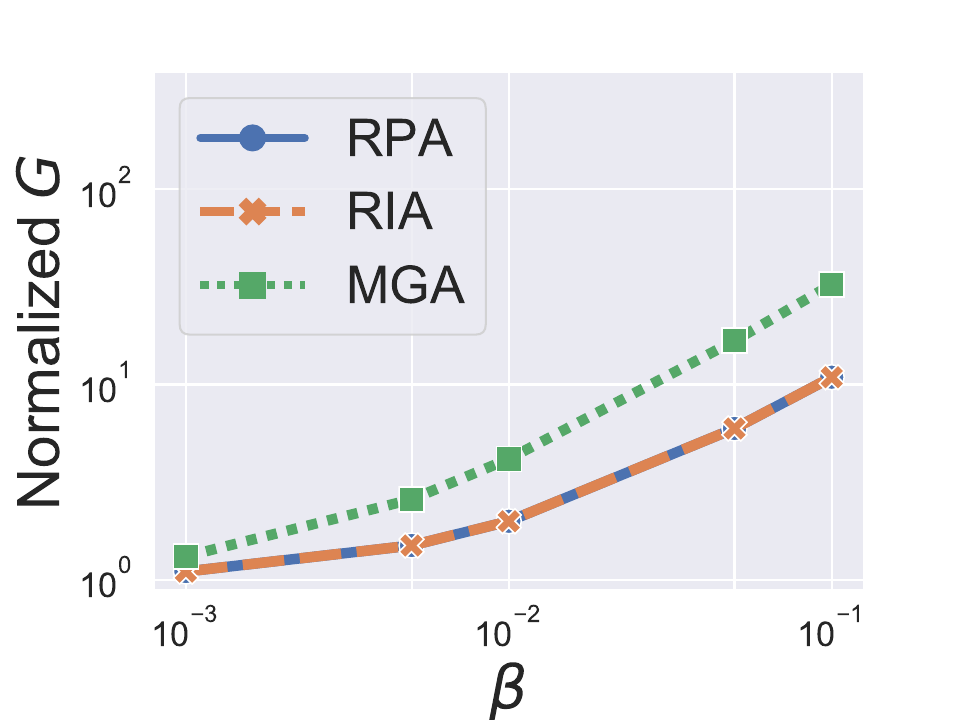}}
\subfloat{\includegraphics[width=0.2\textwidth]{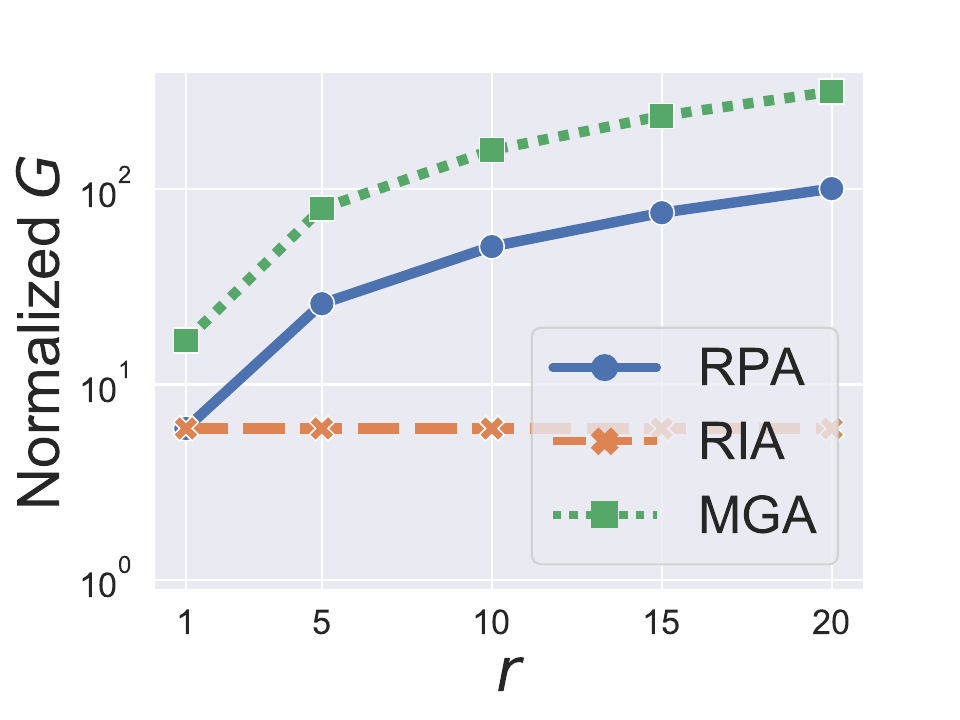}}
\subfloat{\includegraphics[width=0.2\textwidth]{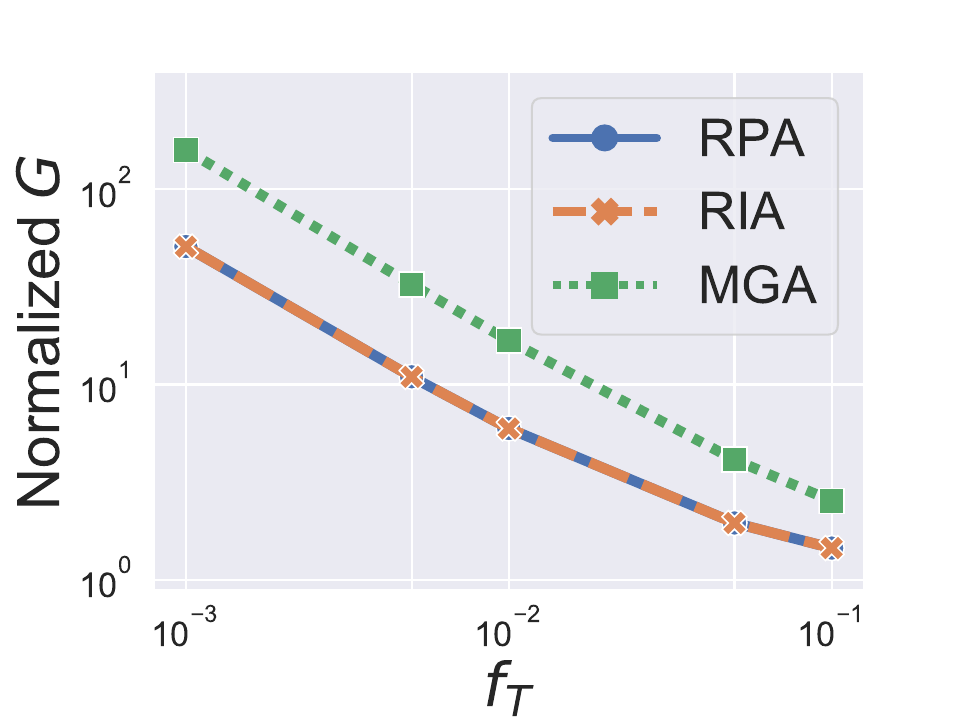}}
\subfloat{\includegraphics[width=0.2\textwidth]{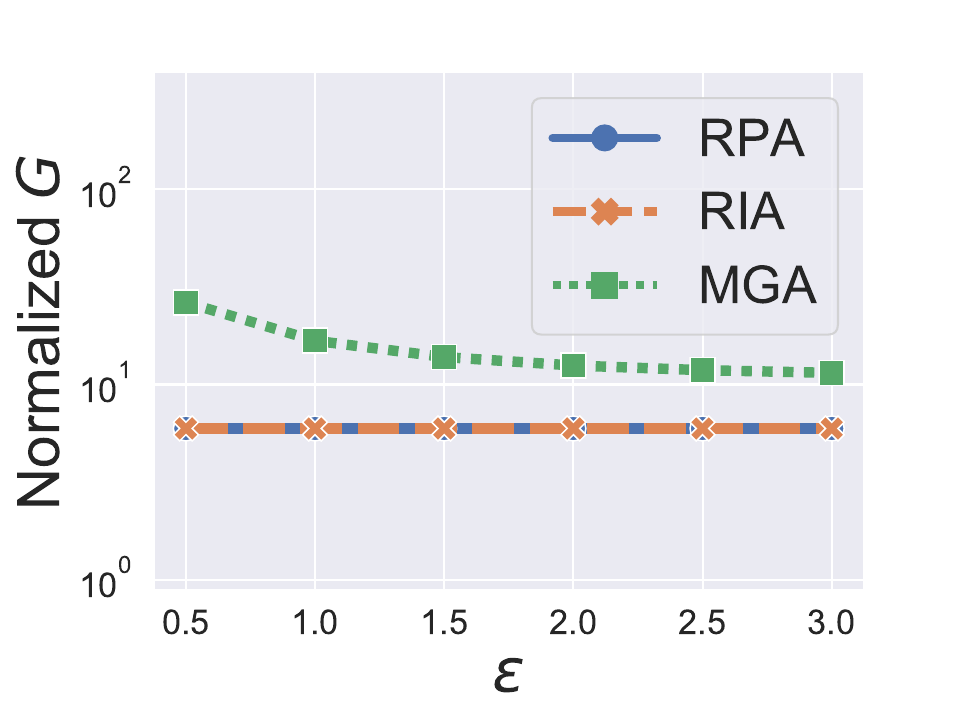}}
\subfloat{\includegraphics[width=0.2\textwidth]{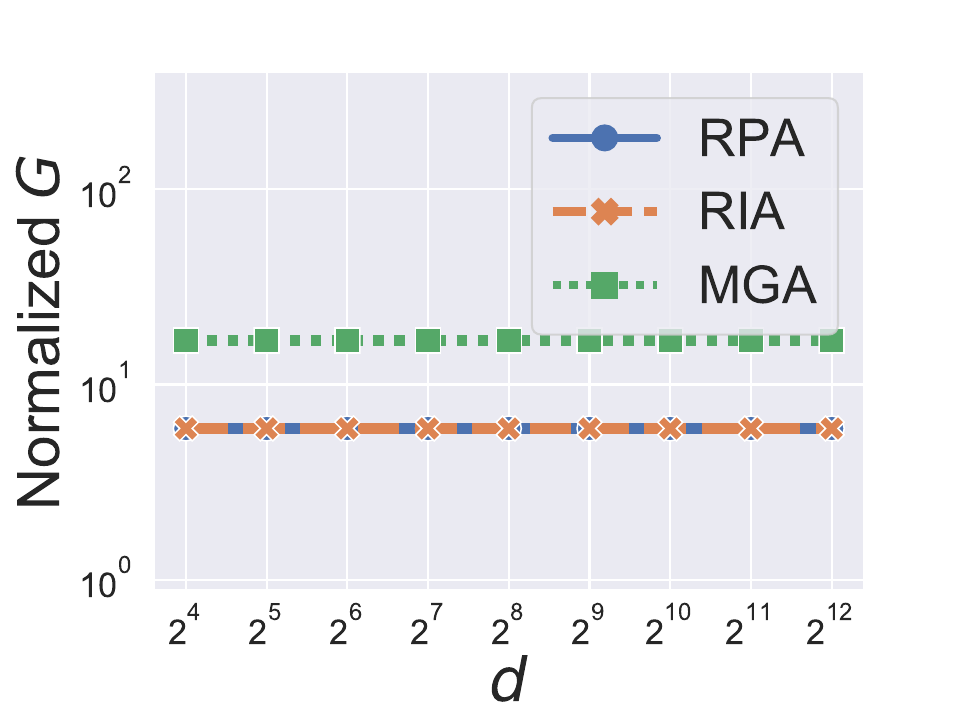}\label{fig:oue_d}}
\vspace{-3mm}
	 \caption{{Impact of different parameters on the overall gains (first row) and normalized overall gains (second row) of the three attacks for OUE.}}
	\label{prameterimpact_oue}
\vspace{-5mm}
\end{figure*}

\begin{figure*}[!t]
	 \centering
\subfloat{\includegraphics[width=0.2\textwidth]{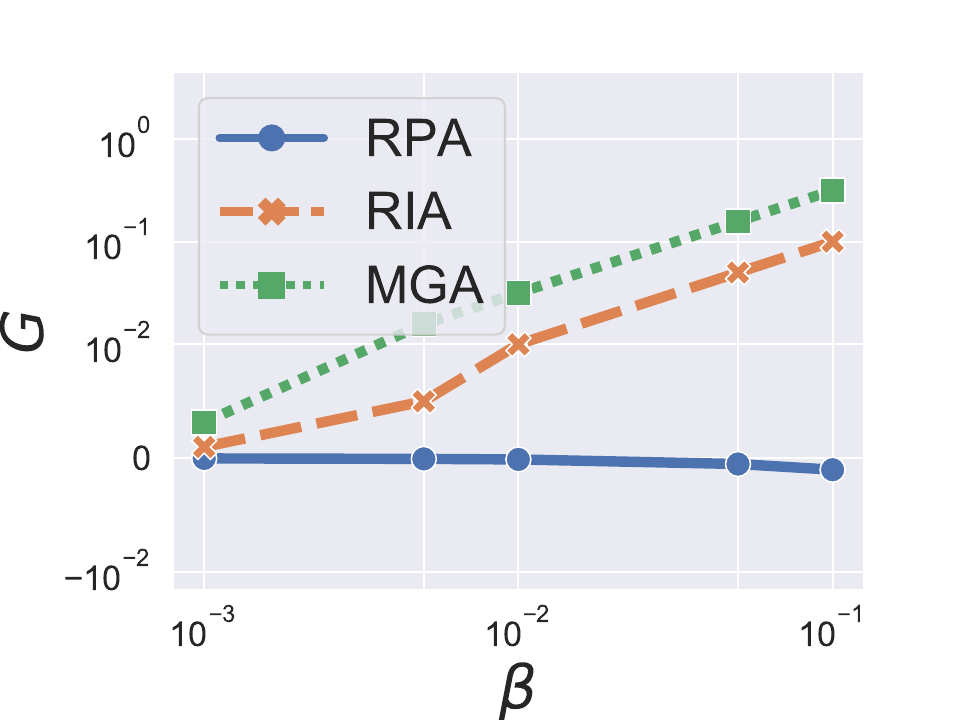}}
\subfloat{\includegraphics[width=0.2\textwidth]{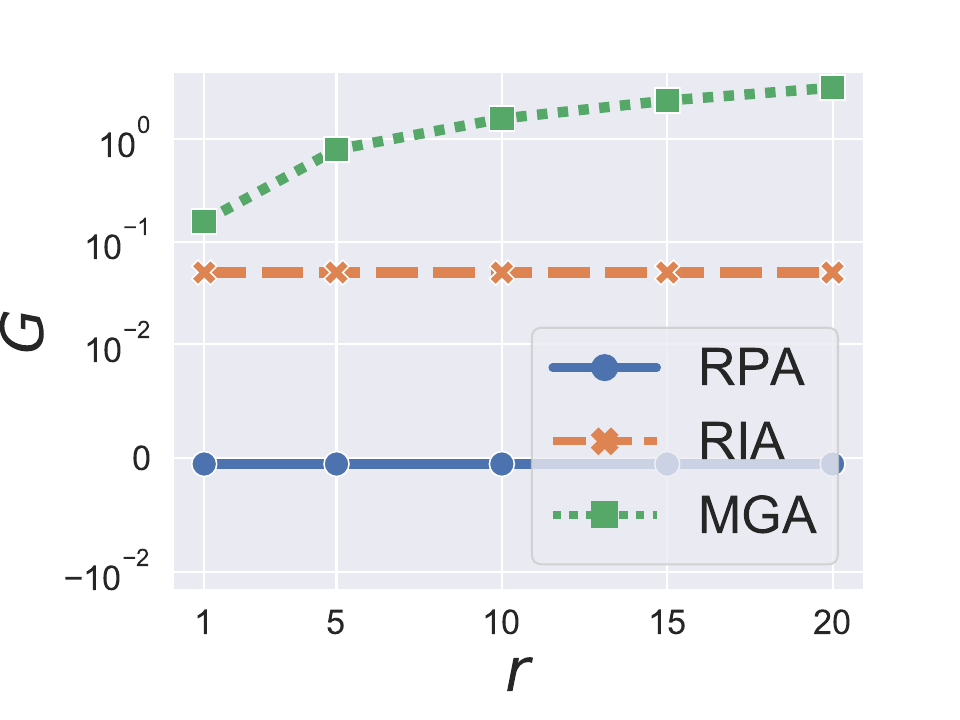}}
\subfloat{\includegraphics[width=0.2\textwidth]{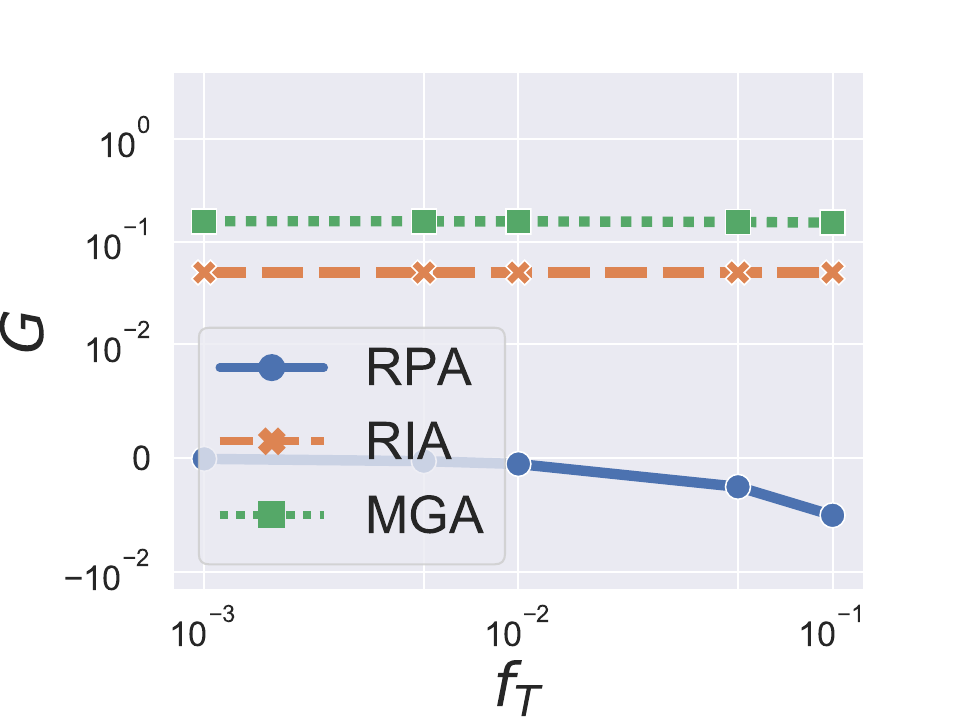}}
\subfloat{\includegraphics[width=0.2\textwidth]{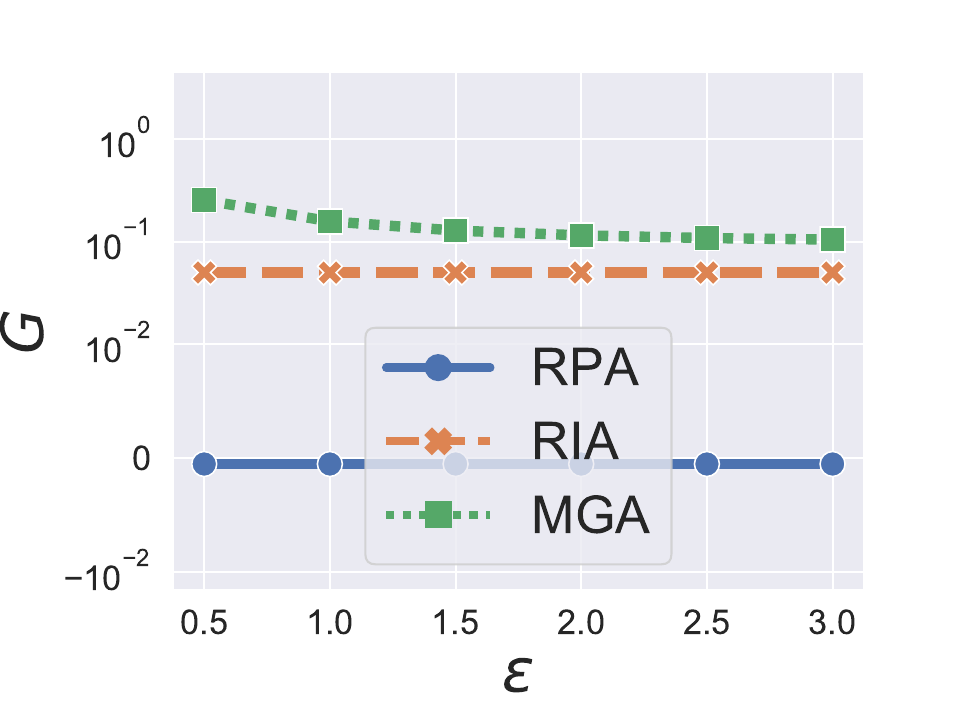}}
\subfloat{\includegraphics[width=0.2\textwidth]{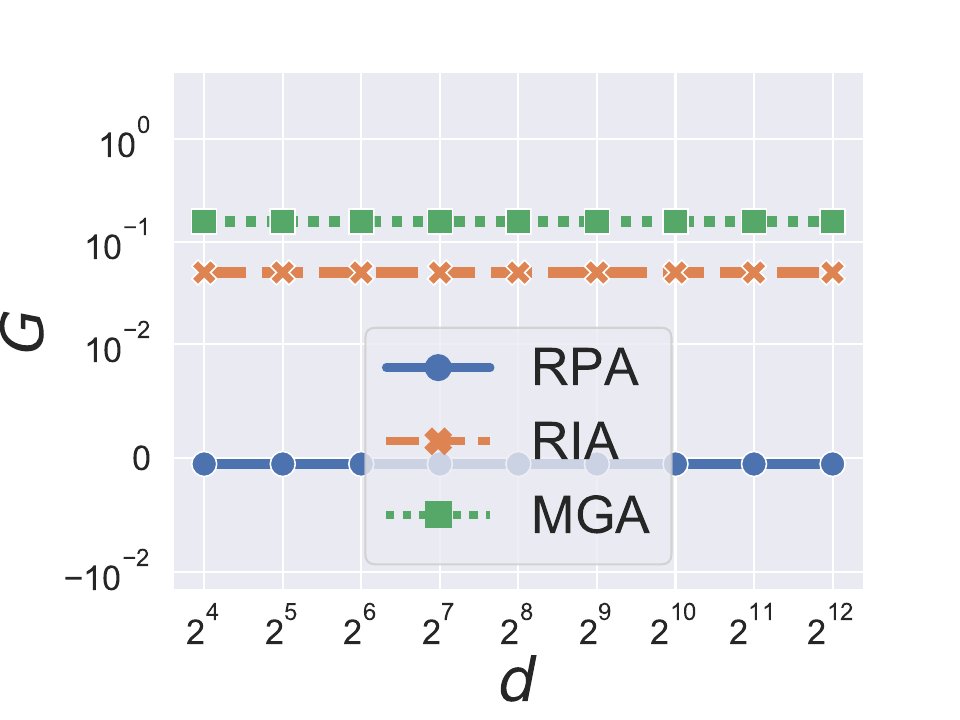}}
\vspace{-4mm}
\setcounter{subfigure}{0}
\subfloat{\includegraphics[width=0.2\textwidth]{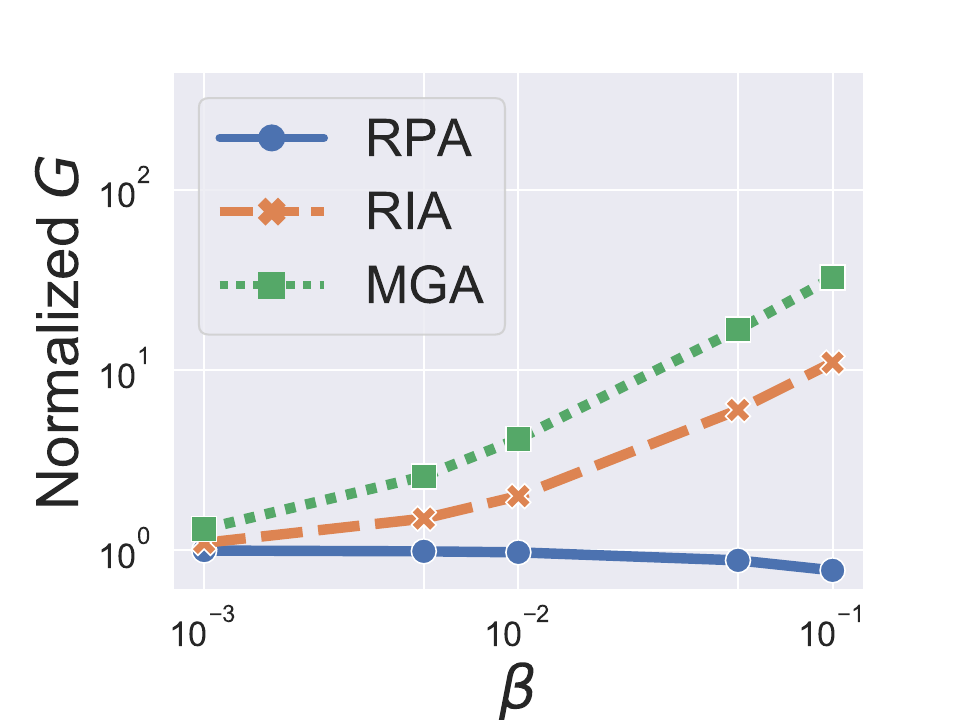}}
\subfloat{\includegraphics[width=0.2\textwidth]{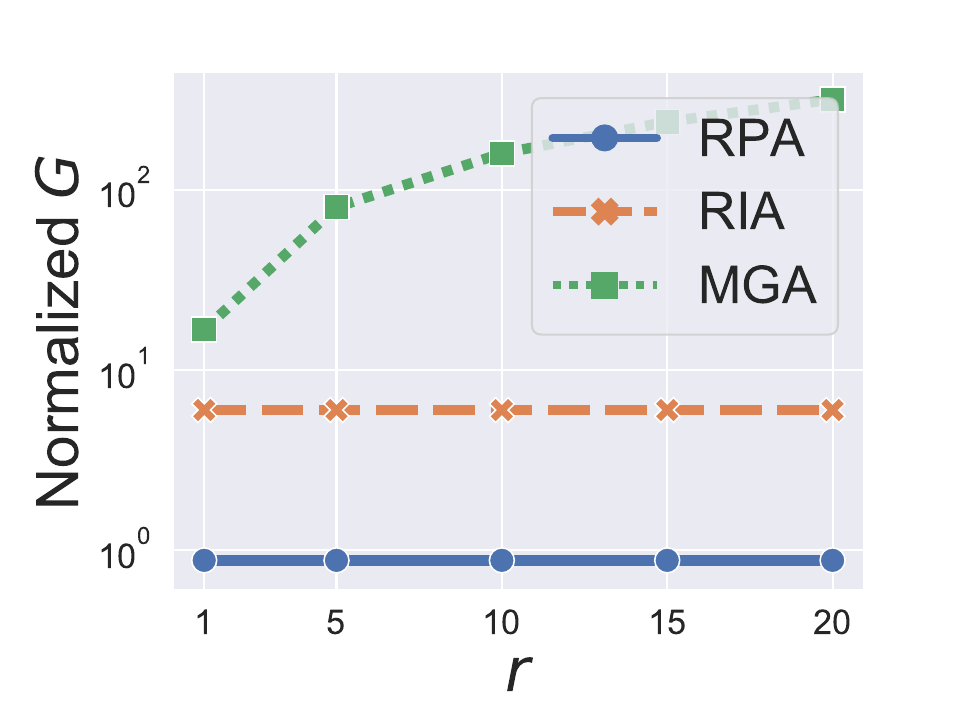}}
\subfloat{\includegraphics[width=0.2\textwidth]{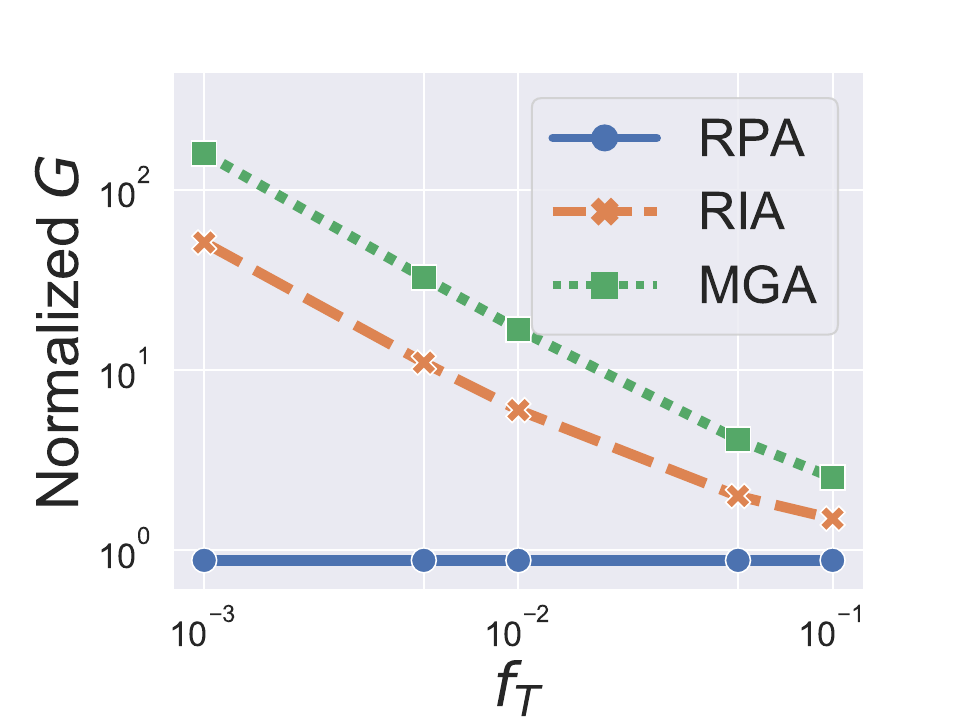}}
\subfloat{\includegraphics[width=0.2\textwidth]{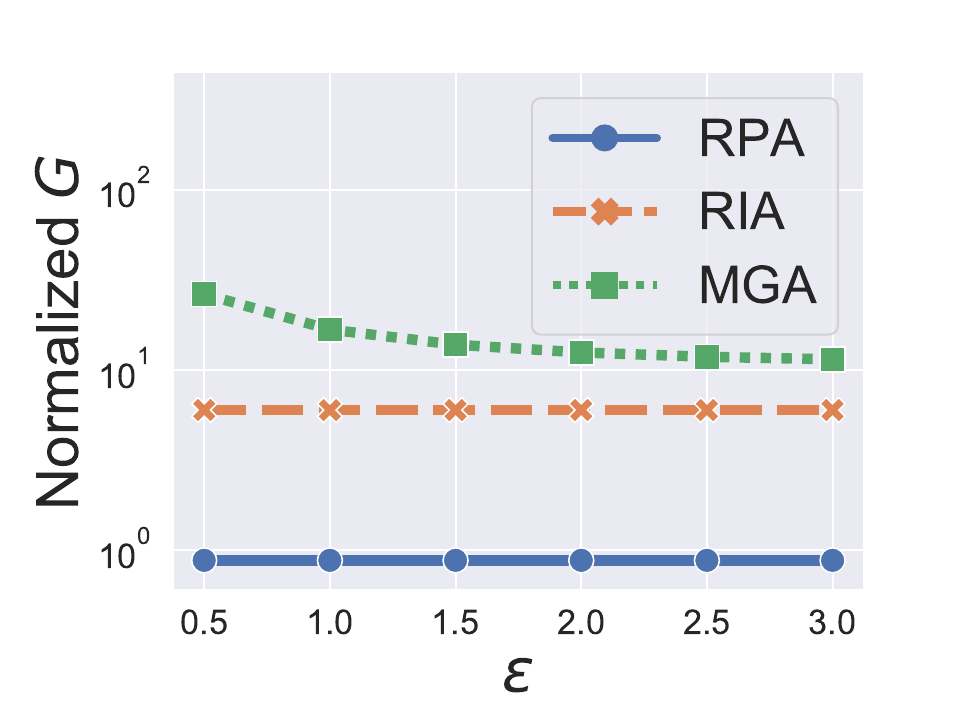}}
\subfloat{\includegraphics[width=0.2\textwidth]{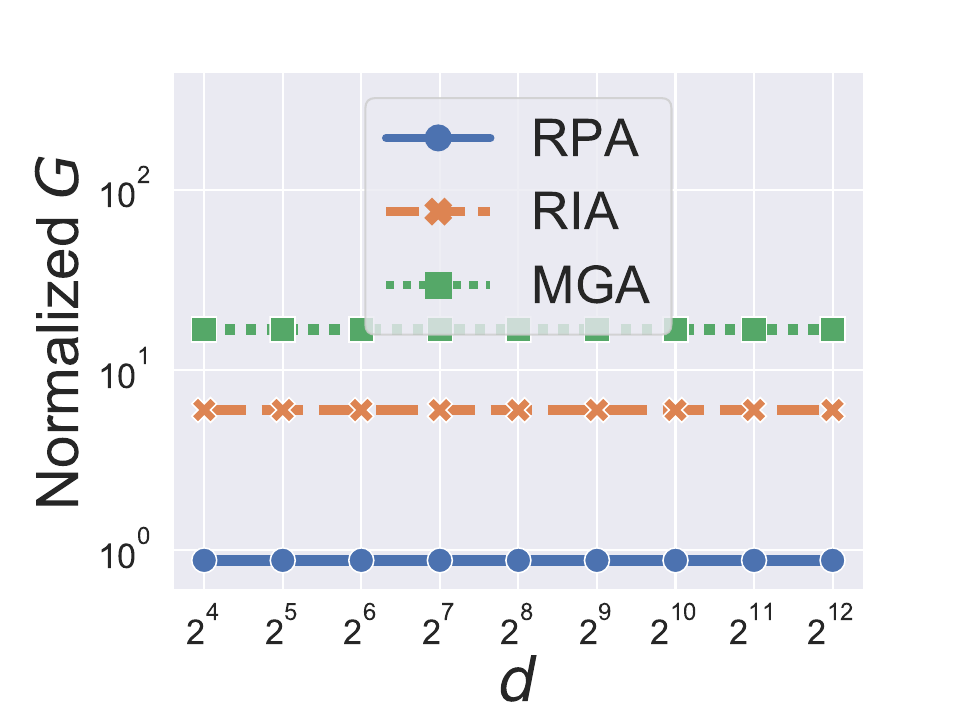}\label{fig:olh_d}}
\vspace{-3mm}
	 \caption{{Impact of different parameters on the overall gains (first row) and normalized overall gains (second row) of the three attacks for OLH.}}
	\label{prameterimpact_olh}
\vspace{-3mm}
\end{figure*}

\myparatight{Datasets}
We evaluate our attacks on three datasets, including a synthetic dataset and two real-world datasets, i.e., Fire \cite{fire}  and IPUMS \cite{ipums}.
\begin{packeditemize}
    \item \myparatight{Zipf} Following previous work on LDP protocols, we generate random data following the Zipf's distribution. In particular, we use the same parameter in the Zipf's distribution  as in \cite{wang2017locally}. By default, we synthesize a dataset with 1,024 items and 1,000,000 users. 
    \item \myparatight{Fire \cite{fire}} The Fire dataset was collected by the San Francisco Fire Department, recording information about calls for service. We filter the records by call type and use the data of type ``Alarms''. We treat the unit ID as the item that each user holds, which results in a total of 244 items and 548,868 users. 
    \item \myparatight{IPUMS \cite{ipums}} The IPUMS dataset contains the US census data over the years. We select the latest data  of 2017 and treat the city attribute as the item each user holds, which  results in a total of  102 items and 389,894 users.
\end{packeditemize}

\begin{table}[!t]
    \centering
    \begin{tabular}{|c|c|} \hline
         Parameter& Default setting  \\ \hline
         $\beta$ & 0.05 \\ \hline
          $r$ & 1  \\ \hline
          $f_T$ & 0.01  \\ \hline
	$\epsilon$ & 1\\ \hline
	$k$ & 20 \\ \hline
	$g$ & 10 \\ \hline
	    \end{tabular}
    \caption{Default parameter settings.}
    \label{tab:dft_st}
 \vspace{-4mm}
\end{table}

\myparatight{Parameter setting} For frequency estimation, the overall gains of our attacks may depend on $\beta$ (the fraction of fake users), $r$ and $f_T$ (the number of target items and their true frequencies),  $\epsilon$ (privacy budget), and $d$ (number of items in the domain). For heavy hitter identification, the success rates of our attacks  further depend on $k$ (the number of items identified as heavy hitters) and $g$ (the group size used by the PEM protocol). Table~\ref{tab:dft_st} shows the default settings for these parameters, which we will use in our experiments unless otherwise mentioned. We will study the impact of each parameter, while fixing the remaining parameters to their default settings. Moreover, we use $d'=\ceil{e^{\epsilon}+1}$ in OLH as $d'$ is an integer. 

\begin{figure*}[!t]
	 \centering
\subfloat{\includegraphics[width=0.2\textwidth]{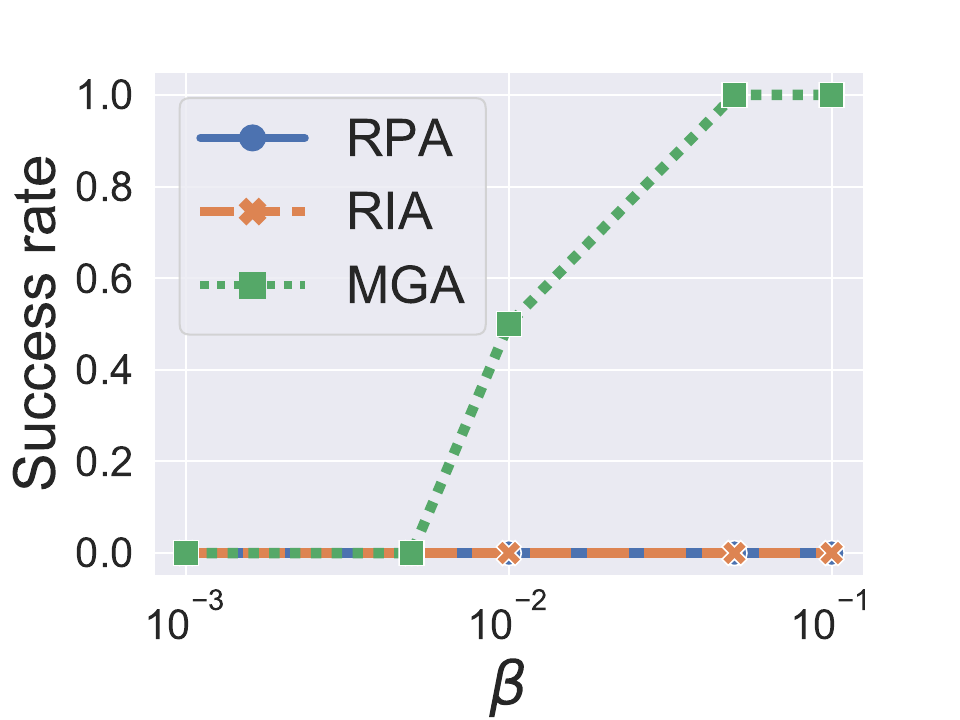}}
\subfloat{\includegraphics[width=0.2\textwidth]{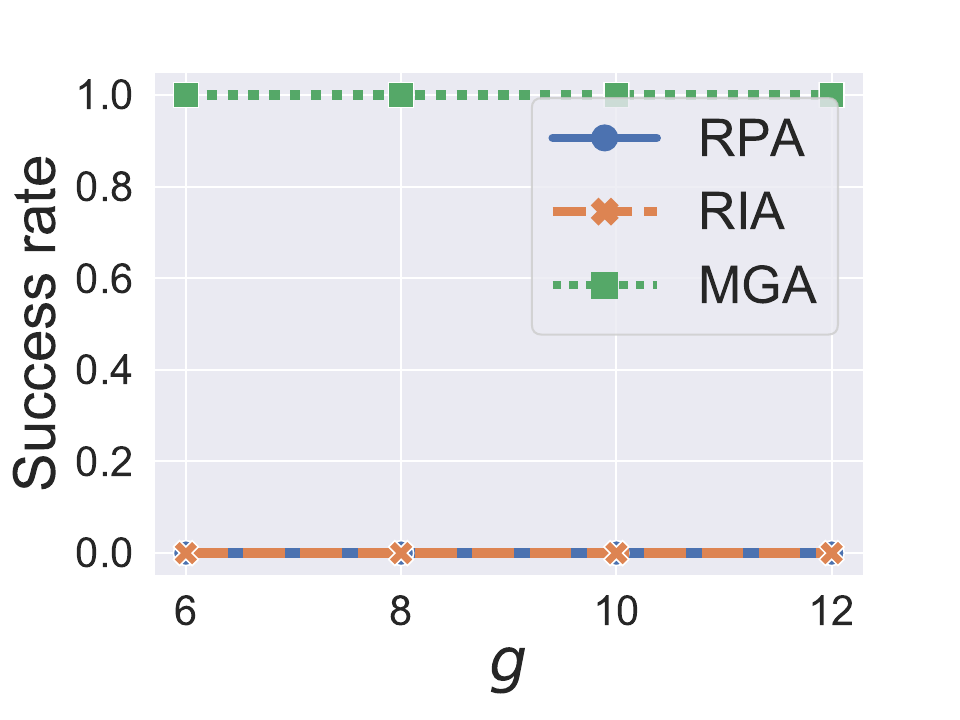}}
\subfloat{\includegraphics[width=0.2\textwidth]{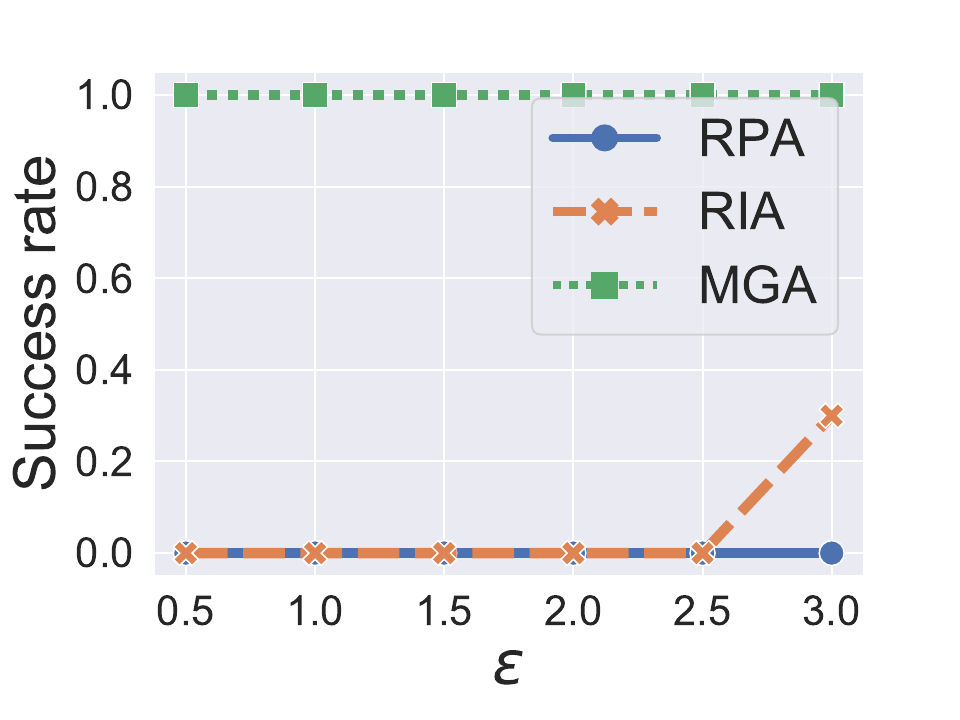}}
\subfloat{\includegraphics[width=0.2\textwidth]{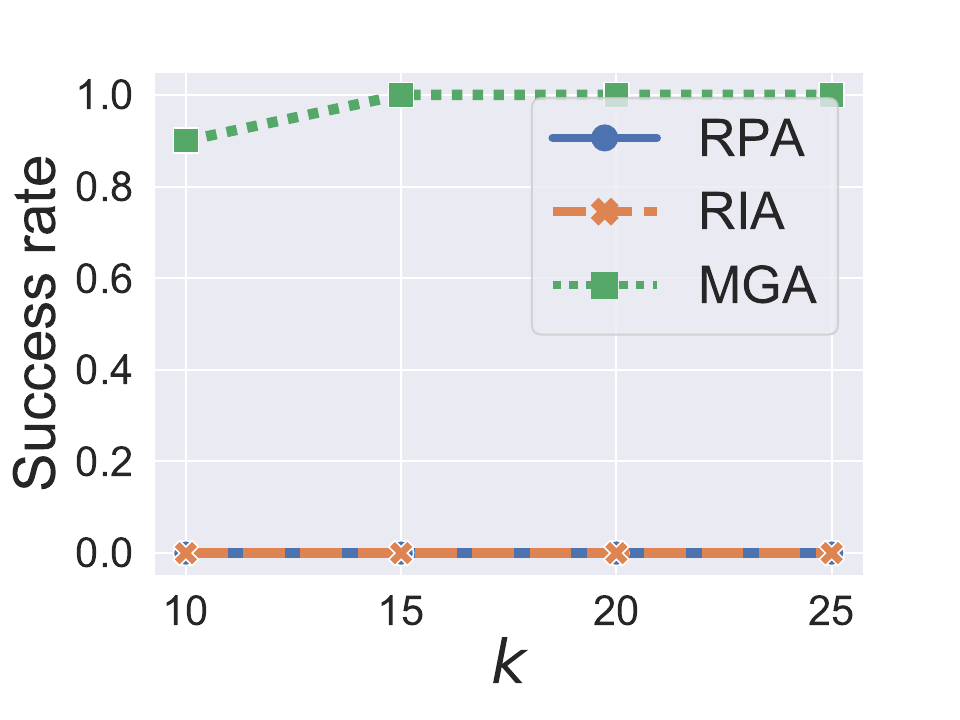}}
\subfloat{\includegraphics[width=0.2\textwidth]{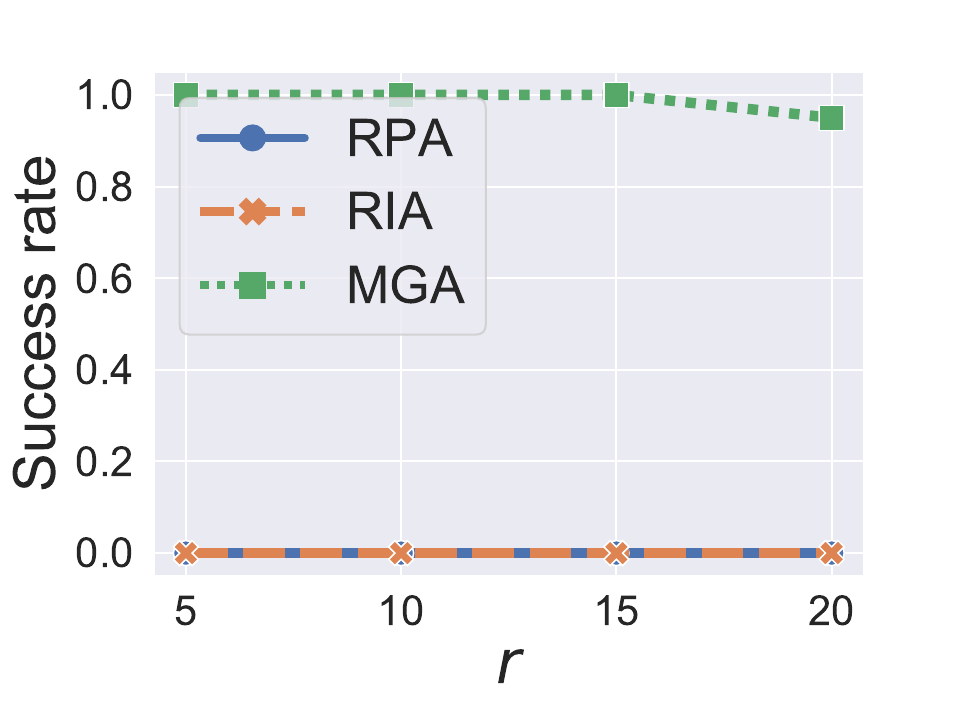}}
\vspace{-4mm}

\subfloat{\includegraphics[width=0.2\textwidth]{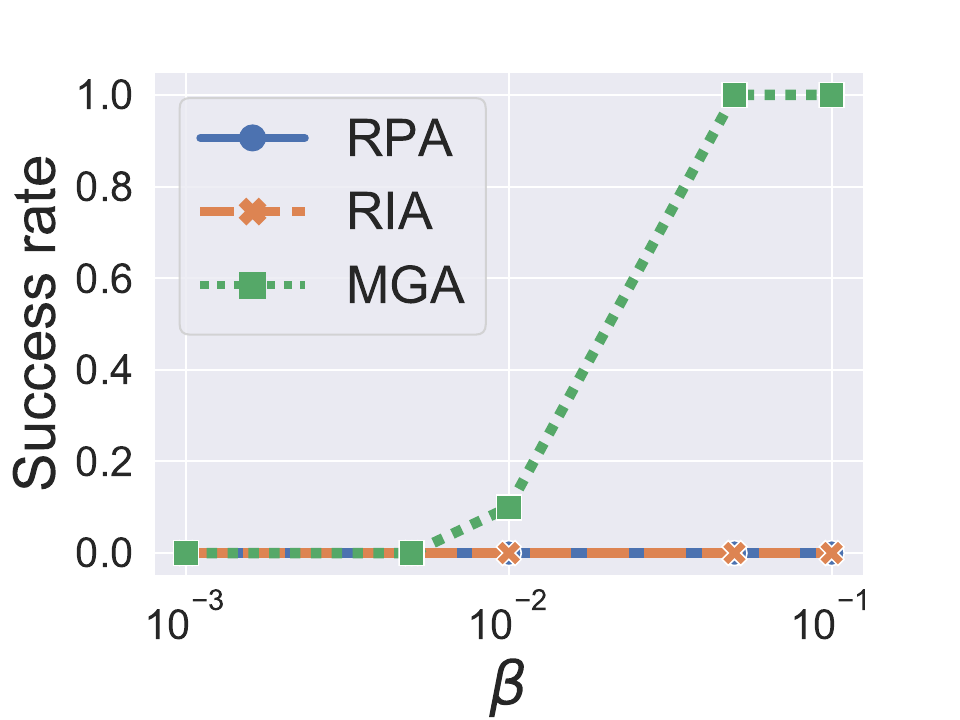}}
\subfloat{\includegraphics[width=0.2\textwidth]{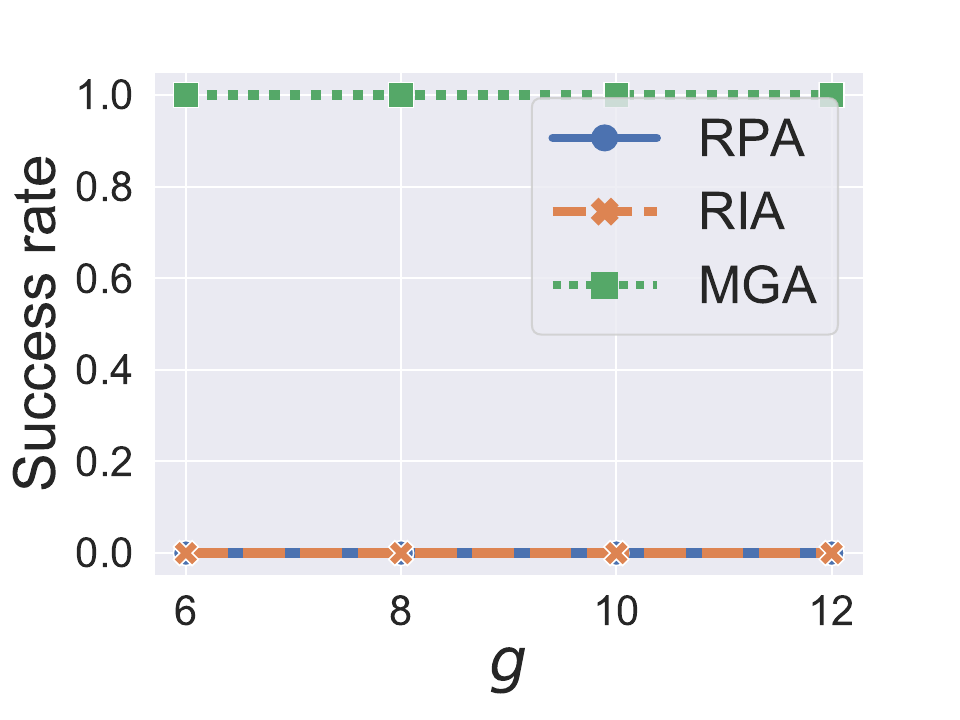}}
\subfloat{\includegraphics[width=0.2\textwidth]{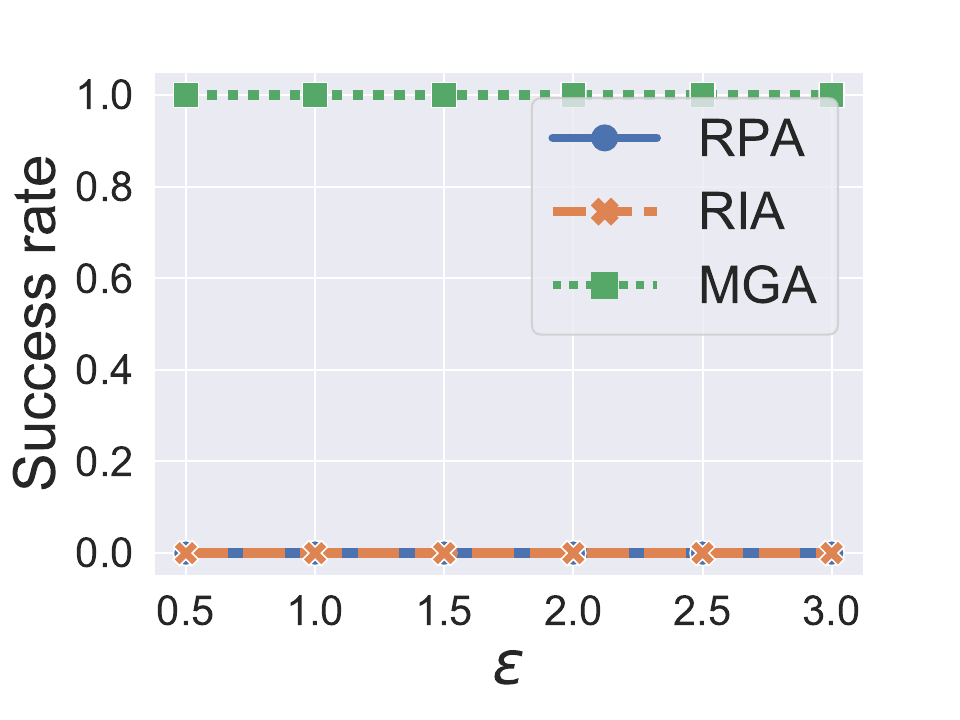}}
\subfloat{\includegraphics[width=0.2\textwidth]{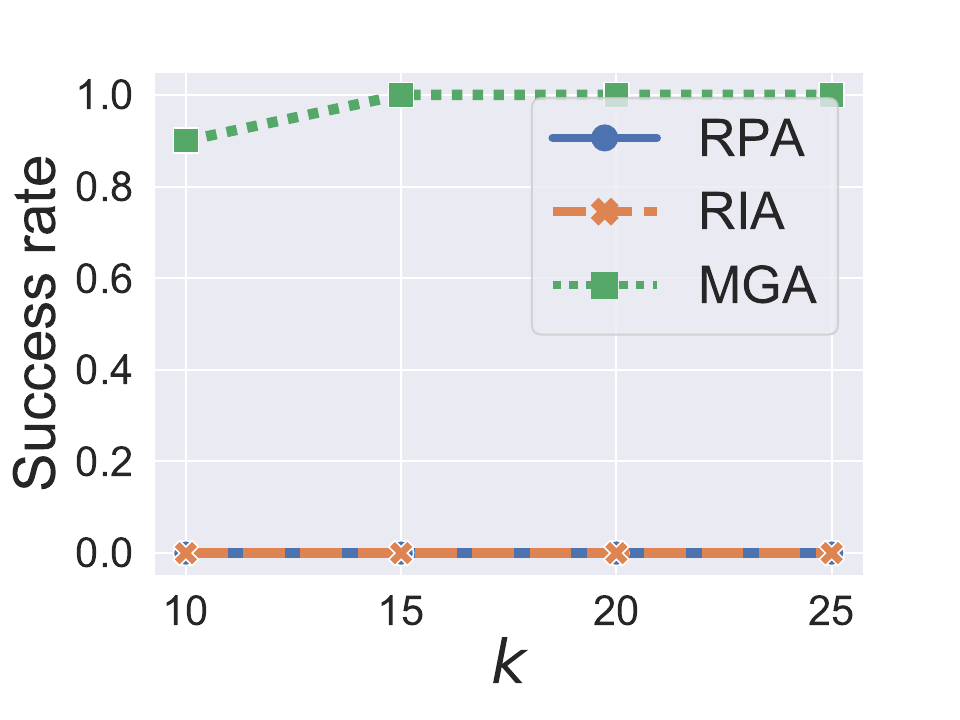}}
\subfloat{\includegraphics[width=0.2\textwidth]{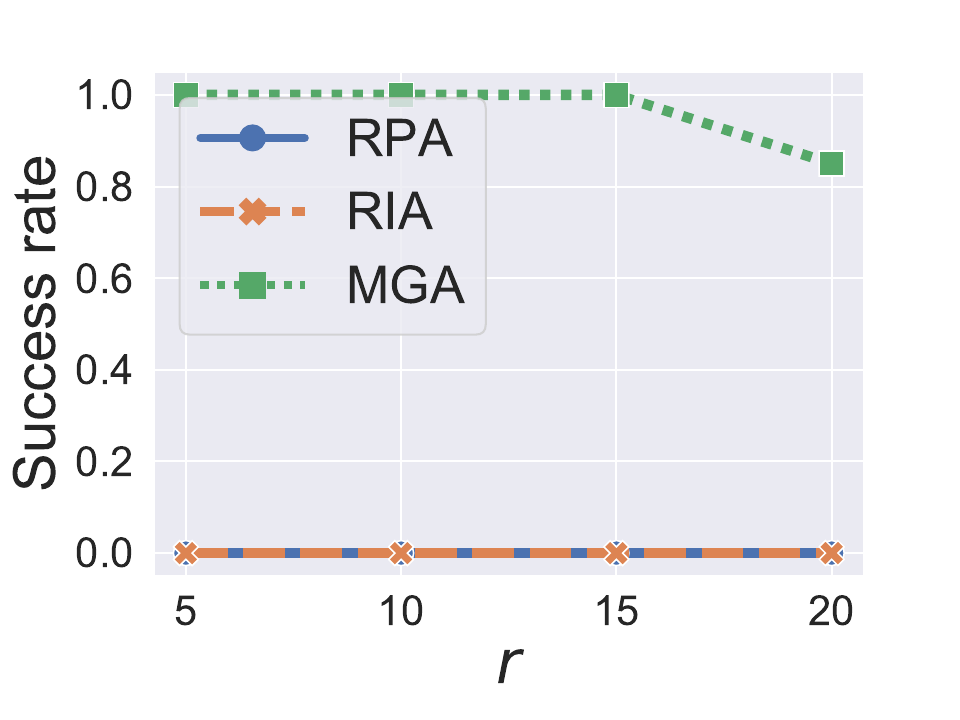}}
\vspace{-4mm}

\subfloat{\includegraphics[width=0.2\textwidth]{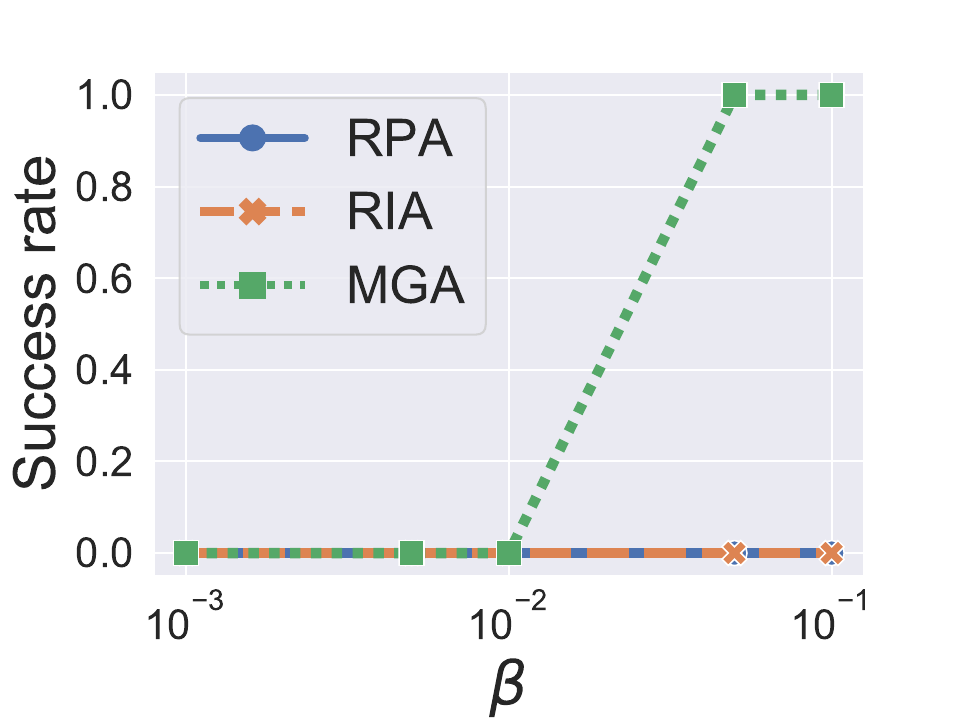}}
\subfloat{\includegraphics[width=0.2\textwidth]{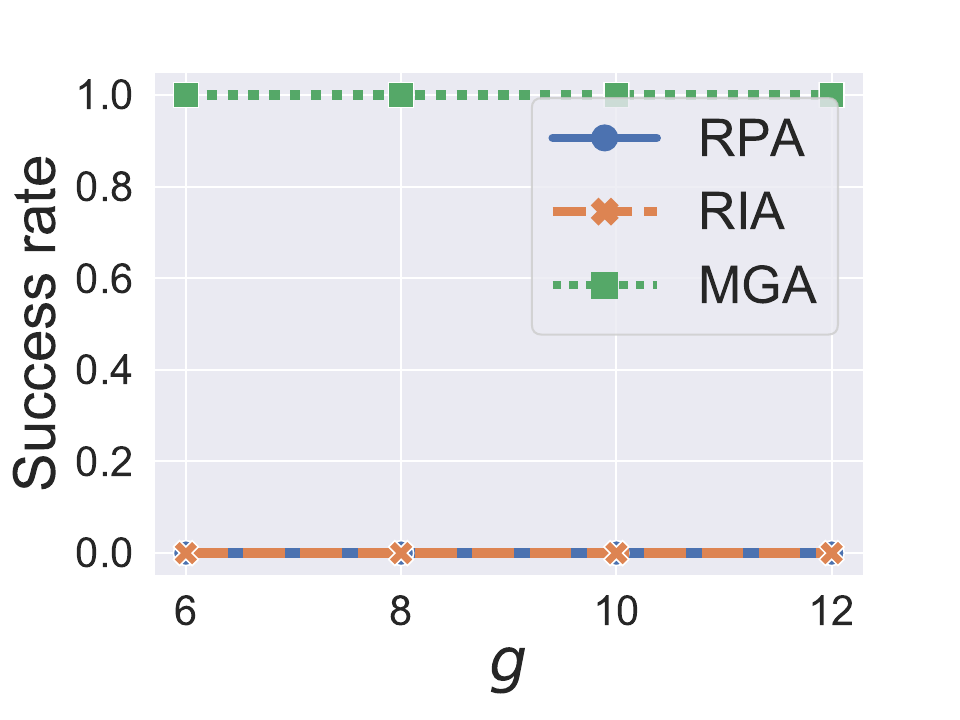}}
\subfloat{\includegraphics[width=0.2\textwidth]{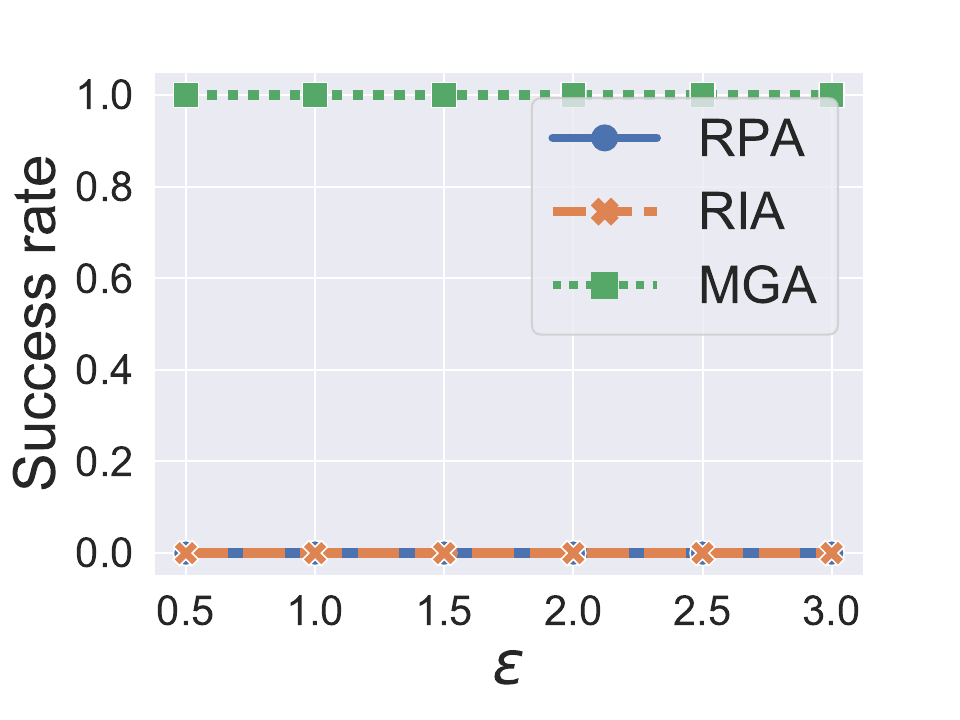}}
\subfloat{\includegraphics[width=0.2\textwidth]{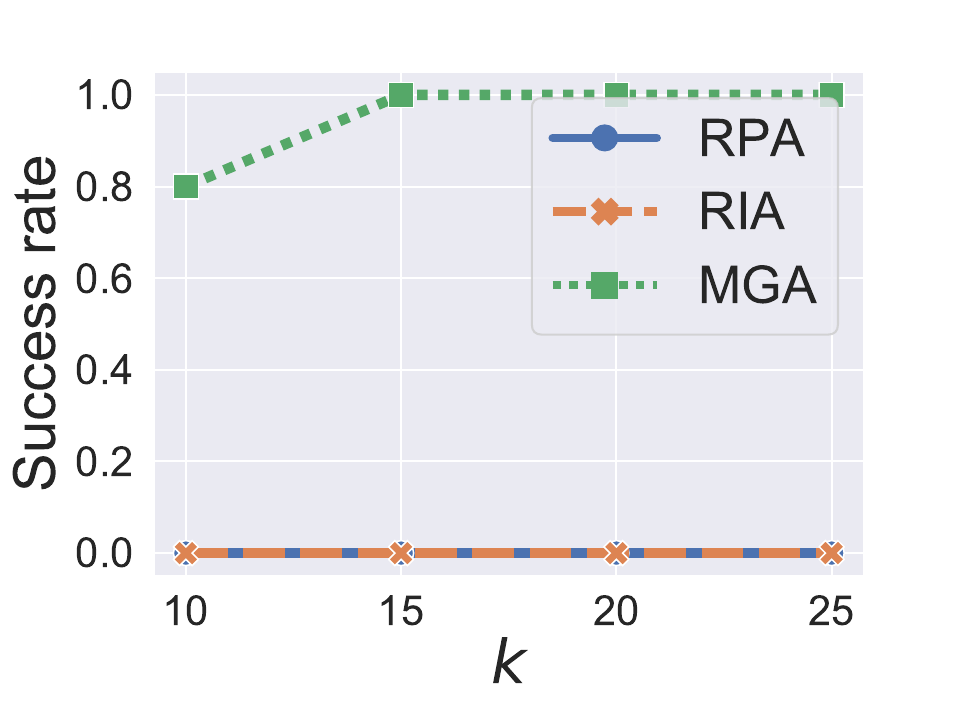}}
\subfloat{\includegraphics[width=0.2\textwidth]{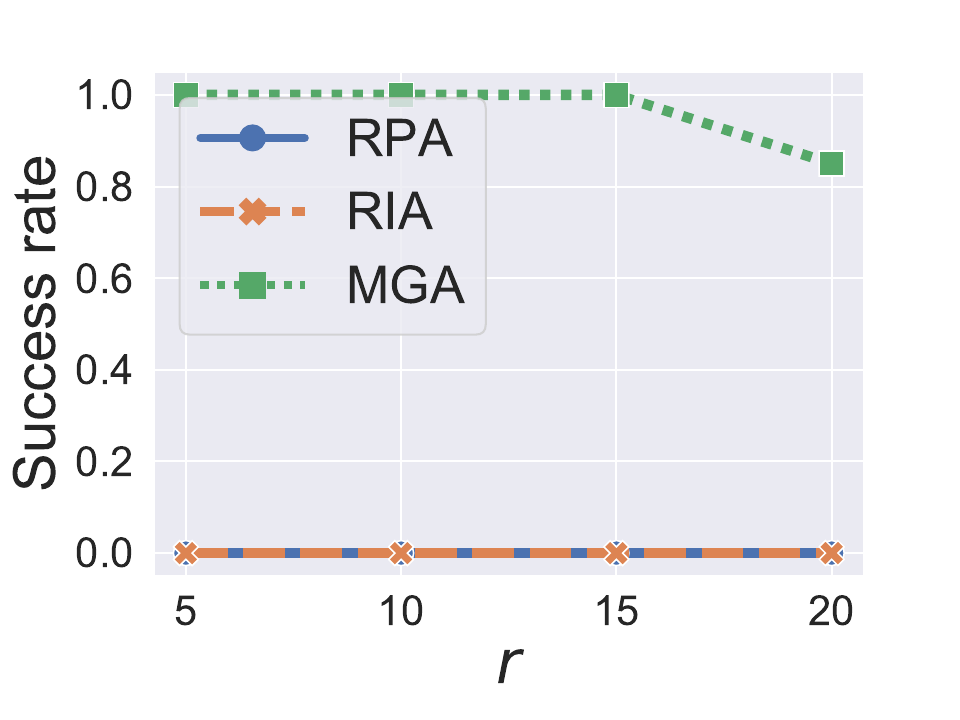}}

\vspace{-2mm}
	 \caption{Impact of different parameters on the success rates of the three attacks for PEM (heavy hitter identification protocol). The first row is on Zipf, the second row is on Fire, and the third row is on IPUMS.}\label{fig:hh_synth}
	 \vspace{-4mm}
\end{figure*}

\subsection{Results for Frequency Estimation}\label{sec:fe_result}
\vspace{-2mm}

\myparatight{Impact of different parameters} Table~\ref{tab:exp_gain} shows the theoretical overall gains of the three attacks for the kRR, OUE, and OLH protocols. We use these theoretical results to study the impact of each parameter. {\Cref{prameterimpact_krr,prameterimpact_oue,prameterimpact_olh} show the impact of different parameters on the overall gains and \emph{normalized overall gains}. A normalized overall gain is the ratio between the total frequencies of the target items after and before an attack, i.e., $(G+f_T)/f_T$, where $f_T$ is the total true frequencies of the target items.} We observe that MGA outperforms RIA, which outperforms RPA or achieves similar {(normalized)} overall gains with RPA. The reason is that MGA is an optimization-based attack, RIA considers information of the target items, and RPA does not consider  information about the target items. Next, we focus our analysis on MGA since it is the strongest attack.  The {(normalized)} overall gains of MGA increase as the attacker injects more fake users, the attacker promotes more target items (except the kRR protocol), or the privacy budget $\epsilon$ becomes smaller (i.e., security-privacy tradeoffs). {The (normalized) overall gain of MGA decreases as the total true frequency of the target items (i.e., $f_T$) increases, though the decrease of the overall gain is marginal.} The {(normalized)} overall gain of MGA increases for kRR but keeps unchanged for OUE and OLH as $d$ increases. {We note that, for a given set of target items (i.e., $f_T$ is given), the trend of normalized overall gain is the same as that of the overall gain with respect to parameters $\beta$, $r$, $\epsilon$, and $d$. Therefore, in the rest of the paper, we focus on overall gain for simplicity.} 

\myparatight{Measuring RIA and MGA for OLH}  The theoretical overall gain of RIA for OLH is derived based on the ``perfect'' hashing assumption, i.e., an item is hashed to a value in the hash domain $[d']$ uniformly at random. Practical hash functions may not satisfy this assumption. Therefore, the theoretical overall gain of RIA for OLH may be inaccurate in practice. We use xxhash~\cite{collet2016xxhash} as hash functions to evaluate the gaps between the theoretical and practical overall gains. In particular, Figure~\ref{verifyRIA} compares the theoretical and practical overall gains of RIA for OLH, where 1 item is randomly selected as target item, $\beta=0.05$, and $\epsilon=1$.   We observe that the theoretical and practical overall gains of RIA for OLH are similar. 

Our theoretical overall gain of MGA for OLH is derived based on the assumption that the attacker can find a hash function that hashes all target items to the same value. In practice, we may not be able to find such hash functions within a given amount of time. Therefore, for each fake user, we randomly sample some xxhash hash functions and use the one that hashes the most target items to the same value. Figure~\ref{verifyMGA_app} compares the theoretical and practical overall gains of MGA for OLH on the IPUMS dataset as we sample more hash functions for each fake user, where we randomly select 5 items as target items, i.e., $r=5$. Our results show that the practical overall gains approach the theoretical ones with several hundreds of randomly sampled hash functions when $r=5$. We have similar observations for the other two datasets and thus we omit their results due to the limited space.   

\begin{figure}[!t]
 \vspace{-4mm}
	 \centering
\subfloat[]{\includegraphics[width=0.25\textwidth]{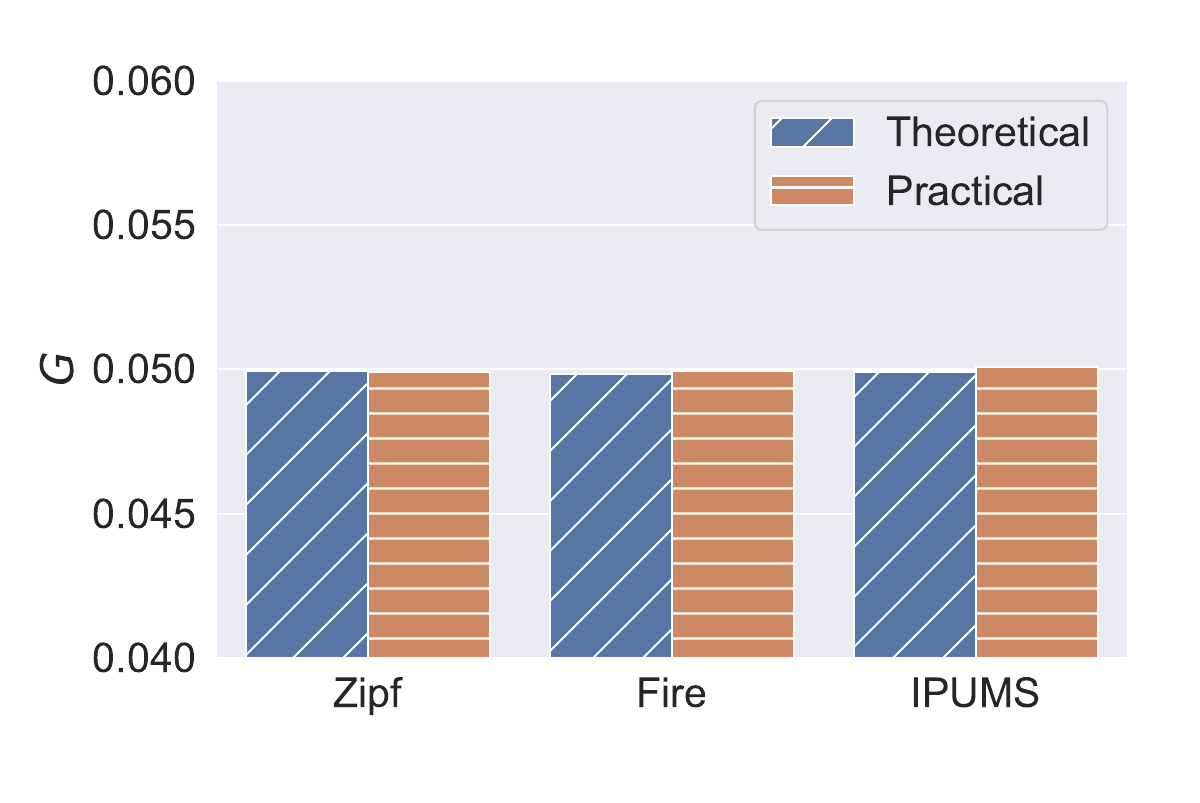}\label{verifyRIA}}
\subfloat[]{\includegraphics[width=0.25\textwidth]{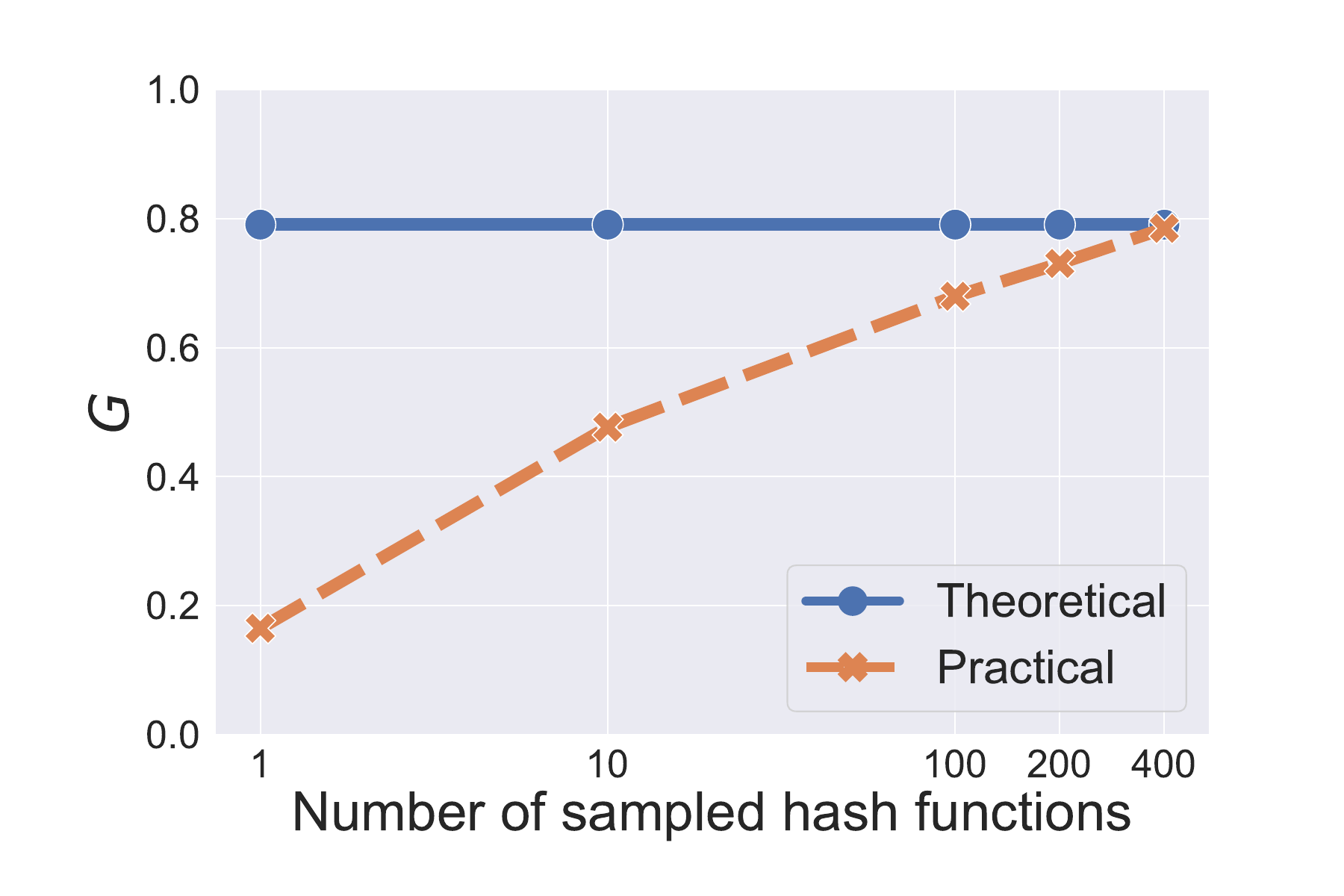}\label{verifyMGA_app}}
 \vspace{-2mm}
	 \caption{(a) Theoretical and practical overall gains of RIA for OLH. (b) Theoretical and practical overall gains of MGA for OLH on the IPUMS dataset as we sample more hash functions for each fake user, where $r=5$. }
 \vspace{-4mm}
\end{figure}

\subsection{Results for Heavy Hitter Identification}

Figure~\ref{fig:hh_synth} shows the empirical results of applying our three attacks, i.e., RPA, RIA and MGA, to PEM  on the Zipf, Fire, and IPUMS datasets, respectively. By default, we randomly select $r=10$ target items that are not identified as top-$k$ heavy hitters by PEM before attack and use the three attacks to promote them. Default values for the other parameters are identical to those in Table \ref{tab:dft_st}. The success rate of an attack is calculated as the fraction of target items that  appear in the estimated top-$k$ heavy hitters. The results show that our MGA attacks can effectively  compromise the PEM protocol. In particular, we observe that MGA only needs about 5\% of fake users to achieve a 100\% success rate when $r=10$ and $k=20$. In fact, with only 5\% of fake users, we can promote 10 target items to be in the top-15 heavy hitters, or promote  15 target items to be in the top-20 heavy hitters. However, RPA and RIA are ineffective. Specifically, even if we inject 10\% of fake users, neither RPA nor RIA can successfully promote even one of the target items to be in the top-$k$ heavy hitters.  Moreover, the number of groups $g$ and the privacy budget $\epsilon$  have negligible impact on the effectiveness of our attacks. 


\vspace{-1mm}
\section{Countermeasures}
{We explore three countermeasures. The first countermeasure is to normalize the estimated item frequencies to be a probability distribution,  the second countermeasure is to detect fake users via \emph{frequent itemset mining} of the users' perturbed values and remove the detected fake users before estimating item frequencies, and the third countermeasure is to detect the target item without detecting the fake users when there is only one target item. The three countermeasures are effective in some scenarios. However, our MGA is still effective in other scenarios, highlighting the needs for new defenses against our data poisoning attacks.} 

\vspace{-1mm}
\subsection{Normalization}  
\label{defense:normalization}
The LDP protocols estimate item frequencies using Equation (\ref{aggregate}). Therefore, the estimated item frequencies may not form a probability distribution, i.e., some estimated item frequencies may be negative and they may not sum to 1. For instance, our experimental results in Section~\ref{sec:fe_result} show that the overall gains of MGA may be even larger than 1. Therefore, one natural countermeasure  is to normalize the estimated item frequencies such that each estimated item frequency is non-negative and the estimated item frequencies sum to 1. For instance, one normalization we consider is as follows: the central server first estimates the  frequency $\tilde{f}_v$ for each item $v$ following a LDP protocol (kRR, OUE, or OLH); then the server finds the minimal estimated item frequency $\tilde{f}_{min}$; finally, the server calibrates the estimated frequency for each item $v$ as $\bar{f}_v=\frac{\tilde{f}_v - \tilde{f}_{min}}{\sum_v (\tilde{f}_v - \tilde{f}_{min})}$, where $\bar{f}_v$ is the calibrated frequency. Our overall gain is calculated by the difference between the calibrated frequencies of the target items after and before attack. {We note that there are also other methods to normalize the estimated item frequencies~\cite{jia2019calibrate,wang2019consistent}, which we leave as future work.}   
 Note that the normalization countermeasure is not applicable to heavy hitter identification because normalization does not impact the ranking of items' frequencies. 

\vspace{-1mm}
\subsection{Detecting Fake Users}\label{sec:detection}
RPA and MGA directly craft the perturbed values for fake users, instead of using the LDP protocol to generate the perturbed values from certain items. Therefore, the perturbed values for the fake users may be statistically abnormal. 
We note that it is challenging to detect fake users via statistical analysis of the perturbed values  for the kRR protocol, because the perturbed value of a user is just an item, no matter whether or not the attacker follows the protocol to generate the perturbed value. Therefore, we study detecting fake users in the RPA and MGA attacks for the OUE and OLH protocols. Since PEM iteratively applies OLH, we can also apply  detecting fake users to PEM. 

\myparatight{OUE}
Recall that MGA assigns 1 to all target items and $l$ randomly selected items in the perturbed binary vector for each fake user. Therefore, among the perturbed binary vectors from the fake users, a set of items will always be 1. However, if the perturbed binary vectors follow the OUE protocol, it is unlikely to observe that this set of items are all 1's for a large number of users. Therefore, our idea to detect fake users consists of two steps. In the first step, the server identifies itemsets  that are all 1's in the perturbed binary vectors of a large number of users. In the second step, the server detects fake users if the probability that such large number of users have these itemsets of all 1's is small, when following  OUE.      

{\bf Step I.} In this step, the server identifies {itemsets} that are frequently all 1's among the perturbed binary vectors. Figure~\ref{example_itemset} shows an example  itemset that are all 1's in 3 of the 4 binary vectors. Identifying such  itemsets is also known as \emph{frequent itemset mining}~\cite{agrawal1993mining}.  In our problem, given the perturbed binary vectors from all users,  frequent itemset mining can find the itemsets that are all 1's in at least a certain number of users. Specifically, a frequent itemset mining method produces some tuples $\bm{B}=\{(B, s)|s\geq \tau\}$, where $B$ is an itemset and $s$ is the number of users whose perturbed binary vectors are 1's for all items in $B$. 

\begin{figure}[!t]
	 \centering
\includegraphics[width=0.45\textwidth]{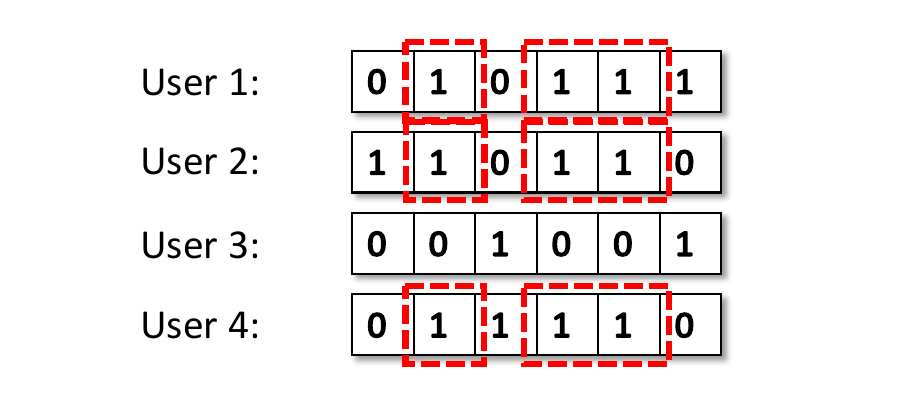}
 \vspace{-6mm}
	 \caption{An example itemset that are all 1's in 3 of the 4 binary vectors. Each column corresponds to an item.}
        \label{example_itemset}
 \vspace{-5mm}
\end{figure}

{\bf Step II.} In this step, we determine whether there are frequent itemsets that are statistically abnormal. Specifically, we predict a tuple $(B, s)\in \bm{B}$ to be abnormal if $s\geq \tau_z$, where $z=|B|$ is the size of the itemset $B$. When an itemset is predicted to be abnormal, we predict the items as the target items and the users whose perturbed binary vectors are 1's for all items in the itemset to be fake. The threshold $\tau_z$ achieves a tradeoff between \emph{false positive rate} and \emph{false negative rate} of detecting fake users. Specifically, when $\tau_z$ is larger, a smaller number of genuine users are predicted as fake (i.e., a smaller false positive rate), while a larger number of fake users are not  detected (i.e., a larger false negative rate). Therefore, a key challenge is how to select the threshold $\tau_z$. We propose to select the threshold such that the false positive rate is at most $\eta$. Specifically, given a threshold $\tau_z>(n+m)pq^{z-1}$, we can derive an upper bound of the false positive rate as $\frac{(n+m)pq^{z-1}(1-pq^{z-1})}{[\tau_z-(n+m)pq^{z-1}]^2}$ (see Appendix~\ref{proof:tau_oue} for details). Therefore, to guarantee that the false positive rate is at most $\eta$ and achieve a small false negative rate, we select the smallest $\tau_z$ that satisfies $\tau_z>(n+m)pq^{z-1}$ and  $\frac{(n+m)pq^{z-1}(1-pq^{z-1})}{[\tau_z-(n+m)pq^{z-1}]^2} \le\eta$. We set $\eta=0.01$ in our experiments. 

\myparatight{OLH}
To attack the OLH protocol, MGA searches a hash function for each fake user that hashes  as many target items to the same value as possible. Suppose we construct a $d$-bit binary vector $\bm{y}$ for each user with a tuple $(H, a)$ such that $y_v=1$ if and only if $H(v)=a$. Then, the target items will be 1's in the binary vectors for a large number of users. Therefore, we can also leverage the method to detect fake users in OLH. Specifically, in Step I, we find frequent itemsets in the constructed binary vectors. In Step II, we predict an itemset $B$ to be abnormal if its number of occurrences $s$ among the $n+m$ binary vectors is larger than a threshold $\tau_z$, where $z=|B|$ is the size of the itemset. Like OUE, we select the threshold $\tau_z$ such that the false positive rate is at most $\eta$. Specifically, we select the smallest $\tau_z$ that satisfies $I(q^{z-1};\tau_z, n+m-\tau_z+1)\le \eta$, where $I$ is the \emph{regularized incomplete beta function}~\cite{abramowitz1965handbook}. $I(q^{z-1};\tau_z, n+m-\tau_z+1)$ is the false positive rate for a given $\tau_z$  (see Appendix \ref{proof:tau_oue} for details).

\myparatight{PEM} The heavy hitter identification protocol PEM iteratively applies OLH to identify heavy hitters. Therefore, we can apply the frequent itemset mining based detection method to detect fake users in PEM. Specifically, in each iteration of PEM, the central server applies the detection method  in OLH to detect fake users in PEM; and the central server removes the predicted fake users before computing the top-$k$ prefixes. 

{
\subsection{Conditional Probability based Detection} 
\label{sec:onetarget}
The frequent itemset mining based detection method above requires at least two target items as it identifies the abnormal frequent itemset as the target items. When there is only one target item, i.e., $r=1$, it fails to detect the target item. Therefore, we discuss another method to detect the target item when $r=1$, which leverages conditional probabilities. Note that this method does not detect fake users. 

\myparatight{OUE} Suppose $\bm{y}$ is a user's perturbed binary vector. With a little abuse of notation, we denote the $j$-th bit of $\bm{y}$ as $y_j$. Given the target item $t$ and a random item $j$, we have the following equations under our MGA attacks to OUE:
{\small
\begin{align}
	\text{Pr}(y_j=y_t=1) &= \text{Pr}(v=t)\cdot\text{Pr}(y_j=y_t=1|v=t)\nonumber\\
	                    &+ \text{Pr}(v=j)\cdot\text{Pr}(y_j=y_t=1|v=j)\nonumber\\
						&+\text{Pr}(v\neq t, j)\cdot\text{Pr}(y_j=y_t=1|v\neq t,j)\nonumber\\
						&+\text{Pr}(\text{fake})\cdot\text{Pr}(y_j=y_t=1|\text{fake})\\
						&= \frac{nf_t}{n+m}\cdot  pq+ \frac{nf_j}{n+m}\cdot  pq\nonumber\\
						&+ \frac{n(1-f_t-f_j)}{n+m}\cdot q^2 + \frac{m}{n+m}\cdot \frac{l}{d-1},\\
	\text{Pr}(y_t=1) &= \text{Pr}(v=t)\cdot\text{Pr}(y_t=1|v=t) \nonumber\\
	               &+ \text{Pr}(v\neq t)\cdot\text{Pr}(y_t=1|v\neq t)\nonumber\\
	               & + \text{Pr}(\text{fake})\cdot\text{Pr}(y_t=1|\text{fake})\\
						&= \frac{nf_t}{n+m}\cdot p + \frac{n(1-f_t)}{n+m}\cdot  q + \frac{m}{n+m}, \\
\text{Pr}(y_j=1|y_t=1)=&\frac{\text{Pr}(y_j=y_t=1)}{\text{Pr}(y_t=1)}\\
	= &q + \frac{f_jq(p-q) + \frac{\beta}{1-\beta}\cdot(\frac{l}{d-1}-q)}{f_tp+(1-f_t)q+\frac{\beta}{1-\beta}}.
	\label{eq:target}
\end{align}}%
Given a non-target item $u\neq j$, we have the following:
{\small{\begin{align}
&\quad\text{Pr}(y_j=y_u=1) \nonumber\\
	 &= \text{Pr}(v=u)\cdot\text{Pr}(y_j=y_u=1|v=u)\nonumber\\
	                    &+ \text{Pr}(v=j)\cdot\text{Pr}(y_j=y_u=1|v=j)\nonumber\\
						&+\text{Pr}(v\neq j, u)\cdot\text{Pr}(y_j=y_u=1|v\neq j,u)\nonumber\\
						&+\text{Pr}(\text{fake})\cdot\text{Pr}(y_j=y_u=1|\text{fake})\\
						&= \frac{nf_u}{n+m}\cdot  pq+ \frac{nf_j}{n+m}\cdot  pq+ \frac{n(1-f_u-f_j)}{n+m}\cdot q^2 \nonumber\\
						&+ \frac{m}{n+m}\cdot \frac{l}{d-1}\cdot\frac{l-1}{d-2}, \\
	&\quad\text{Pr}(y_u=1) \nonumber\\
	&= \text{Pr}(v=u)\cdot\text{Pr}(y_u=1|v=u)\nonumber\\
						&+ \text{Pr}(v\neq u)\cdot\text{Pr}(y_u=1|v\neq u)\nonumber\\
						&+ \text{Pr}(\text{fake})\cdot\text{Pr}(y_u=1|\text{fake})\\
						&= \frac{nf_u}{n+m}\cdot  p + \frac{n(1-f_u)}{n+m}\cdot  q + \frac{m}{n+m}\cdot\frac{l}{d-1}, \\
	&\quad\text{Pr}(y_j=1|y_u=1) \nonumber\\
	&=\frac{\text{Pr}(y_j=y_u=1)}{\text{Pr}(y_u=1)}\\
	&= q + \frac{f_jq(p-q) +\frac{\beta}{1-\beta}\cdot\frac{l}{d-1}\cdot(\frac{l-1}{d-2}-q)}{f_up+(1-f_u)q+\frac{\beta}{1-\beta}\cdot\frac{l}{d-1}}.
	\label{eq:non-target}
\end{align}}}%

\begin{table}[!t]\renewcommand{\arraystretch}{1}
\setlength{\tabcolsep}{4pt}
\centering
\subfloat[$\beta=0.05$\label{tab:beta0.05}]{
    \begin{tabular}{|c|c|c|c|c|c|c|c|c|}	
	\hline
	\small{$f_j$} & \small{$0.01$} & \small{$0.01$} & \small{$0.1$}& \small{$0.1$}& \small{$0.5$} & \small{$0.5$} & \small{$0.9$} & \small{$0.9$}\\
	\hline
         	\small{$f_t$} & \small{$0$} & \small{$0.01$} & \small{$0$} & \small{$0.01$} & \small{$0$} & \small{$0.01$} & \small{$0$} & \small{$0.01$}\\
	\hline
	\small{$\hat{f}_u$} & \small{$0.25$} & \small{$0.26$} & \small{$0.18$} & \small{$0.19$} & \small{$0.18$} & \small{$0.18$} & \small{$0.18$} & \small{0.19}\\
	\hline 
    \end{tabular}
}\\
\vspace{-2mm}
\subfloat[$\beta=0.2$\label{tab:beta0.2}]{
    \begin{tabular}{|c|c|c|c|c|c|c|c|c|}	
	\hline
	\small{$f_j$} & \small{$0.01$}  & \small{$0.01$} & \small{$0.1$}  & \small{$0.1$}& \small{$0.5$} & \small{$0.5$} & \small{$0.9$} & \small{$0.9$}\\
	\hline
         	\small{$f_t$} & \small{$0$} & \small{$0.01$} & \small{$0$} & \small{$0.01$} & \small{$0$} & \small{$0.01$} & \small{$0$} & \small{$0.01$}\\
	\hline
	\small{$\hat{f}_u$} & \small{1.8} & \small{1.8} & \small{$0.87$} & \small{$0.88$} & \small{$0.82$} & \small{$0.84$} & \small{$0.82$} & \small{0.83}\\
	\hline 
    \end{tabular}
}
\vspace{-2mm}
    \caption{{Threshold $\hat{f}_u$ for different $f_j$ and $f_t$.} } 
    \label{tab:results}
\vspace{-2mm}
\end{table}

Suppose both $t$ and $u$ are among the top-$N$ items with the largest estimated frequencies. The true frequency $f_t$ for the target item $t$ is small, since our attack aims to promote an unpopular item. We have $\text{Pr}(y_j=1|y_t=1) <\text{Pr}(y_j=1|y_u=1)$ when $f_u$ is smaller than a threshold $\hat{f}_u$. Table \ref{tab:results} shows such threshold for different values of $f_j$ and $f_t$, where $\beta=0.05$ and $\beta=0.2$.  We observe that $f_u$ is highly likely smaller than the threshold $\hat{f}_u$ for a variety of $f_j$ when $\beta=0.2$, as $\hat{f}_u$ is very large (sometimes even larger than 1). This observation shows that if we randomly pick an item as $j$ and compare the conditional probabilities $\text{Pr}(y_j=1|y_u=1)$ for each item $u$ in the top-$N$ items, then we can detect the item with the smallest conditional probability as the target item. However, when $\beta=0.05$, the effectiveness of such detection method depends on the true frequencies $f_j$ and $f_u$. 

\myparatight{OLH} The conditional probability based detection method can also be used for OLH when $r=1$. Specifically, we can construct a $d$-bit binary vector $\bm{y}$ for each user whose $v$th entry $y_v=1$ if and only if $H(v)=a$, where $(H,a)$ is the user's perturbed value. Assuming the hash function hashes an item uniformly at random to a hash value in $[d']$. Then, we have the following conditional probabilities:
{\small{\begin{align}
		\label{eq:olh-target}
\text{Pr}(y_j=1|y_t=1)
	&= q + \frac{f_jq(p-q)}{f_tp+(1-f_t)q+\frac{\beta}{1-\beta}},\\
	\text{Pr}(y_j=1|y_u=1)
	&= q + \frac{f_jq(p-q)}{f_up+(1-f_u)q+\frac{\beta}{1-\beta}\cdot q}.
	\label{eq:olh-non-target}
\end{align}}}%
}

 \begin{figure*}[!t]
	 \centering
\subfloat[OUE]{\includegraphics[width=0.2\textwidth]{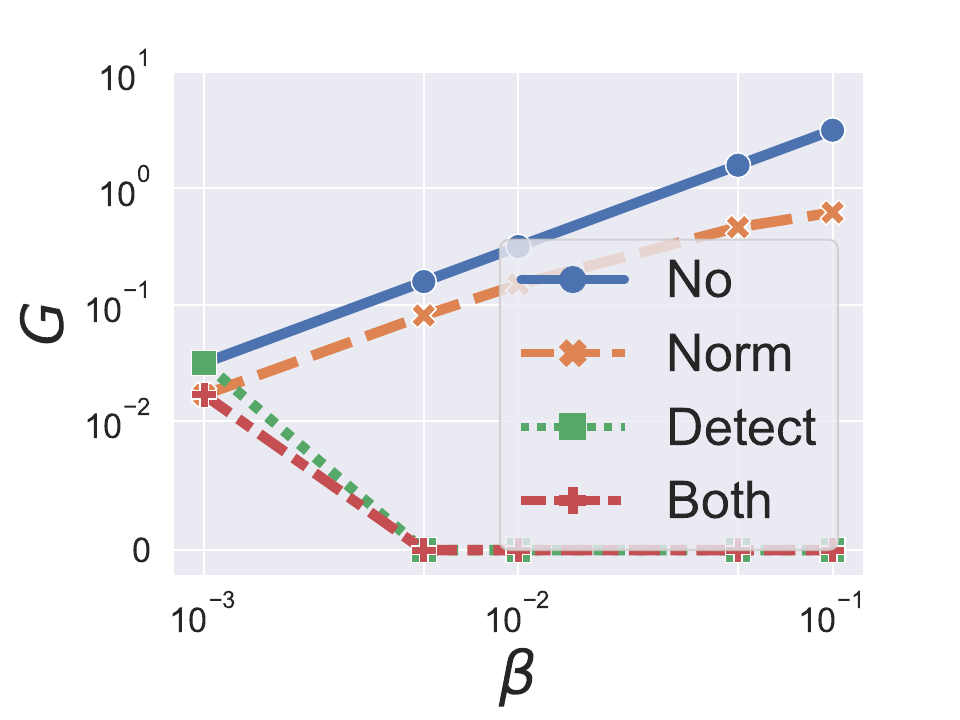}\label{fig:detect_beta_oue}}
\subfloat[OLH]{\includegraphics[width=0.2\textwidth]{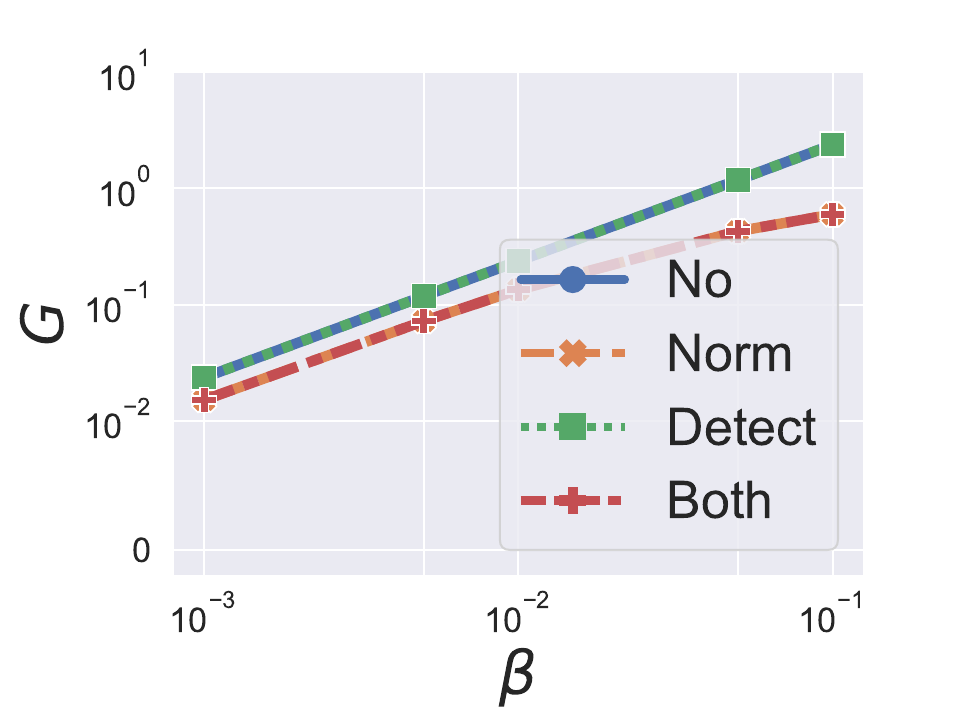}\label{fig:detect_beta_olh}}
\subfloat[OUE]{\includegraphics[width=0.2\textwidth]{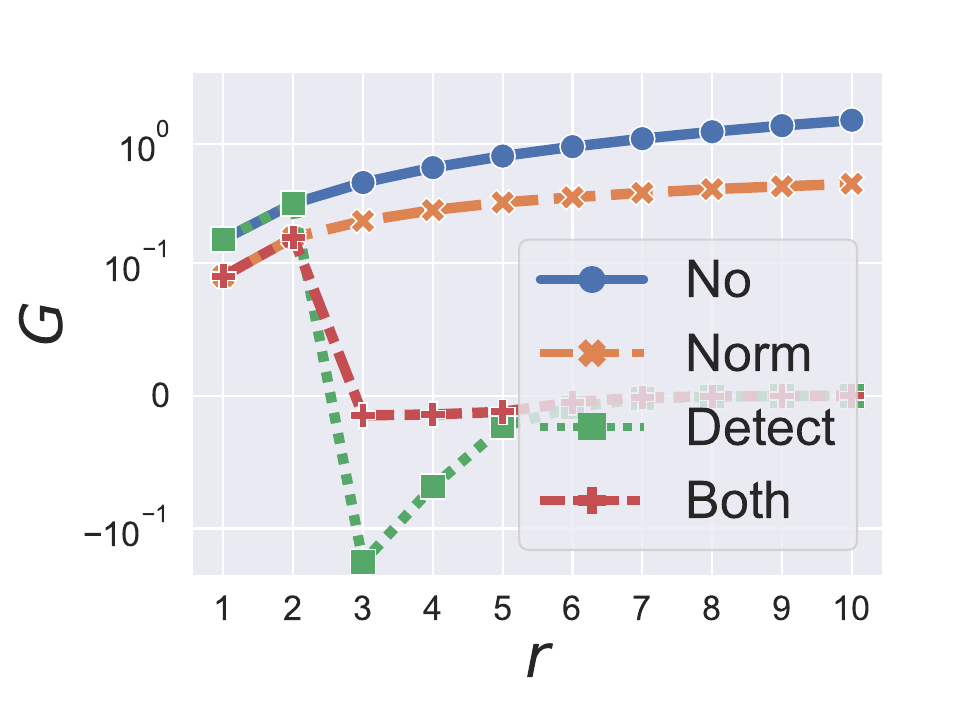}\label{fig:detect_r_oue}}
\subfloat[OLH]{\includegraphics[width=0.2\textwidth]{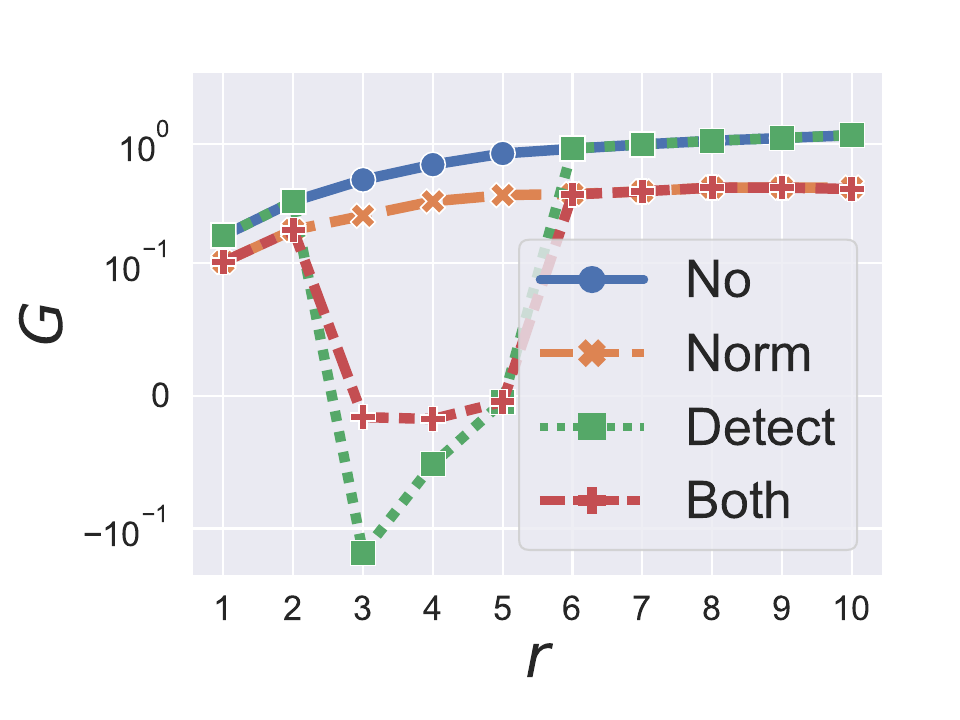}\label{fig:detect_r_olh}}
\subfloat[Adaptive MGA]{\includegraphics[width=0.2\textwidth]{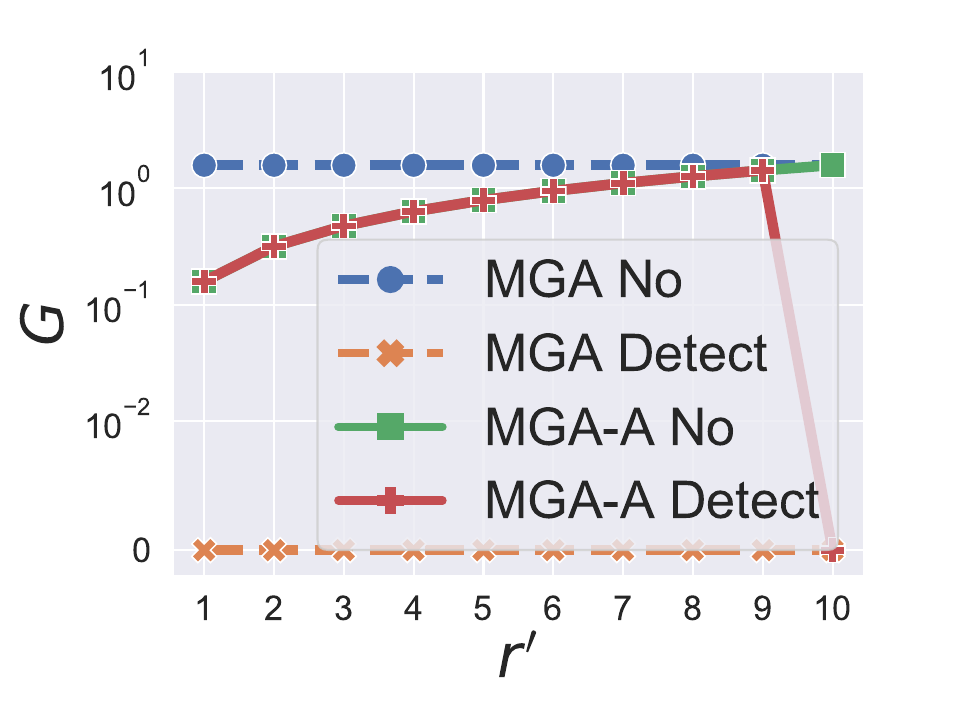}\label{fig:adaptive_rp}}
\vspace{-2mm}
	 \caption{{(a)-(b) Impact of $\beta$ on the countermeasures against MGA when $r=10$. (c)-(d) Impact of $r$ on the countermeasures against MGA when $\beta=0.05$. (e) Impact of $r'$ on the adaptive MGA (MGA-A) to OUE when $r=10$.}}
	 \vspace{-4mm}
\end{figure*}

 \begin{figure}[!t]
	 \centering
	 \vspace{-2mm}
\subfloat[$N$]{\includegraphics[width=0.22\textwidth]{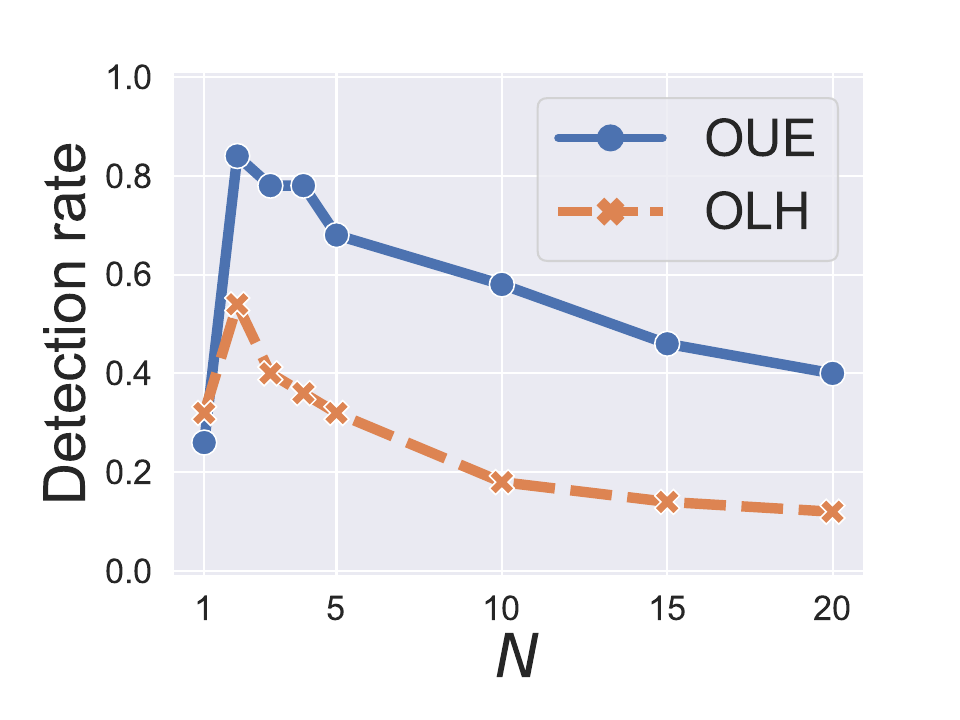}\label{fig:r1_N}}
\subfloat[$\beta$]{\includegraphics[width=0.22\textwidth]{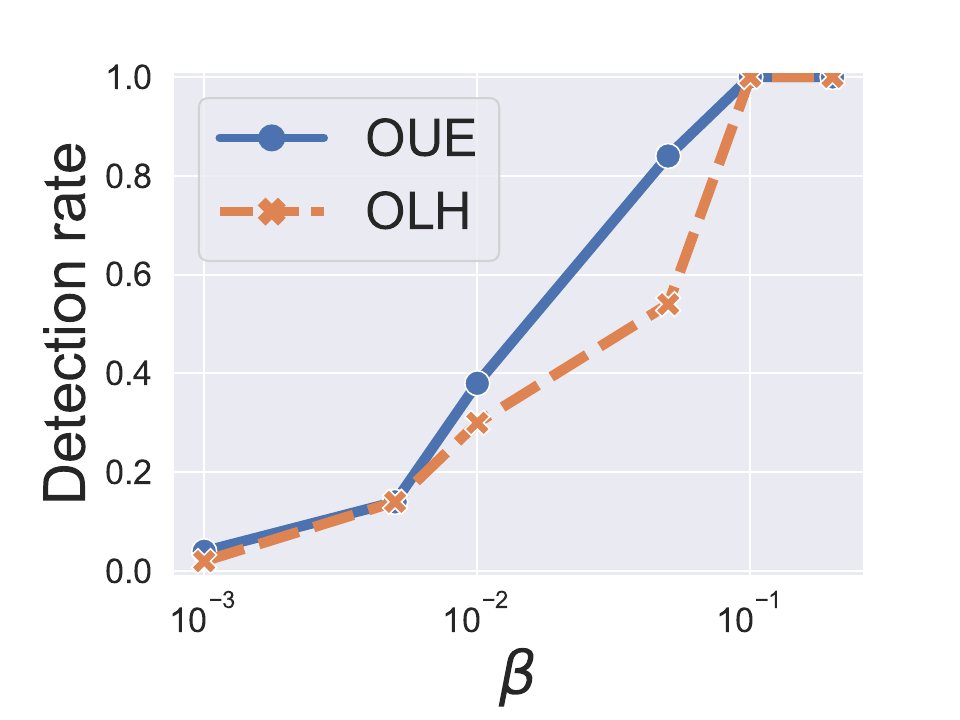}\label{fig:r1_beta}}
\vspace{-2mm}
 \caption{{Impact of $N$ and $\beta$ on the detection rate of the conditional probability based method for $r=1$.}}\label{fig:r1}
 \vspace{-4mm}
\end{figure}

\begin{table}[!t]\renewcommand{\arraystretch}{1.8}
\setlength{\tabcolsep}{2pt}
    \centering
     \tiny
    \scalebox{0.9}{
     \begin{tabular}{|c|c|c|c|c|c|c|c|c|c|c|} \hline
         \multirow{ 2}{*}{} &   \multicolumn{2}{|c|}{\small kRR} &   \multicolumn{4}{|c|}{\small OUE} &   \multicolumn{4}{|c|}{\small OLH}\\ \cline{2-11}
         {} & {\small No} & {\small Norm} & {\small No} & {\small Norm} & {\small Detect} & {\small Both} & {\small No} &  {\small Norm} & {\small Detect} & {\small Both}\\ \hline
 	{\small RPA} & {\small 2e-3} & {\small -1e-3} & {\small 0.50} & {\small 2e-3} & {\small 0.50} & {\small 2e-3} & {\small -2e-3} & {\small -2e-3} & {\small -2e-3} & {\small -2e-3}\\ \hline
	{\small RIA} & {\small 0.05} & {\small -4e-3} & {\small 0.05} & {\small 0.03} & {\small --} & {\small --} & {\small 0.05} & {\small 0.03} & {\small --} & {\small --}\\ \hline
	{\small MGA} & {\small 2.72} & {\small 0.43} & {\small 1.58} & {\small 0.46} & {\small 7e-17} & {\small -2e-16} & {\small 1.18} & {\small 0.43} & {\small 1.18} & {\small 0.43}\\ \hline
    \end{tabular}}
   \vspace{-2mm}
    \caption{Overall gains of the three attacks on the IPUMS dataset after countermeasures are deployed. The column ``No'' means no countermeasure is used. The column ``Both'' means the combined countermeasure. ``--'' means that the countermeasure is not applicable. Only normalization is applicable for kRR.}
    \label{tab:defense}
 \vspace{-2mm}
\end{table}

\vspace{-2mm}
\subsection{Experimental Results}  
{We empirically evaluate the effectiveness of the three countermeasures. Unless otherwise mentioned,  we focus on  normalization and detecting fake users as the conditional probability based detection is only applicable for one target item.}  
Note that normalization and detecting fake users can also be used together. Specifically, the central server can first detect and remove the fake users, and then perform normalization. Therefore, we will also evaluate the combined countermeasure. We use the same default experimental setup as those in Section~\ref{exp:setup}.  
Moreover, we use the FP-growth algorithm implemented in the Python package \emph{mlxtend} \cite{raschkas_2018_mlxtend} to identify frequent itemsets. 

\subsubsection{Frequency Estimation}
\label{defense-fe}

\myparatight{Overall results} Table \ref{tab:defense} shows the experimental results with no countermeasure, normalization, detection, and combined countermeasure, where $\beta=0.05$ and $r=10$. 
We observe that the countermeasures are effective in some scenarios. For example, for OUE, combining the two countermeasures leads to an overall gain of -2e-16 for MGA, which means that the estimated total frequency of the target items is even smaller than the one before attack. However, the countermeasures are ineffective in other scenarios. For instance,  MGA can still achieve a large overall gain of 0.43 for OLH even if  both countermeasures are used. Normalization can reduce the overall gains of all the three attacks for the three protocols except RPA for OLH. However, MGA still achieves large overall gains after normalization. Detecting fake users is ineffective for RPA because RPA randomly samples perturbed values in the encoded space for the fake users and thus the perturbed values do not have meaningful statistical patterns. When the countermeasures are used, MGA is still the most effective attack in most cases. Therefore, we focus on MGA and further study the impact of $\beta$ and $r$ on the countermeasure effectiveness.

{\myparatight{Impact of $\beta$ and $r$ on MGA} Figure \ref{fig:detect_beta_oue}-\ref{fig:detect_beta_olh} show the impact of $\beta$ on the countermeasures against MGA when we fix $r=10$, while Figure \ref{fig:detect_r_oue}-\ref{fig:detect_r_olh} show the results for $r$ when we fix $\beta=0.05$ on the IPUMS dataset. First, we observe that for OUE,  detecting fake users and the combined countermeasure can effectively defend against the MGA attacks (i.e., reduce the overall gains to almost 0) when $\beta$ and $r$ are larger than some thresholds, e.g., $\beta >0.001$ and $r\geq 3$. The countermeasures are ineffective when $\beta$ or $r$ is small (e.g., $\beta \leq 0.001$ or $r \leq 2$). This is because the detection method relies on that the target itemset is frequent and abnormal, but the target itemset is not frequent when $\beta$ is small and is not abnormal among the users' perturbed values when $r$ is small. 

Second, for OLH, detecting fake users and the combined countermeasure can effectively defend against the MGA attacks only when $r$ is not too small nor large, e.g., $3\leq r \leq 5$ in our experiments. Recall that, to attack OLH, our MGA randomly samples 1,000 hash functions and uses the one that hashes the largest number of target items to the same value for each fake user. When $r \leq 5$, our MGA can find a hash function that hashes all target items to the same value. Therefore, the target itemset is frequent among the users' perturbed values. Moreover, when $r \geq 3$, the frequent target itemset is also abnormal. As a result,  the detection method can detect MGA when $3\leq r \leq 5$. When $r \geq 6$, our MGA can only find a hash function among the 1,000 random ones that hashes a subset of the target items to the same value for each fake user. In other words, each fake user essentially randomly picks a subset of the target items and promotes them.  Therefore, the entire target itemset is not frequent enough and MGA evades detection. Our MGA evades detection for all the explored $\beta$ in Figure~\ref{fig:detect_beta_olh} because $r=10$ in these experiments. } 

{\myparatight{Adaptive MGA to OUE} Inspired by the evasiveness of MGA to OLH, we can also adapt MGA to OUE that evades detection. Specifically, for each fake user, instead of using a perturbed value that supports all $r$ target items, we randomly select $r'$ of the $r$ target items and find a perturbed value that supports the $r'$ selected target items. The adaptive attack splits the frequency of the target itemset with size $r$ to ${r\choose r'}$ itemsets with size $r'$, which becomes much harder to detect. We call such adaptive attacks MGA-A. Figure~\ref{fig:adaptive_rp} shows the impact of $r'$ on MGA-A to OUE when $r=10$. We observe that our adaptive MGA achieves smaller overall gains as $r'$ becomes smaller when no countermeasures are deployed. However, our adaptive MGA evades detection when $r'<r$. }

{\myparatight{Attack stealthiness} If the frequent itemset mining based detection method returns an abnormal frequent itemset, then the central server predicts that it is under our MGA attack. Our attack is stealthy if the central server cannot detect it. Our results show that, for OUE, our MGA is stealthy when $\beta$ or $r$ is small (e.g., $\beta \leq 0.001$ or $r \leq 2$), and our adaptive MGA is stealthy when $r' < r$. For OLH, our MGA is stealthy when $r$ is small or large enough, e.g., $r \leq 2$ or  $r \geq 6$ in our experiments. }
 
{\myparatight{Conditional probability based detection for $r=1$} We measure the effectiveness of the conditional probability based detection method using \emph{detection rate}. Specifically, in each experiment, we perform our MGA attack with a random target item 50 times and the detection rate is the fraction of the 50 experiment trials in which the target item is correctly detected. Figure \ref{fig:r1_N} shows the impact of $N$ on the detection rate  when we fix $\beta=0.05$ on the IPUMS dataset. We observe that the detection rate first increases and then decreases as $N$ grows. This is because when $N$ is too small, e.g., $N=1$, the target item is likely not in the top-$N$ items; and when $N$ is too large, it's more likely that there exists a non-target item in the top-$N$ items that has a smaller conditional probability than the target item. We notice that the detection rate is lower for OLH than for OUE. This is because the threshold $\hat{f}_u$ for OLH is smaller than that for OUE, e.g., $\hat{f}_u=0.18$ for OLH and $\hat{f}_u=0.26$ for OUE when $f_t=f_j=0.01$. Figure \ref{fig:r1_beta} shows the impact of $\beta$ on the detection rate, where we explore $N=1$ to $20$ to find the $N$ that achieves the highest detection rate for each given $\beta$. We observe that the detection rate increases as $\beta$ increases, which implies that the MGA attack with $r=1$ is easier to detect when there are more fake users. Once the target item is detected, the server can compute the sum of the estimated frequencies of all non-target items as $\tilde{f}_U=\sum_{u\neq t}\tilde{f}_u$ and set the estimated frequency of the target item as $\tilde{f}_t=1-\tilde{f}_U$, which can reduce the overall gain of MGA. For instance, the overall gain decreases from 2.37 to 0.095 for OLH when $\beta=0.1$. }

\subsubsection{Heavy Hitter Identification}
Normalization is ineffective for heavy hitter identification because normalization does not impact the ranking of the items' estimated frequencies.  Moreover, the conditional probability based detection is only applicable to one target item. Therefore, we  perform experiments on detecting fake users for heavy hitter identification. Moreover, we focus on MGA because RIA and RPA are ineffective even without detecting fake users (see Figures~\ref{fig:hh_synth}). We observe that detecting fake users is effective in some scenarios but not in others. For instance, when $r=5$, detecting fake users can reduce the success rate of MGA from 1 to 0, as all  fake users can be detected. However,  when $r=10$, our MGA can still achieve a success rate of 1. 

\vspace{-2mm}
\subsection{Other Countermeasures} 
Detecting fake users is related to Sybil detection in distributed systems and social networks. Many methods have been proposed to mitigate Sybil attacks. For instance, methods~\cite{stringhini2010detecting,yu2006sybilguard,danezis2009sybilinfer,gong2014sybilbelief,wang2013you,cao2014uncovering,wang19ndss,yuan2019detecting} that leverage content, behavior, and social graphs are developed to detect fake users in social networks. Our detection method can be viewed as a content-based method. Specifically, our detection method analyzes the statistical patterns of the user-generated content (i.e., perturbed values sent to the central server) to detect fake users. However, our detection method is different from the content-based methods to detect fake users in social networks, as the user-generated content and their statistical patterns differ. Social-graph-based methods are inapplicable when the social graphs are not available. 

Another countermeasure is to leverage Proof-of-Work~\cite{dwork1992pricing}, like how Sybil is mitigated in Bitcoin. In particular, before a user can participate in the LDP protocol, the central server sends a random string to the user; and the user is allowed to participate the LDP protocol after the user finds a string such that the cryptographic hash value of the concatenated string has a certain property, e.g., the first 32 bits are all 0. However, such method incurs a large computational cost for genuine users, which impacts user experience. Moreover, when users use mobile devices such as phones and IoT devices, it is challenging for them to perform the  Proof-of-Work. Malicious-party-resistant SMPC could also be used to limit the impact of fake users (e.g.,~\cite{naor2019how}). However, such methods generally sacrifice computational efficiency. 


\section{Discussion}
\label{sec:discussion}
\vspace{-2mm}
{\myparatight{Applicability to shuffling-based and SMPC-based protocols} Shuffling-based protocols~\cite{erlingsson2019amplification} apply shuffling to the users' perturbed vectors such that a better  DP guarantee can be derived. Since they still encode and perturb each user's data, our attacks are applicable. When SMPC-based protocols have local encoding and perturbation steps like~\cite{kairouz2015secure}, our attacks are applicable and the security-privacy trade-off still holds. When there is no local encoding or perturbation step in the SMPC-based DP protocols like~\cite{roy2020crypt}, our RPA and MGA  are not applicable because an attacker cannot manipulate the perturbed vectors. However, our RIA is still applicable because it only needs to modify the item value. In this case, we do not have the security-privacy trade-off because the overall gain of RIA does not rely on the privacy budget.} 

{\myparatight{RIA without perturbation} A variant of RIA is that a fake user samples one of the $r$ target items randomly, encodes it, and sends the encoded value to the central server without perturbing it. 
	When $r=1$, this RIA variant has the same overall gain as MGA. 
	When $r>1$, the RIA variant uses a fake user to promote only one target item. However, MGA uses a fake user to simultaneously promote multiple target items, which means that its overall gain is multiple times of the RIA variant's overall gain. Moreover, it may be easy for the central server to detect the RIA variant for OUE. Specifically, the server can count the number of 1's in a vector from a user. If there is only one entry that is 1, then it is likely that the vector is from a fake user as the probability that a genuine vector contains a single 1 is fairly small.}
 
{\myparatight{Defending OLH by restricting the hash functions} Since MGA to OLH relies on searching a hash function that maps target items to the same hash value, the server could restrict the space of seeds of the hash function or select the hash function by itself to defend OLH against MGA. However, the defense may break the privacy guarantees. In particular,  an untrusted server could carefully select a space of seeds or a hash function that does not have collisions in the item domain. For instance, a hash value $h$ corresponds to a unique item. When receiving a hash value $h$ from a user, the server knows the user's item, which breaks the LDP guarantee.} 

\vspace{-2mm}
\section{Conclusion}
In this work, we perform the first systematic study on data poisoning attacks to LDP protocols. Our results show that an attacker can inject fake users to an LDP protocol and send carefully crafted data to the server such that the target items are estimated to have high frequencies or promoted as heavy hitters. We show that we can formulate such an attack as an optimization problem, solving which an attacker can maximize its attack effectiveness. We theoretically and/or empirically show the effectiveness of our attacks. Moreover, we explore three countermeasures against our attacks. Our empirical results show that these countermeasures have limited effectiveness in some scenarios, highlighting the needs for new defenses against our attacks. 

Interesting future work includes generalizing our attacks to other LDP protocols, e.g., LDP protocols for itemset mining~\cite{wang2018locally} and key-value pairs~\cite{ye2019privkv}, as well as developing new defenses to mitigate our attacks.  

{\section*{Acknowledgements}
We thank the anonymous reviewers for their constructive comments. The conditional probability based detection method for one target item was suggested by a reviewer. This work was supported by NSF grant No.1937786.}


\bibliographystyle{plain}
\bibliography{refs}


\appendix

\section{Proof of Theorem 2}\label{proof:thm2}
\begin{proof}
Let $\beta(1-f_T)+\frac{\beta (d-r)}{e^\epsilon-1} > \beta(2r-f_T)+\frac{2\beta r}{e^\epsilon-1}$, we have:
{\small{\begin{align}
	1 + \frac{d-r}{e^\epsilon-1} > 2r + \frac{2r}{e^\epsilon-1} \iff \frac{d-3r}{e^\epsilon-1} > 2r-1.
\end{align}}}%
Since $e^\epsilon>1$,  the inequality above is equivalent to $d > (2r-1)(e^\epsilon-1) +3r$. 
\end{proof}

\section{FPRs of Detecting Fake Users}
\label{proof:tau_oue} 
\myparatight{OUE}
If a user's perturbed binary vector $\bm{y}$ follows the OUE protocol, then we can calculate the probability that the items in a set $B$ of size $z$, are all 1 in the perturbed binary vector as follows: $\text{Pr}(y_b=1, \forall b\in B)=pq^{z-1}$ if $v\in B$ and $\text{Pr}(y_b=1, \forall b\in B) =q^z$ otherwise, where $y_b$ is the $b$th bit of the perturbed binary vector $\bm{y}$ and $v$ is the user's item. Let $f_B=\sum_{b\in B}f_b$ denote the sum of true frequencies of all items in $B$, $X_1$ denote the random variable representing the number of users whose items are in $B$ and whose perturbed binary vectors are 1 for all items in $B$, and $X_2$ denote the random variable representing the number of users whose items are not in $B$ and whose perturbed binary vectors are 1 for all items in $B$. If all the $n+m$ users follow the OUE protocol, then we have the following distributions: $X_1 \sim Binom(f_B(n+m), pq^{z-1})$ and $X_2 \sim Binom((1-f_B)(n+m), q^{z})$, where $Binom$ is a binomial distribution. Now we consider another random variable $X=X_1+X_2$, which represents the number of users whose perturbed binary vectors are 1 for all items in $B$. $X$ follows a distribution with mean $\mu$ and variance $Var$ as follows:
{\small{\begin{align}
	\mu &= f_B(n+m)pq^{z-1} + (1-f_B)(n+m)q^z\\
	 &\le (n+m)pq^{z-1}\\
	Var &= f_B(n+m)pq^{z-1}(1-pq^{z-1}) \nonumber\\&+ (1-f_B)(n+m)q^z(1-q^z) \\
	&\le (n+m)pq^{z-1}(1-pq^{z-1}).
\end{align}}}%
Based on the Chebyshev's inequality,  for any $\tau_z>(n+m)pq^{z-1}$, we have:
{\small{\begin{align}\label{eq:chebyshev}
		\text{Pr}(X\ge \tau_z)	
		&=\text{Pr}(X-\mu\ge \tau_z-\mu) \nonumber\\
		&\le\text{Pr}(|X-\mu|\ge \tau_z-\mu)\nonumber\\
		&\le \frac{Var}{(\tau_z-\mu)^2} \nonumber\\
		&\le \frac{(n+m)pq^{z-1}(1-pq^{z-1})}{[\tau_z-(n+m)pq^{z-1}]^2} 
\end{align}}}%
Here, if we choose $\tau_z$ as the threshold, the probability $\text{Pr}(X\ge \tau_z)$ is the false positive rate, which is upper bounded by $\frac{(n+m)pq^{z-1}(1-pq^{z-1})}{[\tau_z-(n+m)pq^{z-1}]^2}$. 

\myparatight{OLH}
As discussed in Section \ref{sec:detection}, we first construct a $d$-bit binary vector $\bm{y}$ for each user with a tuple $(H, a)$ such that $y_v=1$ if and only if $H(v)=a$. For an item set $B$ of size $z$, assume $X$ is a random variable that represents the number of users whose constructed binary vectors are 1's for all items in $B$. If all the $n+m$ users follow the OLH protocol, then for any $\tau_z > 0$, the probability that $X\geq \tau_z$ is bounded as follows: 
{\small{\begin{align}\label{eq:uni_abn_olh}
	\text{Pr}(X\geq \tau_z) &= 1-\text{Pr}(X \leq \tau_z-1)\nonumber\\
	&=  1-I({1-q^{z-1}};n+m-\tau_z+1, \tau_z)\nonumber\\
	&= I(q^{z-1};\tau_z, n+m-\tau_z+1)
\end{align}}}%
Note that if we set $\tau_z$ as the threshold, the probability $\text{Pr}(X\ge \tau_z)$ is the false positive rate. 


\end{document}